\title{Data Linkage Algebra, Data Linkage Dynamics, and Priority
       Rewriting}
\author{J.A. Bergstra \and C.A. Middelburg}
\institute{Informatics Institute, Faculty of Science, University of
           Amsterdam, \\
           Science Park~904, 1098~XH Amsterdam, the Netherlands \\
           \email{J.A.Bergstra@uva.nl,C.A.Middelburg@uva.nl}}
\begin{document}
\maketitle

\begin{abstract}
We introduce an algebra of data linkages.
Data linkages are intended for modelling the states of computations in
which dynamic data structures are involved.
We present a simple model of computation in which states of computations
are modelled as data linkages and state changes take place by means of
certain actions.
We describe the state changes and replies that result from performing
those actions by means of a term rewriting system with rule priorities.
The model in question is an upgrade of molecular dynamics.
The upgrading is mainly concerned with the features to deal with values
and the features to reclaim garbage.
\\[1.5ex]
{\sl Keywords:}
data linkage algebra, data linkage dynamics, garbage collection, 
priority rewrite system.
\\[1.5ex]
{\sl 1998 ACM Computing Classification:}
D.3.3, D.4.2, F.1.1, F.3.2, F.3.3.
\end{abstract}

\section{Introduction}
\label{sect-intro}

The data structures involved in programming are quite often dynamic data 
structures, i.e.\ data structures that may vary in size and shape.
Dynamic data structures are data structures whose constituent parts are 
linked in one way or another to facilitate the insertion and deletion of 
constituent parts.
Scientific work on dynamic data structure is generally concerned with 
specific dynamic data structures that show a simple shape, such as 
linked lists and various kinds of tree structures, or with dynamic data 
structures in general.
In the former case, the dynamic data structures concerned are regularly 
considered in abstraction from their representation by means of 
pointers.
In the latter case, however, dynamic data structures are primarily 
considered at the level of their representation by means of pointers.
Although it is often useful to abstract from the representation by means 
of pointers, it seems that no serious attempts have been made to provide 
a setting in which this is possible.
The aim of the current paper is to provide such a setting.

We introduce an algebra, called data linkage algebra, of which the
elements are intended for modelling the states of computations in which
dynamic data structures are involved.
We also present a simple model of computation, called data linkage
dynamics, in which states of computations are modelled as elements of
data linkage algebra and state changes take place by means of certain
actions.
We describe the state changes and replies that result from performing
those actions by means of a term rewriting system with rule
priorities~\cite{BBKW89a}.

Term rewriting systems take an important place in theoretical computer
science.
Moreover, because term rewriting is a practical mechanism for doing
calculations, term rewriting systems have many applications in software
engineering.
Term rewriting systems with rule priorities, also called priority
rewrite systems, were first proposed in~\cite{BBKW89a}.
Further studies of priority rewrite systems can, for example, be found
in~\cite{Gna09a,Moh89a,Pol98a,ST98a}.
Applications of priority rewrite systems are found in various areas, see 
e.g.~\cite{BKV98a,KKO09a,UY09a}. 
The rule priorities add expressive power: the reduction relation of a
priority rewrite system is not decidable in general.
It happens that it is quite convenient to describe the state changes and
replies that result from performing the actions of data linkage dynamics
by means of a priority rewrite system.
Moreover, the priority rewrite system in question turns out
computationally unproblematic: its reduction relation is decidable.

We take the view that the behaviours produced by sequential programs 
under execution are threads as considered in basic thread 
algebra~\cite{BL02a} (see also \cite[Chapter~2]{BM12b}).%
\footnote
{In~\cite{BL02a}, basic thread algebra is introduced under the name
 basic polarized process algebra.
 Prompted by the development of thread algebra~\cite{BM04c}, which is a
 design on top of it, basic polarized process algebra has been renamed
 to basic thread algebra.
}
These threads represent in a direct way the behaviours produced by 
instruction sequences under execution.
A thread proceeds by performing actions in a sequential fashion.
Upon each action performed by a thread, a reply from the execution 
environment, which takes the action as an instruction to be processed, 
determines how the thread proceeds.
In~\cite{BP02a}, basic thread algebra has been extended with services, 
which represent an abstract view on the behaviours exhibited by the 
components of an execution environment that are capable of processing 
particular instructions independently, and use and apply operators, 
which have to do with the effects of the interaction between threads and 
services that takes place during instruction processing (see 
also~\cite[Chapter~3]{BM12b}).

The state changes and replies that result from performing the actions
of data linkage dynamics can be achieved by means of services.
In the current paper, we explain how basic thread algebra extended with
services and use operators can be combined with data linkage dynamics 
such that the whole can be used for studying issues concerning the use 
of dynamic data structures in programming.

Data linkage dynamics is an upgrade of molecular dynamics, which was 
first described in~\cite{BB02a}.
The name molecular dynamics refers to the molecule metaphor used to
explain the model.
By that, there is no clue in the name itself to what it stands for.
To remedy this defect, the upgrade has been renamed to data linkage
dynamics.
The upgrading is mainly concerned with the features to deal with values
and the features to reclaim garbage.
In data linkage dynamics, calculations in a non-trivial finite
meadow~\cite{BBP13a,BT07a,BR07a}, such as a finite field with the 
multiplicative inverse operation made total by imposing that the 
multiplicative inverse of zero is zero, can be done.
The features to reclaim garbage include: full garbage collection,
restricted garbage collection (as if reference counts are used), safe
disposal of potential garbage, and unsafe disposal of potential garbage.

In~\cite{BM06c}, a description of the state changes and replies that
result from performing the actions of molecular dynamics was given in
the world of sets.
In the current paper, we relate this description to the description
based on data linkage algebra by widening the former to a description
for data linkage dynamics and showing that the widened description
agrees with the description based on data linkage algebra.

Data linkage dynamics in itself is meant to convey a theoretical 
understanding of the pragmatic concept of a dynamic data structure as 
exploited in the practice of programming.
Such a theoretical understanding is a valuable complement of the 
understanding of how specific programs use dynamic data structures, 
which is acquired by means of dynamic analysis tools that analyze how 
programs build and modify them (see e.g.~\cite{PV06a}).
We expect that a theoretical understanding will become increasingly 
important to the development of successful software systems.
Below, we give a first impression of what is to be expected from data 
linkage dynamics as a setting in which issues concerning dynamic data 
structures are studied.

In work on dynamic data structures, on the one hand issues concerning 
specific dynamic data structures, in particular performance issues and 
computational complexity issues, are studied (see 
e.g.~\cite{FFV79a,Knu77a,OL82a,Tam88a}).
In the studies in question, the dynamic data structures concerned are 
mostly considered in abstraction from their representation by means of 
pointers.
We believe that issues like these ones can also be studied in the 
setting of data linkage dynamics, and we expect that the use of a single 
setting facilitates uniformity in the way in which the same issue is 
approached for different dynamic data structures.

In work on dynamic data structures, on the other hand issues concerning 
dynamic data structures in general, particularly issues related to 
optimization and parallelization of sequential programs that make use of 
dynamic data structures, are studied (see 
e.g.~\cite{Car96a,CAZ04a,HHN94a,RCRH95a}).
In the studies in question, dynamic data structures are mostly 
considered at the level of their representation by means of pointers.
We consider it likely that, for issues like these ones, more general 
results can be obtained by studying the issues concerned in the setting 
of data linkage dynamics, where they can be considered in abstraction 
from their representation by means of pointers. 

This paper is organized as follows.
First, we introduce data linkage algebra (Section~\ref{sect-DLA}).
Next, after a short review of priority rewrite systems 
(Section~\ref{sect-PRS}), we present data linkage dynamics 
(Sections~\ref{sect-DLD-K}, \ref{sect-DLD}, \ref{sect-properties}, 
and~\ref{sect-garb-coll}).
After that, we review basic thread algebra and its extension with 
services and use operators (Sections~\ref{sect-BTA} and~\ref{sect-TSI})
and explain how this extension of basic thread algebra can be combined 
with data linkage dynamics (Section~\ref{sect-comb-TA-DLD}).
Following this, we give the alternative description of data linkage
dynamics in the world of sets (Sections~\ref{sect-DLD-alt-descr},
\ref{sect-correctness}, and~\ref{sect-garb-coll-maps}).
Finally, we make some concluding remarks (Section~\ref{sect-concl}).

Some familiarity with term rewriting systems is assumed.
The desirable background can, for example, be found
in~\cite{DJ90a,Klo92a,KV03a}.
For convenience, the basic definitions and results regarding term 
rewriting systems are collected in an appendix.

\section{Data Linkage Algebra}
\label{sect-DLA}

In this section, we introduce the algebraic theory \DLA\ (Data Linkage
Algebra).

The elements of the initial algebra of \DLA\ are intended for modelling
the states of computations in which dynamic data structures are 
involved.
These states resemble collections of molecules composed of atoms.
An atom can have fields and each of those fields can contain an atom.
An atom together with the ones it has links to via fields can be viewed
as a sub-molecule, and a sub-molecule that is not contained in a larger
sub-molecule can be viewed as a molecule.
Thus, the collection of molecules that make up a state can be viewed as
a fluid.
To make atoms reachable, there are spots and each spot can contain an
atom.

Disengaging from the molecule metaphor, atoms will henceforth be called
atomic objects.
Moreover, sub-molecules, molecules and fluids will henceforth not be
distinguished and commonly be called data linkages.

In \DLA, it is assumed that a fixed but arbitrary finite set $\Spot$ of
\emph{spots}, a fixed but arbitrary finite set $\Field$ of
\emph{fields}, a fixed but arbitrary finite set $\AtObj$ of \emph{atomic
objects}, and a fixed but arbitrary finite set $\Value$ of \emph{values}
have been given.

\DLA\ has one sort: the sort $\DaLi$ of \emph{data linkages}.
To build terms of sort $\DaLi$, \DLA\ has the following constants and
operators:
\begin{itemize}
\item
for each $s \in \Spot$ and $a \in \AtObj$,
the \emph{spot link} constant $\const{\slink{s}{a}}{\DaLi}$;
\item
for each $a \in \AtObj$ and $f \in \Field$,
the \emph{partial field link} constant $\const{\pflink{a}{f}}{\DaLi}$;
\item
for each $a,b \in \AtObj$ and $f \in \Field$,
the \emph{field link} constant $\const{\flink{a}{f}{b}}{\DaLi}$;
\item
for each $a \in \AtObj$ and $n \in \Value$,
the \emph{value association} constant $\const{\valass{a}{n}}{\DaLi}$;
\item
the \emph{empty data linkage} constant $\const{\emptydl}{\DaLi}$;
\item
the binary \emph{data linkage combination} operator
$\funct{\dlcom}{\DaLi \x \DaLi}{\DaLi}$;
\item
the binary \emph{data linkage overriding combination} operator
$\funct{\dlori}{\DaLi \x \DaLi}{\DaLi}$.
\end{itemize}
Terms of sort $\DaLi$ are built as usual (see e.g.~\cite{ST99a,Wir90a}).
Throughout the paper, we assume that there are infinitely many variables
of sort $\DaLi$, including $X$, $Y$, $Z$.
We use infix notation for data linkage combination and data linkage
overriding combination.

Let $L$ and $L'$ be closed \DLA\ terms.
Then the constants and operators of \DLA\ can be explained as follows:
\begin{itemize}
\item
$\slink{s}{a}$ is the atomic data linkage that consists of a link via
spot $s$ to atomic object $a$;
\item
$\pflink{a}{f}$ is the atomic data linkage that consists of a partial
link from atomic object $a$ via field $f$;
\item
$\flink{a}{f}{b}$ is the atomic data linkage that consists of a link
from atomic object~$a$ via field $f$ to atomic object $b$;
\item
$\valass{a}{n}$ is the atomic data linkage that consists of an
association of the value~$n$ with atomic object $a$;
\item
$\emptydl$ is the data linkage that does not contain any atomic data
linkage;
\item
$L \dlcom L'$ is the union of the data linkages $L$ and $L'$;
\item
$L \dlori L'$ differs from $L \dlcom L'$ as follows:
\begin{itemize}
\item
if $L$ contains spot links via spot $s$ and $L'$ contains spot links via
spot $s$, then the former links are overridden by the latter ones;
\pagebreak[2]
\item
if $L$ contains partial field links and/or field links from atomic
object $a$ via field $f$ and $L'$ contains partial field links and/or
field links from atomic object $a$ via field $f$, then the former
partial field links and/or field links are overridden by the latter
ones;
\item
if $L$ contains value associations with atomic object $a$ and $L'$
contains value associations with atomic object $a$, then the former
value associations are overridden by the latter ones.
\end{itemize}
\end{itemize}

Following the introduction of \DLA, we will present a simple model of
computation that pertains to the use of dynamic data structures in
programming.
\DLA\ provides a notation that enables us to get a clear picture of
computations in the context of that model.

The axioms of \DLA\ are given in Table~\ref{axioms-DLA}.%
\begin{table}[!t]
\caption{Axioms of \DLA}
\label{axioms-DLA}
\begin{eqntbl}
\begin{seqncol}
X \dlcom Y = Y \dlcom X \\
X \dlcom (Y  \dlcom Z) = (X \dlcom Y)  \dlcom Z \\
X \dlcom X = X \\
X \dlcom \emptydl = X
\eqnsep
\emptydl \dlori X = X \\
X \dlori \emptydl = X \\
X \dlori (Y \dlcom Z) = (X \dlori Y) \dlcom (X \dlori Z) \\
(X \dlcom \slinkp{s}{a}) \dlori \slinkp{s}{b} =
X \dlori \slinkp{s}{b} \\
(X \dlcom \pflinkp{a}{f}) \dlori \pflinkp{a}{f} =
X \dlori \pflinkp{a}{f} \\
(X \dlcom \flinkp{a}{f}{b}) \dlori \pflinkp{a}{f} =
X \dlori \pflinkp{a}{f} \\
(X \dlcom \pflinkp{a}{f}) \dlori \flinkp{a}{f}{b} =
X \dlori \flinkp{a}{f}{b} \\
(X \dlcom \flinkp{a}{f}{b}) \dlori \flinkp{a}{f}{c} =
X \dlori \flinkp{a}{f}{c} \\
(X \dlcom \valass{a}{n}) \dlori \valass{a}{m} =
X \dlori \valass{a}{m} \\
(X \dlcom \slinkp{s}{a}) \dlori \slinkp{t}{b} =
(X \dlori \slinkp{t}{b}) \dlcom \slinkp{s}{a}
 & \mif s \neq t \\
(X \dlcom \pflinkp{a}{f}) \dlori \slinkp{s}{b} =
(X \dlori \slinkp{s}{b}) \dlcom \pflinkp{a}{f} \\
(X \dlcom \flinkp{a}{f}{b}) \dlori \slinkp{s}{c} =
(X \dlori \slinkp{s}{c}) \dlcom \flinkp{a}{f}{b} \\
(X \dlcom \valass{a}{n}) \dlori \slinkp{s}{b} =
(X \dlori \slinkp{s}{b}) \dlcom \valass{a}{n} \\
(X \dlcom \slinkp{s}{a}) \dlori \pflinkp{b}{f} =
(X \dlori \pflinkp{b}{f}) \dlcom \slinkp{s}{a} \\
(X \dlcom \pflinkp{a}{f}) \dlori \pflinkp{b}{g} =
(X \dlori \pflinkp{b}{g}) \dlcom \pflinkp{a}{f}
 & \mif a \neq b \Lor f \neq g \\
(X \dlcom \flinkp{a}{f}{b}) \dlori \pflinkp{c}{g} =
(X \dlori \pflinkp{c}{g}) \dlcom \flinkp{a}{f}{b}
 & \mif a \neq c \Lor f \neq g \\
(X \dlcom \valass{a}{n}) \dlori \pflinkp{b}{f} =
(X \dlori \pflinkp{b}{f}) \dlcom \valass{a}{n} \\
(X \dlcom \slinkp{s}{a}) \dlori \flinkp{b}{f}{c} =
(X \dlori \flinkp{b}{f}{c}) \dlcom \slinkp{s}{a} \\
(X \dlcom \pflinkp{a}{f}) \dlori \flinkp{b}{g}{c} =
(X \dlori \flinkp{b}{g}{c}) \dlcom \pflinkp{a}{f}
 & \mif a \neq b \Lor f \neq g \\
(X \dlcom \flinkp{a}{f}{b}) \dlori \flinkp{c}{g}{d} =
(X \dlori \flinkp{c}{g}{d}) \dlcom \flinkp{a}{f}{b}
 & \mif a \neq c \Lor f \neq g \\
(X \dlcom \valass{a}{n}) \dlori \flinkp{b}{f}{c} =
(X \dlori \flinkp{b}{f}{c}) \dlcom \valass{a}{n} \\
(X \dlcom \slinkp{s}{a}) \dlori \valass{b}{n} =
(X \dlori \valass{b}{n}) \dlcom \slinkp{s}{a} \\
(X \dlcom \pflinkp{a}{f}) \dlori \valass{b}{n} =
(X \dlori \valass{b}{n}) \dlcom \pflinkp{a}{f} \\
(X \dlcom \flinkp{a}{f}{b}) \dlori \valass{c}{n} =
(X \dlori \valass{c}{n}) \dlcom \flinkp{a}{f}{b} \\
(X \dlcom \valass{a}{n}) \dlori \valass{b}{m} =
(X \dlori \valass{b}{m}) \dlcom \valass{a}{n}
 & \mif a \neq b
\end{seqncol}
\end{eqntbl}
\end{table}
In this table, $s$ and $t$ stand for arbitrary spots from $\Spot$,
$f$ and $g$ stand for arbitrary fields from $\Field$,
$a$, $b$, $c$ and $d$ stand for arbitrary atomic objects from $\AtObj$,
and $n$ and $m$ stand for arbitrary values from $\Value$.

In the examples given in this paper, we take the set 
$\set{\ul{n} \where n \in \set{0,\ldots,9}}$ for $\AtObj$.
\begin{example}
\label{example-DLA-term}
We consider the following closed \DLA\ term:
\begin{ldispl}
(\slinkp{r}{\ul{0}} \dlcom
 \flinkp{\ul{0}}{\nm{up}}{\ul{1}} \dlcom
 \flinkp{\ul{1}}{\nm{dn}}{\ul{0}} \dlcom
 \flinkp{\ul{1}}{\nm{up}}{\ul{2}} \dlcom
 \flinkp{\ul{2}}{\nm{dn}}{\ul{0}} \dlcom
 \slinkp{s}{\ul{2}}) \dlori
 \flinkp{\ul{2}}{\nm{dn}}{\ul{1}}\;.
\end{ldispl}%
Using the axioms of \DLA, we can establish that this term denotes the 
same data linkage as the following term:
\begin{ldispl}
\phantom{(}
 \slinkp{r}{\ul{0}} \dlcom
 \flinkp{\ul{0}}{\nm{up}}{\ul{1}} \dlcom
 \flinkp{\ul{1}}{\nm{dn}}{\ul{0}} \dlcom
 \flinkp{\ul{1}}{\nm{up}}{\ul{2}} \dlcom
 \flinkp{\ul{2}}{\nm{dn}}{\ul{1}} \dlcom
 \slinkp{s}{\ul{2}}\;.
\end{ldispl}%
The data linkage concerned is represented graphically in 
Figure~\ref{fig-graph-repr}.
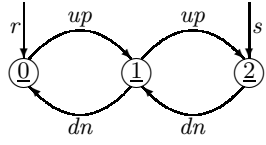
\begin{figure}[!t]
\begin{center}
\setlength{\unitlength}{0.75mm}
\begin{picture}(60,25)(0,0)
% \thicklines
%
\multiput(10,10)(20,0){3}{\circle{5}}
\put(10,10){\makebox(0,0){$\ul{0}$}}
\put(30,10){\makebox(0,0){$\ul{1}$}}
\put(50,10){\makebox(0,0){$\ul{2}$}}
\put(10,22.5){\vector(0,-1){10}}
\put(8.5,18){\makebox(0,0){$\mathit{r}$}}
\put(50,22.5){\vector(0,-1){10}}
\put(51.5,18){\makebox(0,0){$\mathit{s}$}}
\qbezier(11,12.5)(20,22.5)(29,12.5)
\put(27.2,14.5){\vector(1,-1){2}}
\qbezier(31,12.5)(40,22.5)(49,12.5)
\put(47.2,14.5){\vector(1,-1){2}}
\multiput(20,19.5)(20,0){2}{\makebox(0,0){$\mathit{up}$}}
\qbezier(11,7.5)(20,-2.5)(29,7.5)
\put(13.2,5.5){\vector(-1,1){2}}
\qbezier(31,7.5)(40,-2.5)(49,7.5)
\put(33.2,5.5){\vector(-1,1){2}}
\multiput(20,0.5)(20,0){2}{\makebox(0,0){$\mathit{dn}$}}
\end{picture}
\end{center}
\caption{Graphical representation of a data linkage}
\label{fig-graph-repr}
\end{figure}

\end{example}

All closed \DLA\ terms are derivably equal to basic terms over \DLA,
i.e.\ closed \DLA\ terms in which the data linkage overriding 
combination operator does not occur.

The set $\cB$ of \emph{basic terms} over \DLA\ is inductively defined by
the following rules:
\begin{itemize}
\item
$\emptydl \in \cB$;
\item
if $s \in \Spot$ and $a \in \AtObj$, then $\slink{s}{a} \in \cB$;
\item
if $a \in \AtObj$ and $f \in \Field$, then $\pflink{a}{f} \in \cB$;
\item
if $a,b \in \AtObj$ and $f \in \Field$, then $\flink{a}{f}{b} \in \cB$;
\item
if $a \in \AtObj$ and $n \in \Value$, then $\valass{a}{n} \in \cB$;
\item
if $L_1,L_2 \in \cB$, then $L_1 \dlcom L_2 \in \cB$.
\end{itemize}
\begin{theorem}[Elimination]
\label{theorem-elimination}
For all closed \DLA\ terms $L$, there exists a basic term $L' \in \cB$
such that $L = L'$ is derivable from the axioms of \DLA.
\end{theorem}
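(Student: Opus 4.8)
The plan is to proceed by induction on the structure of closed \DLA\ terms, using the usual strategy for elimination theorems: show that the set of basic terms is closed, up to derivable equality, under every operator of the signature. Since a closed term is built from the atomic constants (spot links, partial field links, field links, value associations), the constant $\emptydl$, and the operators $\dlcom$ and $\dlori$, the base cases (constants) are immediate because every atomic constant and $\emptydl$ is itself in $\cB$. The induction hypothesis gives, for a composite term $L_1 \star L_2$ with $\star \in \{\dlcom, \dlori\}$, basic terms $L_1', L_2' \in \cB$ with $L_i = L_i'$ derivable; it then suffices to show that $L_1' \dlcom L_2'$ and $L_1' \dlori L_2'$ are each derivably equal to some basic term. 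The case of $\dlcom$ is trivial: by the last clause of the definition of $\cB$, $L_1' \dlcom L_2'$ is already a basic term.

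So the real content is a nested induction establishing the following auxiliary claim: for all $L, M \in \cB$, there is $N \in \cB$ with $L \dlori M = N$ derivable. First I would reduce $M$ to a ``sum of atoms'' by induction on its structure: using the right-distributivity axiom $X \dlori (Y \dlcom Z) = (X \dlori Y) \dlcom (X \dlori Z)$ together with $X \dlori \emptydl = X$, I can push $\dlori$ inward so that it only ever has an atomic right argument (and the resulting outer structure is a $\dlcom$ of such terms, which by the $\dlcom$-case stays in $\cB$ once each summand is handled). This leaves the key subproblem: $L \dlori \alpha$ where $L \in \cB$ and $\alpha$ is a single atomic constant. Here I would induct on the structure of $L$. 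If $L = \emptydl$, use $\emptydl \dlori X = X$. If $L$ is itself atomic, say $L = \beta$, rewrite $\beta \dlori \alpha = (\emptydl \dlcom \beta) \dlori \alpha$ and apply the appropriate one of the axioms in Table~\ref{axioms-DLA}: depending on whether $\beta$ is ``overridden by'' $\alpha$ (same spot, or same atomic object and field, or same atomic object for value associations) we get either $\emptydl \dlori \alpha = \alpha$ or $(\emptydl \dlori \alpha) \dlcom \beta = \alpha \dlcom \beta$, both in $\cB$. If $L = L_1 \dlcom L_2$, then using associativity and commutativity of $\dlcom$ I can arrange $L$ in the form $X \dlcom \beta$ where $\beta$ is the ``last'' atom, apply whichever of the twenty-odd axioms of the shape $(X \dlcom \beta) \dlori \alpha = \dots$ matches the pair $(\beta, \alpha)$, and then invoke the induction hypothesis on the strictly smaller term $X \dlori \alpha$; reassembling with $\dlcom$ (possibly with $\beta$ reattached) stays within $\cB$.

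The main obstacle is bookkeeping rather than conceptual: I must check that the axioms of Table~\ref{axioms-DLA} are genuinely exhaustive, i.e.\ that for every ordered pair consisting of an atomic constant $\beta$ (as the last summand of the left operand) and an atomic constant $\alpha$ (as the right operand of $\dlori$), exactly one axiom of the form $(X \dlcom \beta) \dlori \alpha = \dots$ applies — one of the ``overriding'' axioms when $\beta$ and $\alpha$ clash, and one of the ``commuting'' axioms (with its accompanying side condition) otherwise. This is a finite case analysis over the four kinds of atomic constants in each position, and the side conditions $s \neq t$, $a \neq b \Or f \neq g$, etc., are precisely what make the split complete and unambiguous. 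A secondary subtlety is justifying that the associative–commutative reshuffling of $\dlcom$ needed to expose the relevant last atom is itself derivable — but this is immediate from the first two axioms. Once exhaustiveness is confirmed, the nested induction closes and the theorem follows.
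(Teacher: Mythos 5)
Your proposal is correct and follows essentially the same route as the paper: an outer structural induction on $L$ whose only nontrivial case is $\dlori$, discharged by the auxiliary lemma that $\cB$ is closed under $\dlori$ up to derivable equality, proved by induction on the right argument (your further nested induction on the left argument in the single-atom case is exactly the elaboration the paper's one-line sketch leaves implicit).
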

\begin{proof}
This is easily proved by induction on the structure of $L$.
In the case where $L \equiv L_1 \dlori L_2$, we use the fact that for
all basic terms $L'_1,L'_2 \in \cB$, there exists a basic term $L'' \in
\cB$ such that $L'_1 \dlori L'_2 = L''$ is derivable from the axioms of
\DLA.
This is easily proved by induction on the structure of $L'_2$.
\qed
\end{proof}

We are only interested in the initial model of \DLA.
We write $\DL$ for the set of all elements of the initial model of
$\DLA$.

$\DL$ consists of the equivalence classes of basic terms over \DLA\ with
respect to the equivalence induced by the axioms of \DLA.
In other words, modulo equivalence, $\cB$ is $\DL$.
Henceforth, we will identify basic terms over \DLA\ and their
equivalence classes.

Let $L \in \DL$.
Then $L$ is \emph{locally deterministic} with regard to a spot 
$s \in \Spot$ if the following holds:
\begin{itemize}
\item
$L \dlcom \slinkp{s}{a} = L$ for some $a \in \AtObj$;
\item
$L \dlcom \slinkp{s}{a} = L \dlcom \slinkp{s}{b}$ implies $a = b$ for 
all $a,b \in \AtObj$;
\end{itemize}
and $L$ is \emph{locally deterministic} with regard to a field 
$f \in \Field$ for an atomic object $a \in \AtObj$ if the following 
holds:
\begin{itemize}
\item
$L \dlcom \flinkp{a}{f}{b} = L$ for some $b \in \AtObj$ or 
$L \dlcom \pflinkp{a}{f} = L$;
\item
$L \dlcom \flinkp{a}{f}{b} = L \dlcom \flinkp{a}{f}{c}$ implies $b = c$ 
for all $b,c \in \AtObj$;
\item
$L \dlcom \flinkp{a}{f}{b} \neq L \dlcom \pflinkp{a}{f}$ for all
$b \in \AtObj$;
\end{itemize}
and $L$ is \emph{locally deterministic} with regard to the value 
assignment for an atomic object $a \in \AtObj$ if the following holds:
\begin{itemize}
\item
$L \dlcom \valass{a}{n} = L$ for some $n \in \Value$;
\item
$L \dlcom \valass{a}{n} = L \dlcom \valass{a}{m}$ implies $n = m$ for 
all $n,m \in \Value$;
\end{itemize}
and $L$ is \emph{deterministic} if the following holds:
\begin{itemize}
\item
for all $s \in \Spot$, $L$ is locally deterministic with regard to $s$;
\item
for all $f \in \Field$ and all $a \in \AtObj$, $L$ is locally 
deterministic with regard to $f$ for $a$; 
\item
for all $a \in \AtObj$, $L$ is locally deterministic with regard to the 
value assignment for $a$.
\end{itemize}
A data linkage $L \in \DL$ is \emph{non-deterministic} if it is not
deterministic.

In Section~\ref{sect-DLD-alt-descr}, deterministic data linkages are
represented by means of functions and data linkage overriding
combination is modelled by means of function overriding.

\section{Interlude: Priority Rewrite Systems}
\label{sect-PRS}

In Sections~\ref{sect-DLD-K} and~\ref{sect-DLD}, we will present data 
linkage dynamics, a model of computation in which states of computations 
are modelled as data linkages and state changes take place by means of 
certain actions.
We will describe the state changes and replies that result from 
performing these actions by means of a priority rewrite system.
Therefore, we shortly review priority rewrite systems first.
A comprehensive account of priority rewrite systems can be found 
in~\cite{BBKW89a}.
For convenience, the basic definitions and results regarding term 
rewriting systems are collected in an appendix.

A \emph{priority rewrite system} is a pair $\tup{\cR,<}$, where $\cR$ is
a term rewriting system and $<$ is a partial order on the set of rewrite
rules of $\cR$.

Informally, the procedural meaning of the partial order on the set of 
rewrite rules of a priority rewrite system is that a term of the form 
$f(t_1,\ldots,t_n)$ is allowed to be rewritten according to some 
applicable rewrite rule only if it cannot be rewritten to a term 
$f(t'_1,\ldots,t'_n)$ to which a rewrite rule of a higher priority is 
applicable.

Let $\tup{\cR,<}$ be a priority rewrite system, and 
let $R$ be a set of closed instances of rewrite rules of $\cR$.
Then an $R$-\emph{reduction} is a reduction of $\cR$ that belongs to
the closure of $R$ under closed contexts, transitivity and reflexivity.
Let $r$ be a rewrite rule.
Then an $r$-\emph{rewrite} is a closed substitution instance of $r$ and
an $r$-\emph{redex} is the left-hand side of a substitution instance of 
$r$.

Let $\tup{\cR,<}$ be a priority rewrite system.
Assume that there exists a unique set $R$ of closed instances of rewrite
rules of $\cR$ such that an $r$-rewrite $t \osred s \in R$ if there does
not exist an $R$-reduction $t \msred t'$ that\linebreak[2] leaves the 
head symbol of $t$ unaffected and an $r'$-rewrite $t' \osred s' \in R$ 
with $r < r'$.
Then $\tup{\cR,<}$ deter\-mines a \emph{one-step reduction relation} as
follows: $\osred$ is the closure of $R$ under closed contexts.
Moreover, let $E$ be a set of equations between terms over the signature
of $\cR$.
Then $\tup{\cR,<}$ determines a \emph{one-step reduction relation modulo
$E$} as follows: $t \osredm{E} s$ if and only if $t' \osred s'$ for some
$t'$ and $s'$ such that $t = t'$ and $s = s'$ are derivable from $E$
(where $\osred$ denotes the one-step reduction relation determined by
$\tup{\cR,<}$).
If a unique $R$ as described above exists, $\tup{\cR,<}$ is called
\emph{well-defined}.
If a priority rewrite system is not well-defined, then it does not
determine a one-step reduction relation.

Let $\tup{\cR,<}$ be a priority rewrite system.
Then $\cR$ is called the underlying term rewriting system of 
$\tup{\cR,<}$.

The priority rewrite system for data linkage dynamics is actually a
many-sorted priority rewrite system.
The definitions and results concerning term re\-writing systems extend
easily to the many-sorted case, see e.g.~\cite{BT95a}, and likewise for
priority rewrite systems.

Equations can serve as rewrite rules.
Taken as rewrite rules, equations are only used in the direction from
left to right.
In the priority rewrite system for data linkage dynamics, equations that
serve as axioms of \DLA\ are taken as rewrite rules.

Henceforth, all rewrite rules will be written as equations.
Moreover, the notation
\begin{ldispl}
\begin{array}[t]{@{}l@{\quad}l@{}}
\bprio{n_1} & r_1
\\ \qquad \vdots \\
\eprio{n_k} & r_k
\end{array}
\end{ldispl}%
will be used in a table of rewrite rules to indicate that each of the
rewrite rules $r_1,\ldots,r_k$ is incomparable with each of the other
rewrite rules in the table and, for $i,j \in \set{1,\ldots,k}$,
$r_i < r_j$ if and only if $n_i > n_j$.

\section{The Kernel of Data Linkage Dynamics}
\label{sect-DLD-K}

\DLD\ (Data Linkage Dynamics) is a simple model of computation that
pertains to the use of dynamic data structures in programming.
It comprises states, basic actions, i.e.\ indivisible actions, and the 
state changes and replies that result from performing the basic actions.
The states of \DLD\ are data linkages.

In connection with the value-related basic actions of \DLD, it is 
assumed that a fixed but arbitrary model of a certain set of equations 
has been given and that the set $\Value$ consists of the elements of 
that algebra.
A different set of equations would give rise to a variant of \DLD\ that
includes a large part of \DLD, namely the part with its features to 
structure data dynamically.
In this section, we introduce this part of \DLD, called \DLD-K 
(\DLD\ Kernel).
In Section~\ref{sect-DLD}, we will introduce the remaining part of \DLD,
i.e.\ the part with its features to deal with values found in 
dynamically structured data.
Basic actions related to reclaiming garbage are treated separately in 
Section~\ref{sect-garb-coll}.

Like in \DLA, it is assumed that a fixed but arbitrary finite set
$\Spot$ of spots, a fixed but arbitrary finite set $\Field$ of fields,
a fixed but arbitrary finite set $\AtObj$ of atomic objects, and a fixed
but arbitrary set $\Value$ of values have been given.
It is also assumed that a fixed but arbitrary \emph{choice}
function $\funct{\cf}{(\setof{(\AtObj)} \diff \emptyset)}{\AtObj}$ such
that, for all $A \in \setof{(\AtObj)} \diff \emptyset$, $\cf(A) \in A$
has been given.
The function $\cf$ is used whenever a fresh atomic object must be
obtained.

By means of the basic actions of \DLD-K, fresh atomic objects can be 
obtained, fields can be added to and removed from atomic objects, and 
the contents of fields of atomic objects can be examined and modified.
A few basic actions use a spot to put an atomic object in or to get an 
atomic object from.
The contents of spots can be compared and modified.

\DLD-K has the following basic actions:
\begin{itemize}
\item
for each $s \in \Spot$,
a \emph{get fresh atomic object action} $\getatobj{s}$;
\item
for each $s,t \in \Spot$, a \emph{set spot action} $\setspot{s}{t}$;
\item
for each $s \in \Spot$, a \emph{clear spot action} $\clrspot{s}$;
\item
for each $s,t \in \Spot$,
an \emph{equality test action} $\equaltst{s}{t}$;
\item
for each $s \in \Spot$,
an \emph{undefinedness test action} $\undeftst{s}$;
\item
for each $s \in \Spot$ and $f \in \Field$,
a \emph{add field action} $\addfield{s}{f}$;
\item
for each $s \in \Spot$ and $f \in \Field$,
a \emph{remove field action} $\rmvfield{s}{f}$;
\item
for each $s \in \Spot$ and $f \in \Field$,
a \emph{has field action} $\hasfield{s}{f}$;
\item
for each $s,t \in \Spot$ and $f \in \Field$,
a \emph{set field action} $\setfield{s}{f}{t}$;
\item
for each $s \in \Spot$ and $f \in \Field$,
a \emph{clear field action} $\clrfield{s}{f}$;
\item
for each $s,t \in \Spot$ and $f \in \Field$,
a \emph{get field action} $\getfield{s}{t}{f}$.
\end{itemize}
We write $\Act_\DLDK$ for the set of all basic actions of \DLD-K.

The intuition is that performing a basic action may cause a state change 
and will produce a reply. 
The possible replies are $\True$ (standing for true) and $\False$
(standing for false), and the actual reply is state-dependent.
Basic actions intended for examining the state do not cause a state 
change, and basic actions intended for changing the state produce the 
reply $\False$ only if something precludes a state change.

When speaking informally about a state $L$ of \DLD-K, we say:
\begin{itemize}
\item
if $L$ is locally deterministic with regard to spot $s$, 
\emph{the content of spot $s$} instead of the unique atomic object $a$ 
for which $\slink{s}{a}$ is contained in $L$;
\item
if $L$ is locally deterministic with regard to field $f$ for atomic 
object $a$, \emph{the content of field $f$ of atomic object $a$} instead 
of the unique atomic object $b$ for which $\flink{a}{f}{b}$ is contained 
in $L$;
\item
\emph{the fields of atomic object $a$} instead of the set of all fields
$f$ such that either $\pflink{a}{f}$ is contained in $L$ or there
exists an atomic object $b$ such that $\flink{a}{f}{b}$ is contained in
$L$.
\end{itemize}

The basic actions of \DLD-K can be explained as follows if all spots and 
fields involved in performing them are spots and fields with regard to 
which the current state is locally deterministic:
\begin{itemize}
\item
$\getatobj{s}$:
if a fresh atomic object can be allocated, then the content of spot $s$
becomes that fresh atomic object and the reply is $\True$; otherwise,
nothing changes and the reply is $\False$;
\item
$\setspot{s}{t}$:
the content of spot $s$ becomes the same as the content of spot $t$
and the reply is $\True$;
\item
$\clrspot{s}$:
the content of spot $s$ becomes undefined and the reply is $\True$;
\item
$\equaltst{s}{t}$:
if the content of spot $s$ equals the content of spot $t$, then
nothing changes and the reply is $\True$; otherwise, nothing changes and
the reply is $\False$;
\item
$\undeftst{s}$:
if the content of spot $s$ is undefined, then nothing changes and the
reply is $\True$; otherwise, nothing changes and the reply is $\False$;
\item
$\addfield{s}{f}$:
if the content of spot $s$ is an atomic object and $f$ does not yet
belong to the fields of that atomic object, then $f$ is added (with
undefined content) to the fields of that atomic object and the reply is
$\True$; otherwise, nothing changes and the reply is $\False$;
\item
$\rmvfield{s}{f}$:
if the content of spot $s$ is an atomic object and $f$ belongs to the
fields of that atomic object, then $f$ is removed from the fields of
that atomic object and the reply is $\True$; otherwise, nothing changes
and the reply is $\False$;
\item
$\hasfield{s}{f}$:
if the content of spot $s$ is an atomic object and $f$ belongs to the
fields of that atomic object, then nothing changes and the reply is
$\True$; otherwise, nothing changes and the reply is $\False$;
\item
$\setfield{s}{f}{t}$:
if the content of spot $s$ is an atomic object and $f$ belongs to the
fields of that atomic object, then the content of that field becomes the
same as the content of spot $t$ and the reply is $\True$; otherwise,
nothing changes and the reply is $\False$;
\item
$\clrfield{s}{f}$:
if the content of spot $s$ is an atomic object and $f$ belongs to the
fields of that atomic object, then the content of that field becomes
undefined and the reply is $\True$; otherwise, nothing changes and the
reply is $\False$;
\item
$\getfield{s}{t}{f}$:
if the content of spot $t$ is an atomic object and $f$ belongs to the
fields of that atomic object, then the content of spot $s$ becomes the
same as the content of that field and the reply is $\True$; otherwise,
nothing changes and the reply is~$\False$.
\end{itemize}
In the explanation given above, wherever we say that the content of a
spot or field becomes the same as the content of another spot or field,
this is meant to imply that the former content becomes undefined if the
latter content is undefined.
If not all spots and fields involved in performing a basic action of 
\DLD-K are spots and fields with regard to which the current state is 
locally deterministic, there is no state change and the reply is 
$\False$.

The choice is made to deal uniformly with all cases in which not all 
spots and fields involved in performing a basic action of \DLD-K are 
spots and fields with regard to which the current state is locally 
deterministic.
However, if the field involved in performing a remove field action or a 
has field action is not a field with regard to which the current state 
is locally deterministic, there are other imaginable ways to deal with 
it.
For example, the state change and reply could be the same as in the case 
where the field involved is a field with regard to which the current 
state is locally deterministic.

Recall that, in the examples given in this paper, we take 
$\set{\ul{n} \where n \in \set{0,\ldots,9}}$ for $\AtObj$.
Moreover, we take the choice function $\cf$ such that $\cf(A) = \ul{n}$
if and only if $\ul{n} \in A$ and there does not exist an $n' < n$ such 
that $\ul{n'} \in A$.

\begin{example}
\label{example-DLD-K-actions}
We consider the closed \DLA\ term
\begin{ldispl}
\phantom{(}
 \slinkp{r}{\ul{0}} \dlcom
 \flinkp{\ul{0}}{\nm{up}}{\ul{1}} \dlcom
 \flinkp{\ul{1}}{\nm{dn}}{\ul{0}} \dlcom
 \flinkp{\ul{1}}{\nm{up}}{\ul{2}} \dlcom
 \flinkp{\ul{2}}{\nm{dn}}{\ul{1}} \dlcom
 \slinkp{s}{\ul{2}}
\end{ldispl}%
from Example~\ref{example-DLA-term} again.
The data linkage denoted by this term can be obtained from the empty 
data linkage by performing
\begin{itemize}
\item
first $\getatobj{r}$ and $\setspot{s}{r}$ in that order, yielding
$\slinkp{r}{\ul{0}} \dlcom \slinkp{s}{\ul{0}}$;
\item
next $\setspot{t}{s}$, $\getatobj{s}$, $\addfield{t}{\nm{up}}$, 
$\addfield{s}{\nm{dn}}$, $\setfield{t}{\nm{up}}{s}$ and 
$\setfield{s}{\nm{dn}}{t}$ in that order, yielding \\
$\slinkp{r}{\ul{0}} \dlcom
 \slinkp{s}{\ul{1}} \dlcom
 \slinkp{t}{\ul{0}} \dlcom
 \flinkp{\ul{0}}{\nm{up}}{\ul{1}} \dlcom
 \flinkp{\ul{1}}{\nm{dn}}{\ul{0}}$;
\item
then again $\setspot{t}{s}$, $\getatobj{s}$, $\addfield{t}{\nm{up}}$, 
$\addfield{s}{\nm{dn}}$, $\setfield{t}{\nm{up}}{s}$ and 
$\setfield{s}{\nm{dn}}{t}$ in that order, yielding \\
$\slinkp{r}{\ul{0}} \dlcom
 \slinkp{s}{\ul{2}} \dlcom
 \slinkp{t}{\ul{1}} \dlcom
 \flinkp{\ul{0}}{\nm{up}}{\ul{1}} \dlcom
 \flinkp{\ul{1}}{\nm{dn}}{\ul{0}} \dlcom
 \flinkp{\ul{1}}{\nm{up}}{\ul{2}} \dlcom
 \flinkp{\ul{2}}{\nm{dn}}{\ul{1}}$;
\item
finally $\clrspot{t}$.
\end{itemize}
\end{example}

The priority rewrite system for \DLD-K given below is a many-sorted 
priority rewrite system.
In addition to the sort $\DaLi$ of data linkages, it has the sort
$\Reply$ of replies.
Because this priority rewrite system is used to describe the state 
changes and replies that result from performing the basic actions of 
\DLD-K, it has for each basic action $\alpha \in \Act_\DLDK$, the unary 
\emph{effect} operator $\funct{\effop{\alpha}}{\DaLi}{\DaLi}$ and the 
unary \emph{yield} operator $\funct{\yldop{\alpha}}{\DaLi}{\Reply}$.
The intuition is that these operators stand for operations that give,
for each state $L$, the state and reply, respectively, that result from
performing basic action $\alpha$ in state $L$.
Moreover, the priority rewrite system has the following two constants of 
sort $\Reply$: $\True$ and $\False$.

In the priority rewrite system for \DLD-K given below, the function
$\atobj$ is used to restrict the basic terms over \DLA\ for which the
syntactical variable $L$ stands.
This function gives, for each basic term $L$ over \DLA, the set of
atomic objects occurring in $L$.
It is defined as follows:
\begin{ldispl}
\begin{geqns}
\atobj(\slink{s}{a}) = \set{a}\;, \\
\smash{\atobj(\pflink{a}{f}) = \set{a}\;,} \\
\smash{\atobj(\flink{a}{f}{b}) = \set{a,b}\;,}
\end{geqns}
\qquad
\begin{geqns}
\atobj(\valass{a}{n}) = \set{a}\;, \\
\atobj(\emptydl) = \emptyset\;, \\
\atobj(L \dlcom L') = \atobj(L) \union \atobj(L')\;.
\end{geqns}
\end{ldispl}%

The priority rewrite system for \DLD-K consists of the axioms of \DLA,
with the exception of the associativity, commutativity and identity
axioms for $\dlcom$, taken as rewrite rules and the rewrite rules for
the effect and yield operators given in Table~\ref{rewrite-eff-yld-K}.%
\begin{table}[p]
\caption{Rewrite rules for effect and yield operators}
\label{rewrite-eff-yld-K}
\begin{eqntbl}
\begin{sprucol}
\bprio{1} &
\effop{\getatobj{s}}(X \dlcom \slinkp{s}{a} \dlcom \slinkp{s}{b}) =
X \dlcom \slinkp{s}{a} \dlcom \slinkp{s}{b}
 & \mif a \neq b
\\
\prio{2} &
\multicolumn{2}{@{}l@{}}
{\effop{\getatobj{s}}(L) = L \dlori \slinkp{s}{a}
 \;\,\mathsf{where}\; a = \cf(\AtObj \diff \atobj(L))
  \hfill \mif \atobj(L) \subset \AtObj}
\\
\prio{2} &
\multicolumn{2}{@{}l@{}}
{\effop{\getatobj{s}}(L) = L
  \hfill \mif \atobj(L) = \AtObj}
\\
\bprio{1} &
\effop{\setspot{s}{t}}(X \dlcom \slinkp{s}{a} \dlcom \slinkp{s}{b}) =
X \dlcom \slinkp{s}{a} \dlcom \slinkp{s}{b}
 & \mif a \neq b
\\
\prio{1} &
\effop{\setspot{s}{t}}(X \dlcom \slinkp{t}{a} \dlcom \slinkp{t}{b}) =
X \dlcom \slinkp{t}{a} \dlcom \slinkp{t}{b}
 & \mif a \neq b
\\
\prio{2} &
\effop{\setspot{s}{t}}(X \dlcom \slinkp{t}{a}) =
(X \dlcom \slinkp{t}{a}) \dlori \slinkp{s}{a}
\\
\prio{3} &
\effop{\setspot{s}{t}}(X \dlcom \slinkp{s}{a}) = \effop{\setspot{s}{t}}(X)
\\
\prio{4} &
\effop{\setspot{s}{t}}(X) = X
\\
\bprio{1} &
\effop{\clrspot{s}}(X \dlcom \slinkp{s}{a} \dlcom \slinkp{s}{b}) =
X \dlcom \slinkp{s}{a} \dlcom \slinkp{s}{b}
 & \mif a \neq b
\\
\prio{2} &
\effop{\clrspot{s}}(X \dlcom \slinkp{s}{a}) = \effop{\clrspot{s}}(X)
\\
\prio{3} &
\effop{\clrspot{s}}(X) = X
\\
\bprio{1} &
\effop{\equaltst{s}{t}}(X) = X
\\
\bprio{1} &
\effop{\undeftst{s}}(X) = X
\\
\bprio{1} &
\effop{\addfield{s}{f}}
 (X \dlcom \slinkp{s}{a} \dlcom \slinkp{s}{b}) =
X \dlcom \slinkp{s}{a} \dlcom \slinkp{s}{b}
 & \mif a \neq b
\\
\prio{2} &
\effop{\addfield{s}{f}}
 (X \dlcom \slinkp{s}{a} \dlcom \flinkp{a}{f}{b}) =
X \dlcom \slinkp{s}{a} \dlcom \flinkp{a}{f}{b}
\\
\prio{2} &
\effop{\addfield{s}{f}}
 (X \dlcom \slinkp{s}{a} \dlcom \pflinkp{a}{f}) =
X \dlcom \slinkp{s}{a} \dlcom \pflinkp{a}{f}
\\
\prio{3} &
\effop{\addfield{s}{f}}(X \dlcom \slinkp{s}{a}) =
(X \dlcom \slinkp{s}{a}) \dlori \pflinkp{a}{f}
\\
\prio{4} &
\effop{\addfield{s}{f}}(X) = X
\\
\bprio{1} &
\effop{\rmvfield{s}{f}}
 (X \dlcom \slinkp{s}{a} \dlcom \slinkp{s}{b}) =
X \dlcom \slinkp{s}{a} \dlcom \slinkp{s}{b}
 & \mif a \neq b
\\
\prio{1} &
\effop{\rmvfield{s}{f}}
 (X \dlcom \slinkp{s}{a} \dlcom \flinkp{a}{f}{b} \dlcom \flinkp{a}{f}{c}) =
X \dlcom \slinkp{s}{a} \dlcom \flinkp{a}{f}{b} \dlcom \flinkp{a}{f}{c}
 & \mif b \neq c
\\
\prio{1} &
\effop{\rmvfield{s}{f}}
 (X \dlcom \slinkp{s}{a} \dlcom \flinkp{a}{f}{b} \dlcom \pflinkp{a}{f}) =
X \dlcom \slinkp{s}{a} \dlcom \flinkp{a}{f}{b} \dlcom \pflinkp{a}{f}
\\
\prio{2} &
\effop{\rmvfield{s}{f}}
 (X \dlcom \slinkp{s}{a} \dlcom \flinkp{a}{f}{b}) =
\effop{\rmvfield{s}{f}}(X \dlcom \slinkp{s}{a})
\\
\prio{2} &
\effop{\rmvfield{s}{f}}
 (X \dlcom \slinkp{s}{a} \dlcom \pflinkp{a}{f}) =
\effop{\rmvfield{s}{f}}(X \dlcom \slinkp{s}{a})
\\
\prio{3} &
\effop{\rmvfield{s}{f}}(X) = X
\\
\bprio{1} &
\effop{\hasfield{s}{f}}(X) = X
\\
\bprio{1} &
\effop{\setfield{s}{f}{t}}
 (X \dlcom \slinkp{s}{a} \dlcom \slinkp{s}{b}) =
X \dlcom \slinkp{s}{a} \dlcom \slinkp{s}{b}
 & \mif a \neq b
\\
\prio{1} &
\effop{\setfield{s}{f}{t}}
 (X \dlcom \slinkp{s}{a} \dlcom \flinkp{a}{f}{b} \dlcom \flinkp{a}{f}{c}) =
X \dlcom \slinkp{s}{a} \dlcom \flinkp{a}{f}{b} \dlcom \flinkp{a}{f}{c}
 & \mif b \neq c
\\
\prio{1} &
\effop{\setfield{s}{f}{t}}
 (X \dlcom \slinkp{s}{a} \dlcom \flinkp{a}{f}{b} \dlcom \pflinkp{a}{f}) =
X \dlcom \slinkp{s}{a} \dlcom \flinkp{a}{f}{b} \dlcom \pflinkp{a}{f}
\\
\prio{1} &
\effop{\setfield{s}{f}{t}}
 (X \dlcom \slinkp{t}{a} \dlcom \slinkp{t}{b}) =
X \dlcom \slinkp{t}{a} \dlcom \slinkp{t}{b}
 & \mif a \neq b
\\
\prio{2} &
\effop{\setfield{s}{f}{t}}
 (X \dlcom \slinkp{s}{a} \dlcom \slinkp{t}{b} \dlcom \flinkp{a}{f}{c}) =
(X \dlcom \slinkp{s}{a} \dlcom \slinkp{t}{b}) \dlori \flinkp{a}{f}{b}
\\
\prio{2} & 
\effop{\setfield{s}{f}{t}}
 (X \dlcom \slinkp{s}{a} \dlcom \slinkp{t}{b} \dlcom \pflinkp{a}{f}) =
(X \dlcom \slinkp{s}{a} \dlcom \slinkp{t}{b}) \dlori \flinkp{a}{f}{b}
\\
\prio{3} &
\effop{\setfield{s}{f}{t}}
 (X \dlcom \slinkp{s}{a} \dlcom \flinkp{a}{f}{b}) =
(X \dlcom \slinkp{s}{a})  \dlori \pflinkp{a}{f}
\\
\prio{4} &
\effop{\setfield{s}{f}{t}}(X) = X
\\
\bprio{1} &
\effop{\clrfield{s}{f}}
 (X \dlcom \slinkp{s}{a} \dlcom \slinkp{s}{b}) =
X \dlcom \slinkp{s}{a} \dlcom \slinkp{s}{b}
 & \mif a \neq b
\\
\prio{1} &
\effop{\clrfield{s}{f}}
 (X \dlcom \slinkp{s}{a} \dlcom \flinkp{a}{f}{b} \dlcom \flinkp{a}{f}{c}) =
X \dlcom \slinkp{s}{a} \dlcom \flinkp{a}{f}{b} \dlcom \flinkp{a}{f}{c}
 & \mif b \neq c
\\
\prio{1} &
\effop{\clrfield{s}{f}}
 (X \dlcom \slinkp{s}{a} \dlcom \flinkp{a}{f}{b} \dlcom \pflinkp{a}{f}) =
X \dlcom \slinkp{s}{a} \dlcom \flinkp{a}{f}{b} \dlcom \pflinkp{a}{f}
\\
\prio{2} &
\effop{\clrfield{s}{f}}
 (X \dlcom \slinkp{s}{a} \dlcom \flinkp{a}{f}{b}) =
(X \dlcom \slinkp{s}{a})  \dlori \pflinkp{a}{f}
\\
\eprio{3} &
\effop{\clrfield{s}{f}}(X) = X
\end{sprucol}
\end{eqntbl}
\end{table}%
\addtocounter{table}{-1}%
\begin{table}[p]
\caption{(Continued)}
\begin{eqntbl}
\begin{sprucol}
\bprio{1} &
\effop{\getfield{s}{t}{f}}
 (X \dlcom \slinkp{s}{a} \dlcom \slinkp{s}{b}) =
X \dlcom \slinkp{s}{a} \dlcom \slinkp{s}{b}
 & \mif a \neq b
\\
\prio{1} &
\effop{\getfield{s}{t}{f}}
 (X \dlcom \slinkp{t}{a} \dlcom \slinkp{t}{b}) =
X \dlcom \slinkp{t}{a} \dlcom \slinkp{t}{b}
 & \mif a \neq b
\\
% !
\prio{1} &
\effop{\getfield{s}{t}{f}}
 (X \dlcom \slinkp{t}{a} \dlcom \flinkp{a}{f}{b} \dlcom \flinkp{a}{f}{c}) =
X \dlcom \slinkp{t}{a} \dlcom \flinkp{a}{f}{b} \dlcom \flinkp{a}{f}{c} 
 & \mif b \neq c 
\\
\prio{1} &
\effop{\getfield{s}{t}{f}}
 (X \dlcom \slinkp{t}{a} \dlcom \flinkp{a}{f}{b} \dlcom \pflinkp{a}{f}) =
X \dlcom \slinkp{t}{a} \dlcom \flinkp{a}{f}{b} \dlcom \pflinkp{a}{f}
\\
\prio{2} &
\effop{\getfield{s}{t}{f}}
 (X \dlcom \slinkp{t}{a} \dlcom \flinkp{a}{f}{b}) =
(X \dlcom \slinkp{t}{a} \dlcom \flinkp{a}{f}{b}) \dlori \slinkp{s}{b}
\\
\prio{2} &
\effop{\getfield{s}{t}{f}}
 (X \dlcom \slinkp{t}{a} \dlcom \pflinkp{a}{f} \dlcom \slinkp{s}{c}) =
\effop{\getfield{s}{t}{f}}(X \dlcom \slinkp{t}{a} \dlcom \pflinkp{a}{f})
\\
\prio{3} &
\effop{\getfield{s}{t}{f}}(X) = X
\\
\bprio{1} &
\yldop{\getatobj{s}}(X \dlcom \slinkp{s}{a} \dlcom \slinkp{s}{b}) =
\False
 & \mif a \neq b
\\
\prio{2} &
\multicolumn{2}{@{}l@{}}
{\yldop{\getatobj{s}}(L) = \True  \hfill \mif \atobj(L) \subset \AtObj}
\\
\prio{2} &
\multicolumn{2}{@{}l@{}}
{\yldop{\getatobj{s}}(L) = \False \hfill \mif \atobj(L) = \AtObj}
\\
\bprio{1} &
\yldop{\setspot{s}{t}}(X \dlcom \slinkp{s}{a} \dlcom \slinkp{s}{b}) =
\False
 & \mif a \neq b
\\
\prio{1} &
\yldop{\setspot{s}{t}}(X \dlcom \slinkp{t}{a} \dlcom \slinkp{t}{b}) =
\False
 & \mif a \neq b
\\
\prio{2} &
\yldop{\setspot{s}{t}}(X) = \True
\\
\bprio{1} &
\yldop{\clrspot{s}}(X \dlcom \slinkp{s}{a} \dlcom \slinkp{s}{b}) =
\False
 & \mif a \neq b
\\
\prio{2} &
\yldop{\clrspot{s}}(X) = \True
\\
\bprio{1} &
\yldop{\equaltst{s}{t}}(X \dlcom \slinkp{s}{a} \dlcom \slinkp{s}{b}) =
\False
 & \mif a \neq b
\\
\prio{1} &
\yldop{\equaltst{s}{t}}(X \dlcom \slinkp{t}{a} \dlcom \slinkp{t}{b}) =
\False
 & \mif a \neq b
\\
\prio{2} &
\yldop{\equaltst{s}{t}}(X \dlcom \slinkp{s}{a} \dlcom \slinkp{t}{a}) =
\True
\\
\prio{3} &
\yldop{\equaltst{s}{t}}(X \dlcom \slinkp{s}{a}) = \False
\\
\prio{3} &
\yldop{\equaltst{s}{t}}(X \dlcom \slinkp{t}{a}) = \False
\\
\prio{4} &
\yldop{\equaltst{s}{t}}(X) = \True
\\
\bprio{1} &
\yldop{\undeftst{s}}(X \dlcom \slinkp{s}{a}) = \False
\\
\prio{2} &
\yldop{\undeftst{s}}(X) = \True
\\
\bprio{1} &
\yldop{\addfield{s}{f}}
 (X \dlcom \slinkp{s}{a} \dlcom \slinkp{s}{b}) =
\False
 & \mif a \neq b
\\
\prio{2} &
\yldop{\addfield{s}{f}}
 (X \dlcom \slinkp{s}{a} \dlcom \flinkp{a}{f}{b}) = \False
\\
\prio{2} &
\yldop{\addfield{s}{f}}
 (X \dlcom \slinkp{s}{a} \dlcom \pflinkp{a}{f}) = \False
\\
\prio{3} &
\yldop{\addfield{s}{f}}(X \dlcom \slinkp{s}{a}) = \True
\\
\prio{4} &
\yldop{\addfield{s}{f}}(X) = \False
\\
\bprio{1} &
\yldop{\rmvfield{s}{f}}
 (X \dlcom \slinkp{s}{a} \dlcom \slinkp{s}{b}) =
\False
 & \mif a \neq b
\\
\prio{1} &
\yldop{\rmvfield{s}{f}}
 (X \dlcom \slinkp{s}{a} \dlcom \flinkp{a}{f}{b} \dlcom \flinkp{a}{f}{c}) =
\False
 & \mif b \neq c
\\
\prio{1} &
\yldop{\rmvfield{s}{f}}
 (X \dlcom \slinkp{s}{a} \dlcom \flinkp{a}{f}{b} \dlcom \pflinkp{a}{f}) =
\False
\\
\prio{2} &
\yldop{\rmvfield{s}{f}}
 (X \dlcom \slinkp{s}{a} \dlcom \flinkp{a}{f}{b}) = \True
\\
\prio{2} &
\yldop{\rmvfield{s}{f}}
 (X \dlcom \slinkp{s}{a} \dlcom \pflinkp{a}{f}) = \True
\\
\prio{3} &
\yldop{\rmvfield{s}{f}}(X) = \False
\\
\bprio{1} &
\yldop{\hasfield{s}{f}}
 (X \dlcom \slinkp{s}{a} \dlcom \slinkp{s}{b}) =
\False
 & \mif a \neq b
\\
\prio{1} &
\yldop{\hasfield{s}{f}}
 (X \dlcom \slinkp{s}{a} \dlcom \flinkp{a}{f}{b} \dlcom \flinkp{a}{f}{c}) =
\False
 & \mif b \neq c
\\
\prio{1} &
\yldop{\hasfield{s}{f}}
 (X \dlcom \slinkp{s}{a} \dlcom \flinkp{a}{f}{b} \dlcom \pflinkp{a}{f}) =
\False
\\
\prio{2} &
\yldop{\hasfield{s}{f}}
 (X \dlcom \slinkp{s}{a} \dlcom \flinkp{a}{f}{b}) = \True
\\
\prio{2} &
\yldop{\hasfield{s}{f}}
 (X \dlcom \slinkp{s}{a} \dlcom \pflinkp{a}{f}) = \True
\\
\eprio{3} &
\yldop{\hasfield{s}{f}}(X) = \False
\end{sprucol}
\end{eqntbl}
\end{table}%
\addtocounter{table}{-1}%
\begin{table}[!t]
\caption{(Continued)}
\begin{eqntbl}
\begin{sprucol}
\bprio{1} &
\yldop{\setfield{s}{f}{t}}
 (X \dlcom \slinkp{s}{a} \dlcom \slinkp{s}{b}) =
\False
 & \mif a \neq b
\\
% !
\prio{1} &
\yldop{\setfield{s}{f}{t}}
 (X \dlcom \slinkp{s}{a} \dlcom \flinkp{a}{f}{b} \dlcom \flinkp{a}{f}{c}) =
\False \hsp{12.75}
 & \mif b \neq c
\\
\prio{1} &
\yldop{\setfield{s}{f}{t}}
 (X \dlcom \slinkp{s}{a} \dlcom \flinkp{a}{f}{b} \dlcom \pflinkp{a}{f}) =
\False
\\
\prio{1} &
\yldop{\setfield{s}{f}{t}}
 (X \dlcom \slinkp{t}{a} \dlcom \slinkp{t}{b}) =
\False
 & \mif a \neq b
\\
\prio{2} &
\yldop{\setfield{s}{f}{t}}
 (X \dlcom \slinkp{s}{a} \dlcom \flinkp{a}{f}{b}) = \True
\\
\prio{2} &
\yldop{\setfield{s}{f}{t}}
 (X \dlcom \slinkp{s}{a} \dlcom \pflinkp{a}{f}) = \True
\\
\prio{3} &
\yldop{\setfield{s}{f}{t}}(X) = \False
\\
\bprio{1} &
\yldop{\clrfield{s}{f}}
 (X \dlcom \slinkp{s}{a} \dlcom \slinkp{s}{b}) =
\False
 & \mif a \neq b
\\
\prio{1} &
\yldop{\clrfield{s}{f}}
 (X \dlcom \slinkp{s}{a} \dlcom \flinkp{a}{f}{b} \dlcom \flinkp{a}{f}{c}) =
\False
 & \mif b \neq c
\\
\prio{1} &
\yldop{\clrfield{s}{f}}
 (X \dlcom \slinkp{s}{a} \dlcom \flinkp{a}{f}{b} \dlcom \pflinkp{a}{f}) =
\False
\\
\prio{2} &
\yldop{\clrfield{s}{f}}
 (X \dlcom \slinkp{s}{a} \dlcom \flinkp{a}{f}{b}) = \True
\\
\prio{2} &
\yldop{\clrfield{s}{f}}
 (X \dlcom \slinkp{s}{a} \dlcom \pflinkp{a}{f}) = \True
\\
\prio{3} &
\yldop{\clrfield{s}{f}}(X) = \False
\\
\bprio{1} &
\yldop{\getfield{s}{t}{f}}
 (X \dlcom \slinkp{s}{a} \dlcom \slinkp{s}{b}) =
\False
 & \mif a \neq b
\\
\prio{1} &
\yldop{\getfield{s}{t}{f}}
 (X \dlcom \slinkp{t}{a} \dlcom \slinkp{t}{b}) =
\False
 & \mif a \neq b
\\
\prio{1} &
\yldop{\getfield{s}{t}{f}}
 (X \dlcom \slinkp{t}{a} \dlcom \flinkp{a}{f}{b} \dlcom \flinkp{a}{f}{c}) =
\False
 & \mif b \neq c
\\
\prio{1} &
\yldop{\getfield{s}{t}{f}}
 (X \dlcom \slinkp{t}{a} \dlcom \flinkp{a}{f}{b} \dlcom \pflinkp{a}{f}) =
\False
\\
\prio{2} &
\yldop{\getfield{s}{t}{f}}
 (X \dlcom \slinkp{t}{a} \dlcom \flinkp{a}{f}{b}) = \True
\\
\prio{2} &
\yldop{\getfield{s}{t}{f}}
 (X \dlcom \slinkp{t}{a} \dlcom \pflinkp{a}{f}) = \True
\\
\eprio{3} &
\yldop{\getfield{s}{t}{f}}(X) = \False
\end{sprucol}
\end{eqntbl}
\end{table}%
\pagebreak[2]
In this table, $L$ stands for an arbitrary basic term over \DLA,
$s$ and $t$ stand for arbitrary spots from $\Spot$,
$f$ stands for an arbitrary field from $\Field$, and
$a$, $b$ and $c$ stand for arbitrary atomic objects from $\AtObj$.
Each of the rewrite rules in Table~\ref{rewrite-eff-yld-K} is 
incomparable with each of the axioms of \DLA\ that are taken as rewrite 
rules.
Moreover, the axioms of \DLA\ that are taken as rewrite rules are 
mutually incomparable.

In Section~\ref{sect-properties}, we will state some properties of the 
priority rewrite system for \DLD.
It is obvious from the proofs that the properties concerned are 
properties of the priority rewrite system for \DLD-K as well. 
Among the latter properties is the well-definedness of the priority 
rewrite system for \DLD-K.
If it would not have this property, the priority rewrite system for 
\DLD-K would not determine a one-step reduction relation.

Henceforth, we will write \ACI\ for the set of equations that consists
of the associativity, commutativity and identity axioms for $\dlcom$.%
\footnote
{The mnemonic name \ACI\ for the associativity, commutativity and 
identity axioms for some operator is taken from~\cite{JM92a}.}
Because there are equal \DLD-K terms that cannot be rewritten to the 
same term once the equations in \ACI\ are only used in one direction, 
reduction modulo \ACI\ is of importance to \DLD-K.
Thus, the one-step reduction relation of interest for \DLD-K is the 
one-step reduction relation modulo \ACI\ determined by the priority 
rewrite system for \DLD-K.
We will write $\msredaci$ for the closure of this reduction relation 
under transitivity and reflexivity.
Notice that \ACI\ does not contain the idempotency axiom for $\dlcom$.
This axiom is taken as rewrite rule because it is only needed in the 
direction from left to right.

\begin{example}
\label{example-DLD-K-rewrite-rules}
The statement that the data linkage denoted by 
\begin{ldispl}
\slinkp{r}{\ul{0}} \dlcom \flinkp{\ul{0}}{\nm{up}}{\ul{1}} \dlcom
\flinkp{\ul{1}}{\nm{dn}}{\ul{0}}
\end{ldispl}%
can be obtained from the empty data linkage by performing 
$\getatobj{r}$, $\getatobj{t}$, $\addfield{r}{\nm{up}}$, 
$\addfield{t}{\nm{dn}}$, $\setfield{r}{\nm{up}}{t}$, 
$\setfield{t}{\nm{dn}}{r}$ and $\clrspot{t}$ in that order
is substantiated by the priority rewrite system for \DLD-K, where it is 
provable that
\begin{ldispl}
\effop{\clrspot{t}}(\effop{\setfield{t}{\nm{dn}}{r}}
 (\effop{\setfield{r}{\nm{up}}{t}}(\effop{\addfield{t}{\nm{dn}}}
  (\effop{\addfield{r}{\nm{up}}}(\effop{\getatobj{t}}
   (\effop{\getatobj{r}}(\emptydl))))))) \msredaci 
\\ \quad
\slinkp{r}{\ul{0}} \dlcom \flinkp{\ul{0}}{\nm{up}}{\ul{1}} \dlcom
\flinkp{\ul{1}}{\nm{dn}}{\ul{0}}\;.
\end{ldispl}%
\end{example}

\section{Data Linkage Dynamics}
\label{sect-DLD}

In this section, we extend \DLD-K with features to deal with values 
found in dynamically structured data, resulting in \DLD.
That is, we add basic actions by means of which calculations can be done 
with values that are associated with atomic objects to the basic actions 
of \DLD-K.

Unlike in \DLD-K, it is assumed that a fixed but arbitrary finite meadow 
has been given and that $\Value$ consists of the elements of that meadow.
A meadow is a commutative ring with a multiplicative identity element 
$1$ and a total multiplicative inverse operation \mbox{${}^{-1}$} 
satisfying the reflexivity equation $(u^{-1})^{-1} = u $ and the 
restricted inverse equation $u \mul (u \mul u^{-1}) = u$.
Thus, a meadow has an additive identity element $0$, a multiplicative 
identity element $1$, an addition operation \mbox{${} + {}$}, a 
multiplication operation \mbox{${} \cdot {}$}, an additive inverse 
operation \mbox{$- {}$}, and a multiplicative inverse operation 
\mbox{${}^{-1}$} that satisfies $0^{-1} = 0$.
Meadows were defined for the first time in~\cite{BT07a} and elaborated
in several subsequent papers (see e.g.~\cite{BBP13a,BR07a}).
The prime examples of finite meadows are finite fields with the 
multiplicative inverse operation made total by imposing that the 
multiplicative inverse of zero is zero.

\DLD\ has the basic actions of \DLD-K and in addition the following 
basic actions:
\begin{itemize}
\item
for each $s \in \Spot$, an \emph{assign zero action} $\asszero{s}$;
\item
for each $s \in \Spot$, an \emph{assign one action} $\assone{s}$;
\item
for each $s,t,u \in \Spot$, an \emph{assign sum action}
$\assadd{s}{t}{u}$;
\item
for each $s,t,u \in \Spot$, an \emph{assign product action}
$\assmul{s}{t}{u}$;
\item
for each $s,t \in \Spot$, an \emph{assign additive inverse action}
$\assneg{s}{t}$;
\item
for each $s,t \in \Spot$, an \emph{assign multiplicative inverse action}
$\assinv{s}{t}$;
\item
for each $s,t \in \Spot$, a \emph{value equality test action}
$\eqvaltst{s}{t}$;
\item
for each $s \in \Spot$, a \emph{value undefinedness test action}
$\undefvtst{s}$.
\end{itemize}
We write $\Act_\DLD$ for the set of all basic actions of \DLD.

When speaking informally about a state $L$ of \DLD, we also say:
\begin{itemize}
\item
if $L$ is locally deterministic with regard to the value assignment for
$a$, \emph{the value assigned to atomic object $a$} instead of the 
unique value $n$ for which $\valass{a}{n}$ is contained in $L$;
\item
\emph{atomic object $a$ has a value assigned} instead of $L$ is locally 
deterministic with regard to the value assignment for $a$.
\end{itemize}

The value-related basic actions of \DLD\ can be explained as follows
if all spots and value assignments involved in performing them are spots 
and value assignments with regard to which the current state is locally 
deterministic:
\begin{itemize}
\item
$\asszero{s}$:
if the content of spot $s$ is an atomic object, then the value assigned
to that atomic object becomes $0$ and the reply is $\True$; otherwise,
nothing changes and the reply is $\False$;
\item
$\assone{s}$:
if the content of spot $s$ is an atomic object, then the value assigned
to that atomic object becomes $1$ and the reply is $\True$; otherwise,
nothing changes and the reply is $\False$;
\item
$\assadd{s}{t}{u}$:
if the content of spot $s$ is an atomic object and the contents of spots
$t$ and $u$ are atomic objects that have values assigned, then the value
assigned to the content of spot $s$ becomes the sum of the values
assigned to the contents of spots $t$ and $u$ and the reply is $\True$;
otherwise, nothing changes and the reply is $\False$;
\item
$\assmul{s}{t}{u}$:
if the content of spot $s$ is an atomic object and the contents of spots
$t$ and $u$ are atomic objects that have values assigned, then the value
assigned to the content of spot $s$ becomes the product of the values
assigned to the contents of spots $t$ and $u$ and the reply is $\True$;
otherwise, nothing changes and the reply is $\False$;
\item
$\assneg{s}{t}$:
if the content of spot $s$ is an atomic object and the content of spot
$t$ is an atomic object that has a value assigned, then the value
assigned to the content of spot $s$ becomes the additive inverse of the
value assigned to the content of spot $t$ and the reply is $\True$;
otherwise, nothing changes and the reply is $\False$;
\item
$\assinv{s}{t}$:
if the content of spot $s$ is an atomic object and the content of spot
$t$ is an atomic object that has a value assigned, then the value
assigned to the content of spot $s$ becomes the multiplicative inverse
of the value assigned to the content of spot $t$ and the reply is
$\True$; otherwise, nothing changes and the reply is $\False$;
\item
$\eqvaltst{s}{t}$:
if the contents of spots $s$ and $t$ are atomic objects that have values
assigned and the value assigned to the content of spot $s$ equals the
value assigned to the content of spot $t$, then nothing changes and the
reply is $\True$; otherwise, nothing changes and the reply is $\False$;
\item
$\undefvtst{s}$:
if the content of spot $s$ is an atomic object that has no value
assigned, then nothing changes and the reply is $\True$; otherwise,
nothing changes and the reply is $\False$.
\end{itemize}
If not all spots and value assignments involved in performing a 
value-related basic action are spots and value assignments with regard 
to which the current state is locally deterministic, there is no state 
change and the reply is $\False$.

Notice that copying, subtraction, and division can be done with the
value-related basic actions available in \DLD.
If the content of spot $s$ is an atomic object and the content of spot
$t$ is an atomic object that has a value assigned, then that value can
be assigned to the content of spot $s$ by first performing $\asszero{s}$
and then performing $\assadd{s}{s}{t}$.
If the content of spot $s$ is an atomic object and the contents of spots
$t$ and $u$ are atomic objects that have values assigned, then the
difference of those values can be assigned to the content of spot $s$
by first performing $\assneg{u}{u}$, next performing $\assadd{s}{t}{u}$
and then performing $\assneg{u}{u}$ once again.
Division can be done like subtraction.

\begin{example}
\label{example-DLD-actions-values}
We consider the data linkage denoted by the following closed \DLA\ term:
\begin{ldispl}
\slinkp{s}{\ul{0}} \dlcom
       \slinkp{t}{\ul{1}} \dlcom \valass{\ul{1}}{7} \dlcom
       \slinkp{u}{\ul{2}} \dlcom \valass{\ul{2}}{3}\;.
\end{ldispl}%
\sloppy
The data linkage obtained from this data linkage by first performing 
$\assneg{u}{u}$, next performing $\assadd{s}{t}{u}$ and then performing 
$\assneg{u}{u}$ once again is the data linkage denoted by the term
\begin{ldispl}
\slinkp{s}{\ul{0}} \dlcom \valass{\ul{0}}{4} \dlcom
       \slinkp{t}{\ul{1}} \dlcom \valass{\ul{1}}{7} \dlcom
       \slinkp{u}{\ul{2}} \dlcom \valass{\ul{2}}{3}\;.
\end{ldispl}%
\end{example}

In \DLD, finite meadows are taken as the basis for the features to deal
with values.
This allows for calculations in finite fields.
The approach followed is generic: take the algebras that are the models
of some set of equational axioms and introduce value-related basic
actions for the operations of those algebras.

The priority rewrite system for \DLD\ consists of the axioms of \DLA,
with the exception of the associativity, commutativity and identity
axioms for $\dlcom$, taken as rewrite rules, the rewrite rules from
the priority rewrite system for \DLD-K, and the rewrite rules given in 
Table~\ref{rewrite-eff-yld}.%
\begin{table}[p]
\caption{Rewrite rules for additional effect and yield operators}
\label{rewrite-eff-yld}
\begin{eqntbl}
\begin{sprucol}
\bprio{1} &
\effop{\asszero{s}}
 (X \dlcom \slinkp{s}{a} \dlcom \slinkp{s}{b}) =
X \dlcom \slinkp{s}{a} \dlcom \slinkp{s}{a}
 & \mif a \neq b
\\
\prio{1} &
\effop{\asszero{s}}
 (X \dlcom \slinkp{s}{a} \dlcom \valass{a}{n} \dlcom \valass{a}{m}) =
X \dlcom \slinkp{s}{a} \dlcom \valass{a}{n} \dlcom \valass{a}{m}
 & \mif n \neq m
\\
\prio{2} &
\effop{\asszero{s}}(X \dlcom \slinkp{s}{a}) =
(X \dlcom \slinkp{s}{a}) \dlori \valass{a}{0}
\\
\prio{3} &
\effop{\asszero{s}}(X) = X
\\
\bprio{1} &
\effop{\assone{s}}
 (X \dlcom \slinkp{s}{a} \dlcom \slinkp{s}{b}) =
X \dlcom \slinkp{s}{a} \dlcom \slinkp{s}{a}
 & \mif a \neq b
\\
\prio{1} &
\effop{\assone{s}}
 (X \dlcom \slinkp{s}{a} \dlcom \valass{a}{n} \dlcom \valass{a}{m}) =
X \dlcom \slinkp{s}{a} \dlcom \valass{a}{n} \dlcom \valass{a}{m}
 & \mif n \neq m
\\
\prio{2} &
\effop{\assone{s}}(X \dlcom \slinkp{s}{a}) =
(X \dlcom \slinkp{s}{a}) \dlori \valass{a}{1}
\\
\prio{3} &
\effop{\assone{s}}(X) = X
\\
\bprio{1} &
\effop{\assadd{s}{t}{u}}
 (X \dlcom \slinkp{s}{a} \dlcom \slinkp{s}{b}) =
X \dlcom \slinkp{s}{a} \dlcom \slinkp{s}{b}
 & \mif a \neq b
\\
\prio{1} &
\effop{\assadd{s}{t}{u}}
 (X \dlcom \slinkp{s}{a} \dlcom \valass{a}{n} \dlcom \valass{a}{m}) =
X \dlcom \slinkp{s}{a} \dlcom \valass{a}{n} \dlcom \valass{a}{m}
 & \mif n \neq m
\\
\prio{1} &
\effop{\assadd{s}{t}{u}}
 (X \dlcom \slinkp{t}{a} \dlcom \slinkp{t}{b}) =
X \dlcom \slinkp{t}{a} \dlcom \slinkp{t}{b}
 & \mif a \neq b
\\
\prio{1} &
\effop{\assadd{s}{t}{u}}
 (X \dlcom \slinkp{t}{a} \dlcom \valass{a}{n} \dlcom \valass{a}{m}) =
X \dlcom \slinkp{t}{a} \dlcom \valass{a}{n} \dlcom \valass{a}{m}
 & \mif n \neq m
\\
\prio{1} &
\effop{\assadd{s}{t}{u}}
 (X \dlcom \slinkp{u}{a} \dlcom \slinkp{u}{b}) =
X \dlcom \slinkp{u}{a} \dlcom \slinkp{u}{b}
 & \mif a \neq b
\\
\prio{1} &
\effop{\assadd{s}{t}{u}}
 (X \dlcom \slinkp{u}{a} \dlcom \valass{a}{n} \dlcom \valass{a}{m}) =
X \dlcom \slinkp{u}{a} \dlcom \valass{a}{n} \dlcom \valass{a}{m}
 & \mif n \neq m
\\
\prio{2} &
\effop{\assadd{s}{t}{u}}
 (X \dlcom \slinkp{s}{a} \dlcom
  \slinkp{t}{b} \dlcom \valass{b}{n} \dlcom
  \slinkp{u}{c} \dlcom \valass{c}{m}) = {} \\ & \qquad
(X \dlcom \slinkp{s}{a} \dlcom
 \slinkp{t}{b} \dlcom \valass{b}{n} \dlcom
 \slinkp{u}{c} \dlcom \valass{c}{m}) \dlori \valass{a}{n + m}
\\
\prio{3} &
\effop{\assadd{s}{t}{u}}(X) = X
\\
\bprio{1} &
\effop{\assmul{s}{t}{u}}
 (X \dlcom \slinkp{s}{a} \dlcom \slinkp{s}{b}) =
X \dlcom \slinkp{s}{a} \dlcom \slinkp{s}{b}
 & \mif a \neq b
\\
\prio{1} &
\effop{\assmul{s}{t}{u}}
 (X \dlcom \slinkp{s}{a} \dlcom \valass{a}{n} \dlcom \valass{a}{m}) =
X \dlcom \slinkp{s}{a} \dlcom \valass{a}{n} \dlcom \valass{a}{m}
 & \mif n \neq m
\\
\prio{1} &
\effop{\assmul{s}{t}{u}}
 (X \dlcom \slinkp{t}{a} \dlcom \slinkp{t}{b}) =
X \dlcom \slinkp{t}{a} \dlcom \slinkp{t}{b}
 & \mif a \neq b
\\
\prio{1} &
\effop{\assmul{s}{t}{u}}
 (X \dlcom \slinkp{t}{a} \dlcom \valass{a}{n} \dlcom \valass{a}{m}) =
X \dlcom \slinkp{t}{a} \dlcom \valass{a}{n} \dlcom \valass{a}{m}
 & \mif n \neq m
\\
\prio{1} &
\effop{\assmul{s}{t}{u}}
 (X \dlcom \slinkp{u}{a} \dlcom \slinkp{u}{b}) =
X \dlcom \slinkp{u}{a} \dlcom \slinkp{u}{b}
 & \mif a \neq b
\\
\prio{1} &
\effop{\assmul{s}{t}{u}}
 (X \dlcom \slinkp{u}{a} \dlcom \valass{a}{n} \dlcom \valass{a}{m}) =
X \dlcom \slinkp{u}{a} \dlcom \valass{a}{n} \dlcom \valass{a}{m}
 & \mif n \neq m
\\
\prio{2} &
\effop{\assmul{s}{t}{u}}
 (X \dlcom \slinkp{s}{a} \dlcom
  \slinkp{t}{b} \dlcom \valass{b}{n} \dlcom
  \slinkp{u}{c} \dlcom \valass{c}{m}) = {} \\ & \qquad
(X \dlcom \slinkp{s}{a} \dlcom
 \slinkp{t}{b} \dlcom \valass{b}{n} \dlcom
 \slinkp{u}{c} \dlcom \valass{c}{m}) \dlori \valass{a}{n \cdot m}
\\
\prio{3} &
\effop{\assmul{s}{t}{u}}(X) = X
\\
\bprio{1} &
\effop{\assneg{s}{t}}
 (X \dlcom \slinkp{s}{a} \dlcom \slinkp{s}{b}) =
X \dlcom \slinkp{s}{a} \dlcom \slinkp{s}{b}
 & \mif a \neq b
\\
\prio{1} &
\effop{\assneg{s}{t}}
 (X \dlcom \slinkp{s}{a} \dlcom \valass{a}{n} \dlcom \valass{a}{m}) =
X \dlcom \slinkp{s}{a} \dlcom \valass{a}{n} \dlcom \valass{a}{m}
 & \mif n \neq m
\\
\prio{1} &
\effop{\assneg{s}{t}}
 (X \dlcom \slinkp{t}{a} \dlcom \slinkp{t}{b}) =
X \dlcom \slinkp{t}{a} \dlcom \slinkp{t}{b}
 & \mif a \neq b
\\
\prio{1} &
\effop{\assneg{s}{t}}
 (X \dlcom \slinkp{t}{a} \dlcom \valass{a}{n} \dlcom \valass{a}{m}) =
X \dlcom \slinkp{t}{a} \dlcom \valass{a}{n} \dlcom \valass{a}{m}
 & \mif n \neq m
\\
\prio{2} &
\effop{\assneg{s}{t}}
 (X \dlcom \slinkp{s}{a} \dlcom
  \slinkp{t}{b} \dlcom \valass{b}{n}) =
(X \dlcom \slinkp{s}{a} \dlcom
 \slinkp{t}{b} \dlcom \valass{b}{n}) \dlori \valass{a}{-n} \hsp{-1.9}
\\
\prio{3} &
\effop{\assneg{s}{t}}(X) = X
\\
\bprio{1} &
\effop{\assinv{s}{t}}
 (X \dlcom \slinkp{s}{a} \dlcom \slinkp{s}{b}) =
X \dlcom \slinkp{s}{a} \dlcom \slinkp{s}{b}
 & \mif a \neq b
\\
\prio{1} &
\effop{\assinv{s}{t}}
 (X \dlcom \slinkp{s}{a} \dlcom \valass{a}{n} \dlcom \valass{a}{m}) =
X \dlcom \slinkp{s}{a} \dlcom \valass{a}{n} \dlcom \valass{a}{m}
 & \mif n \neq m
\\
\prio{1} &
\effop{\assinv{s}{t}}
 (X \dlcom \slinkp{t}{a} \dlcom \slinkp{t}{b}) =
X \dlcom \slinkp{t}{a} \dlcom \slinkp{t}{b}
 & \mif a \neq b
\\
\prio{1} &
\effop{\assinv{s}{t}}
 (X \dlcom \slinkp{t}{a} \dlcom \valass{a}{n} \dlcom \valass{a}{m}) =
X \dlcom \slinkp{t}{a} \dlcom \valass{a}{n} \dlcom \valass{a}{m}
\hsp{2} & \mif n \neq m
\\
\prio{2} &
\effop{\assinv{s}{t}}
 (X \dlcom \slinkp{s}{a} \dlcom
  \slinkp{t}{b} \dlcom \valass{b}{n}) =
(X \dlcom \slinkp{s}{a} \dlcom
 \slinkp{t}{b} \dlcom \valass{b}{n}) \dlori \valass{a}{n^{-1}} \hsp{-1.9}
\\
\prio{3} &
\effop{\assinv{s}{t}}(X) = X
\\
\bprio{1} &
\effop{\eqvaltst{s}{t}}(X) = X
\\
\beprio{1} &
\effop{\undefvtst{s}}(X) = X
\end{sprucol}
\end{eqntbl}
\end{table}%
\addtocounter{table}{-1}%
\begin{table}[p]
\caption{(Continued)}
\begin{eqntbl}
\begin{sprucol}
\bprio{1} &
\yldop{\asszero{s}}
 (X \dlcom \slinkp{s}{a} \dlcom \slinkp{s}{b}) =
\False
 & \mif a \neq b
\\
\prio{1} &
\yldop{\asszero{s}}
 (X \dlcom \slinkp{s}{a} \dlcom \valass{a}{n} \dlcom \valass{a}{m}) =
\False
 & \mif n \neq m
\\
\prio{2} &
\yldop{\asszero{s}}(X \dlcom \slinkp{s}{a}) = \True
\\
\prio{3} &
\yldop{\asszero{s}}(X) = \False
\\
\bprio{1} &
\yldop{\assone{s}}
 (X \dlcom \slinkp{s}{a} \dlcom \slinkp{s}{b}) =
\False
 & \mif a \neq b
\\
\prio{1} &
\yldop{\assone{s}}
 (X \dlcom \slinkp{s}{a} \dlcom \valass{a}{n} \dlcom \valass{a}{m}) =
\False
 & \mif n \neq m
\\
\prio{2} &
\yldop{\assone{s}}(X \dlcom \slinkp{s}{a}) = \True
\\
\prio{3} &
\yldop{\assone{s}}(X) = \False
\\
\bprio{1} &
\yldop{\assadd{s}{t}{u}}
 (X \dlcom \slinkp{s}{a} \dlcom \slinkp{s}{b}) =
\False
 & \mif a \neq b
\\
\prio{1} &
\yldop{\assadd{s}{t}{u}}
 (X \dlcom \slinkp{s}{a} \dlcom \valass{a}{n} \dlcom \valass{a}{m}) =
\False
 & \mif n \neq m
\\
\prio{1} &
\yldop{\assadd{s}{t}{u}}
 (X \dlcom \slinkp{t}{a} \dlcom \slinkp{t}{b}) =
\False
 & \mif a \neq b
\\
\prio{1} &
\yldop{\assadd{s}{t}{u}}
 (X \dlcom \slinkp{t}{a} \dlcom \valass{a}{n} \dlcom \valass{a}{m}) =
\False
 & \mif n \neq m
\\
\prio{1} &
\yldop{\assadd{s}{t}{u}}
 (X \dlcom \slinkp{u}{a} \dlcom \slinkp{u}{b}) =
\False
 & \mif a \neq b
\\
\prio{1} &
\yldop{\assadd{s}{t}{u}}
 (X \dlcom \slinkp{u}{a} \dlcom \valass{a}{n} \dlcom \valass{a}{m}) =
\False
 & \mif n \neq m
\\
% !
\prio{2} &
\yldop{\assadd{s}{t}{u}}
 (X \dlcom \slinkp{s}{a} \dlcom
  \slinkp{t}{b} \dlcom \valass{b}{n} \dlcom
  \slinkp{u}{c} \dlcom \valass{c}{m}) = \True \hsp{6.4}
\\
\prio{3} &
\yldop{\assadd{s}{t}{u}}(X) = \False
\\
\bprio{1} &
\yldop{\assmul{s}{t}{u}}
 (X \dlcom \slinkp{s}{a} \dlcom \slinkp{s}{b}) =
\False
 & \mif a \neq b
\\
\prio{1} &
\yldop{\assmul{s}{t}{u}}
 (X \dlcom \slinkp{s}{a} \dlcom \valass{a}{n} \dlcom \valass{a}{m}) =
\False
 & \mif n \neq m
\\
\prio{1} &
\yldop{\assmul{s}{t}{u}}
 (X \dlcom \slinkp{t}{a} \dlcom \slinkp{t}{b}) =
\False
 & \mif a \neq b
\\
\prio{1} &
\yldop{\assmul{s}{t}{u}}
 (X \dlcom \slinkp{t}{a} \dlcom \valass{a}{n} \dlcom \valass{a}{m}) =
\False
 & \mif n \neq m
\\
\prio{1} &
\yldop{\assmul{s}{t}{u}}
 (X \dlcom \slinkp{u}{a} \dlcom \slinkp{u}{b}) =
\False
 & \mif a \neq b
\\
\prio{1} &
\yldop{\assmul{s}{t}{u}}
 (X \dlcom \slinkp{u}{a} \dlcom \valass{a}{n} \dlcom \valass{a}{m}) =
\False
 & \mif n \neq m
\\
\prio{2} &
\yldop{\assmul{s}{t}{u}}
 (X \dlcom \slinkp{s}{a} \dlcom
  \slinkp{t}{b} \dlcom \valass{b}{n} \dlcom
  \slinkp{u}{c} \dlcom \valass{c}{m}) = \True
\\
\prio{3} &
\yldop{\assmul{s}{t}{u}}(X) = \False
\\
\bprio{1} &
\yldop{\assneg{s}{t}}
 (X \dlcom \slinkp{s}{a} \dlcom \slinkp{s}{b}) =
\False
 & \mif a \neq b
\\
\prio{1} &
\yldop{\assneg{s}{t}}
 (X \dlcom \slinkp{s}{a} \dlcom \valass{a}{n} \dlcom \valass{a}{m}) =
\False
 & \mif n \neq m
\\
\prio{1} &
\yldop{\assneg{s}{t}}
 (X \dlcom \slinkp{t}{a} \dlcom \slinkp{t}{b}) =
\False
 & \mif a \neq b
\\
\prio{1} &
\yldop{\assneg{s}{t}}
 (X \dlcom \slinkp{t}{a} \dlcom \valass{a}{n} \dlcom \valass{a}{m}) =
\False
 & \mif n \neq m
\\
\prio{2} &
\yldop{\assneg{s}{t}}
 (X \dlcom \slinkp{s}{a} \dlcom \slinkp{t}{b} \dlcom \valass{b}{n}) = \True
\\
\prio{3} &
\yldop{\assneg{s}{t}}(X) = \False
\\
\bprio{1} &
\yldop{\assinv{s}{t}}
 (X \dlcom \slinkp{s}{a} \dlcom \slinkp{s}{b}) =
\False
 & \mif a \neq b
\\
\prio{1} &
\yldop{\assinv{s}{t}}
 (X \dlcom \slinkp{s}{a} \dlcom \valass{a}{n} \dlcom \valass{a}{m}) =
\False
 & \mif n \neq m
\\
\prio{1} &
\yldop{\assinv{s}{t}}
 (X \dlcom \slinkp{t}{a} \dlcom \slinkp{t}{b}) =
\False
 & \mif a \neq b
\\
\prio{1} &
\yldop{\assinv{s}{t}}
 (X \dlcom \slinkp{t}{a} \dlcom \valass{a}{n} \dlcom \valass{a}{m}) =
\False
 & \mif n \neq m
\\
\prio{2} &
\yldop{\assinv{s}{t}}
 (X \dlcom \slinkp{s}{a} \dlcom \slinkp{t}{b} \dlcom \valass{b}{n}) = \True
\\
\prio{3} &
\yldop{\assinv{s}{t}}(X) = \False
\\
\bprio{1} &
\yldop{\eqvaltst{s}{t}}
 (X \dlcom \slinkp{s}{a} \dlcom \slinkp{s}{b}) =
\False
 & \mif a \neq b
\\
\prio{1} &
\yldop{\eqvaltst{s}{t}}
 (X \dlcom \slinkp{s}{a} \dlcom \valass{a}{n} \dlcom \valass{a}{m}) =
\False
 & \mif n \neq m
\\
\prio{1} &
\yldop{\eqvaltst{s}{t}}
 (X \dlcom \slinkp{t}{a} \dlcom \slinkp{t}{b}) =
\False
 & \mif a \neq b
\\
\prio{1} &
\yldop{\eqvaltst{s}{t}}
 (X \dlcom \slinkp{t}{a} \dlcom \valass{a}{n} \dlcom \valass{a}{m}) =
\False
 & \mif n \neq m
\\
\prio{2} &
\yldop{\eqvaltst{s}{t}}
 (X \dlcom \slinkp{s}{a} \dlcom \valass{a}{n} \dlcom
  \slinkp{t}{b} \dlcom \valass{b}{n}) = \True
\\
\eprio{3} &
\yldop{\eqvaltst{s}{t}}(X) = \False
\end{sprucol}
\end{eqntbl}
\end{table}%
\addtocounter{table}{-1}%
\begin{table}[!t]
\caption{(Continued)}
\begin{eqntbl}
\begin{sprucol}
% !
\bprio{1} &
\yldop{\undefvtst{s}}(X \dlcom \slinkp{s}{a} \dlcom \slinkp{s}{b}) = 
\False \hsp{18}
 & \mif a \neq b
\\
\prio{2} &
\yldop{\undefvtst{s}}(X \dlcom \slinkp{s}{a} \dlcom \valass{a}{n}) = 
\False
\\
\prio{3} &
\yldop{\undefvtst{s}}(X \dlcom \slinkp{s}{a}) = \True
\\
\eprio{4} &
\yldop{\undefvtst{s}}(X) = \False
\end{sprucol}
\end{eqntbl}
\end{table}
In this table, $s$, $t$ and $u$ stand for arbitrary spots from $\Spot$,
$a$, $b$ and $c$ stand for arbitrary atomic objects from $\AtObj$,
and $n$ and $m$ stand for arbitrary values from $\Value$.
Each of the rewrite rules in Table~\ref{rewrite-eff-yld} is incomparable
with each of the axioms of \DLA\ that are taken as rewrite rules and each
of the rewrite rules from the priority rewrite system for \DLD-K.

The total number of rewrite rules for \DLD\ is quite large.
This is fully attributable to the fact that \DLD\ has $19$ different 
kinds of basic actions.
The number of rewrite rules for the effect operator $\effop{\alpha}$ for 
a basic action $\alpha$ is on average about $4$, and the number of 
rewrite rules for the yield operator $\yldop{\alpha}$ for a basic action 
$\alpha$ is on average about $5$.
Moreover, the fraction of the rewrite rules that deal with the case in 
which not all spots, fields and/or value assignments involved in 
performing the basic action concerned are ones with regard to which the 
data linkage concerned is locally deterministic is on average somewhat 
greater than $1/2$.
Due to the uniform treatment of this case, the rules in question have 
the forms $\effop{\alpha}(L) = L$ and $\yldop{\alpha}(L) = \False$.
They are labelled by $1$, i.e.\ they have the highest priority.
The remaining rewrite rules reflect the informal explanations of the 
basic actions of \DLD\ given before in a direct way.

\begin{example}
\label{example-DLD-rewrite-rules}
What is stated before about copying and subtraction with the 
value-related basic actions of \DLD\ is substantiated by the priority 
rewrite system for \DLD.
Let
$L =
 M \dlcom \slinkp{s}{a} \dlcom
 \slinkp{t}{b} \dlcom \valass{b}{n}$ be a closed \DLA\ term.
Then there exists a basic term $N$ over \DLA\ such that
\begin{ldispl}
\effop{\assadd{s}{s}{t}}(\effop{\asszero{s}}(L)) \msredaci N\;, \\
\hfill L \dlori \valass{a}{n} \msredaci N\;.
\end{ldispl}%
In other words, by first performing $\asszero{s}$ and then performing
$\assadd{s}{s}{t}$, the value assigned to the content of spot $s$
becomes the same as the value assigned to the content of spot $t$.
Let
$L' =
 M' \dlcom \slinkp{s}{a} \dlcom
 \slinkp{t}{b} \dlcom \valass{b}{n} \dlcom
 \slinkp{u}{c} \dlcom \valass{c}{m}$
be a closed \DLA\ term.
Then there exists a basic term $N'$ over \DLA\ such that
\pagebreak[2]
\begin{ldispl}
\effop{\assneg{u}{u}}
  (\effop{\assadd{s}{t}{u}}(\effop{\assneg{u}{u}}(L'))) \msredaci N'\;,
 \\
\hfill L' \dlori \valass{a}{n - m} \msredaci N'\;.
\end{ldispl}%
\sloppy
In other words, by first performing $\assneg{u}{u}$, next performing
$\assadd{s}{t}{u}$ and then performing $\assneg{u}{u}$ once again, the
value assigned to the content of spot $s$ becomes the difference of the
values assigned to the contents of spots $t$ and $u$.
\end{example}

It is easy to check that for all $\alpha \in \Act_\DLD$, for all closed 
\DLA\ terms $L$ and $L'$ such that $\effop{\alpha}(L) \msredaci L'$, $L$
is deterministic if and only if $L'$ is deterministic.
In other words, both determinism and non-determinism are properties of 
data linkages that are preserved by the basic actions of \DLD.
Because of this, the data linkage obtained from another data linkage by 
performing a number of basic actions of \DLD\ in succession is 
deterministic if and only if the latter data linkage is determistic.

With much effort, a few applications of non-deterministic data linkages
could be thought up, but these applications would require a variant of 
\DLD\ with basic actions of which we have not yet a clear image.
This raises the question why \DLA\ and \DLD\ cover non-deterministic 
data linkages.
The answer is that we have devised several variants of the pair of \DLA\ 
and \DLD\ that are restricted to deterministic data linkages, but they 
turned out to be more complicated than the pair of \DLA\ and \DLD.

The priority rewrite system for \DLD\ is used in
Section~\ref{sect-comb-TA-DLD} in examples concerning computations in
which the basic actions of \DLD\ are involved.

\section{Properties of the Priority Rewrite System for DLD}
\label{sect-properties}

In this section, we state some properties of the priority rewrite system 
for \DLD.
For the purpose of stating the properties in question rigorously, we
introduce the set $\cE$ of \emph{effect terms} and the set $\cY$  of
\emph{yield terms}.
They are inductively defined by the following rules:
\begin{itemize}
\item
$\emptydl \in \cE$;
\item
if $s \in \Spot$ and $a \in \AtObj$, then $\slink{s}{a} \in \cE$;
\item
if $a \in \AtObj$ and $f \in \Field$, then $\pflink{a}{f} \in \cE$;
\item
if $a,b \in \AtObj$ and $f \in \Field$, then $\flink{a}{f}{b} \in \cE$;
\item
if $a \in \AtObj$ and $n \in \Value$, then $\valass{a}{n} \in \cE$;
\item
if $D_1,D_2 \in \cE$, then $D_1 \dlcom D_2 \in \cE$;
\item
if $D_1,D_2 \in \cE$, then $D_1 \dlori D_2 \in \cE$;
\item
if $\alpha \in \Act_\DLD$ and $D \in \cE$, then
$\effop{\alpha}(D) \in \cE$;
\item
if $\alpha \in \Act_\DLD$ and $D \in \cE$, then
$\yldop{\alpha}(D) \in \cY$.
\end{itemize}
Clearly, $\cB$ is a proper subset of $\cE$.
Below we will prove that effect terms have normal forms that are basic 
terms over \DLA.

As a preparation, we state an important property of the underlying term 
rewriting system of the priority rewrite system for \DLD.
\begin{proposition}
[Strongly normalizing underlying term rewriting system]
\label{prop-strongly-normalizing}
The underlying term rewriting system of the priority rewrite system for 
\DLD\ is strongly normalizing modulo \ACI.
\end{proposition}
\begin{proof}
We will write $\cR'_\DLD$ for the underlying term rewriting system of 
the priority rewrite system for \DLD, we will write \AC\ for the set of 
equations that consists of the associativity and commutativity axioms 
for $\dlcom$, and we will write $\osredacu$ and $\osredaciu$ for the 
one-step reduction modulo \AC\ relation of $\cR'_\DLD$ and the one-step 
reduction modulo \ACI\ relation of $\cR'_\DLD$, respectively. 

First, it is proved that $\cR'_\DLD$ is strongly normalizing modulo \AC.
This is easily proved by means of the reduction ordering induced by the 
integer polynomials $\phi(D)$ associated with \DLD\ terms $D$ as 
follows:
\begin{ldispl}
\begin{geqns}
\phi(X) = \ul{X}\;, \\
\phi(\slink{s}{a}) = 3\;, \\
\smash{\phi(\pflink{a}{f}) = 3\;,} \\
\smash{\phi(\flink{a}{f}{b}) = 3\;,} \\
\phi(\valass{a}{n}) = 3\;,
\eqnsep
\phi(\True) = 3\;, \\
\phi(\False) = 3\;,
\end{geqns}
\qquad
\begin{geqns}
\phi(\emptydl) = 2\;, \\
\phi(D_1 \dlcom D_2) = \phi(D_1) + \phi(D_2) + 1\;, \\
\phi(D_1 \dlori D_2) = \phi(D_1) \mul \phi(D_2)\;, \\
\phi(\effop{\alpha}(D)) = 4 \mul \phi(D)\;, \\
\eqnsep
\phi(\yldop{\alpha}(D)) = 4 \mul \phi(D)\;,
\end{geqns}
\end{ldispl}%
where it is assumed that, for each variable $X$ over data linkages,
there is a variable $\ul{X}$ over integers.
Here, it is crucial that $\phi(D_1 \dlcom D_2) = \phi(D_2 \dlcom D_1)$ 
and 
$\phi(D_1 \dlcom (D_2 \dlcom D_3)) = \phi((D_1 \dlcom D_2) \dlcom D_3)$
for all \DLD\ terms $D_1$, $D_2$, $D_3$ (see e.g.\ Section~6.2.3 
from~\cite{Zan03a}).

Next, it is proved by means of a function $\theta$ on \DLD\ terms that 
$\cR'_\DLD$ is strongly normalizing modulo \ACI.
The function $\theta$, which transforms \DLD\ terms to ones for which 
the applicable reduction modulo \ACI\ steps of $\cR'_\DLD$ do not depend 
on the identity axiom for $\dlcom$, is defined by 
$\theta(D) = \theta_1(\theta_0(D))$, where 
\begin{ldispl}
\begin{geqns}
\theta_0(X) = X\;, \\ 
\theta_0(\slink{s}{a}) = \slink{s}{a}\;, \\
\smash{\theta_0(\pflink{a}{f}) = \pflink{a}{f}\;,} \\
\smash{\theta_0(\flink{a}{f}{b}) = \flink{a}{f}{b}\;,} \\
\theta_0(\valass{a}{n}) = \valass{a}{n}\;, \\
\eqnsep
\theta_0(\True) = \True\;, \\
\theta_0(\False) = \False\;,
\rule[-3ex]{0ex}{3ex}
\end{geqns}
\qquad
\begin{geqns}
\theta_0(\emptydl) = \emptydl\;, \\
\theta_0(\emptydl \dlcom D) = \theta_0(D)\;, \\
\theta_0(D \dlcom \emptydl) = \theta_0(D)\;, \\
\theta_0(D_1 \dlcom D_2) = \theta_0(D_1) \dlcom \theta_0(D_2)
\;\; \mif D_1 \not\equiv \emptydl \Land D_2 \not\equiv \emptydl\;, \\
\theta_0(D_1 \dlori D_2) = \theta_0(D_1) \dlori \theta_0(D_2)\;, \\
\theta_0(\effop{\alpha}(D)) = \effop{\alpha}(\theta_0(D))\;,
\eqnsep
\theta_0(\yldop{\alpha}(D)) = \yldop{\alpha}(\theta_0(D))\;,
\end{geqns}
\eqnsep
\begin{geqns}
\theta_1(X) = X\;, \\
\theta_1(\slink{s}{a}) = \slink{s}{a}\;, \\
\smash{\theta_1(\pflink{a}{f}) = \pflink{a}{f}\;,} \\
\smash{\theta_1(\flink{a}{f}{b}) = \flink{a}{f}{b}\;,} \\
\theta_1(\valass{a}{n}) = \valass{a}{n}\;,
\eqnsep
\theta_1(\True) = \True\;, \\
\theta_1(\False) = \False\;.
\end{geqns}
\qquad
\begin{geqns}
\theta_1(\emptydl) = \emptydl\;, \\
\theta_1(D_1 \dlcom D_2) = \theta_1(D_1) \dlcom \theta_1(D_2)\;, 
\phantom{\;\; \mif D_1 \not\equiv \emptydl \Land D_2 \not\equiv \emptydl}
\\
\theta_1(D_1 \dlori D_2) =
(\emptydl \dlcom \theta_1(D_1)) \dlori \theta_1(D_2)\;, \\
\theta_1(\effop{\alpha}(D)) =
\effop{\alpha}(\emptydl \dlcom \theta_1(D))\;, \\ 
\eqnsep
\theta_1(\yldop{\alpha}(D)) =
\yldop{\alpha}(\emptydl \dlcom \theta_1(D))\;,
\end{geqns}
\end{ldispl}%
By checking all rewrite rules, it is easily established that 
$t \osredaciu s$ only if $\theta(t) \osredacu \theta(s)$.
From this it follows that, for each reduction sequence with respect to
$\osredaciu$, the sequence obtained by replacing each term $t$ in the
reduction sequence by $\theta(t)$ is a reduction sequence with respect
to $\osredacu$.
Now assume that $\cR'_\DLD$ is not strongly normalizing modulo \ACI.
Then there exists an infinite reduction sequence with respect to
$\osredaciu$.
Consequently, there exists an infinite reduction sequence with respect
to $\osredacu$ as well.
In other words, $\cR'_\DLD$ is not strongly normalizing modulo \AC.
However, the contrary was proved above.
Hence, $\cR'_\DLD$ is strongly normalizing modulo \ACI.
\qed
\end{proof}

A consequence of Theorem~3.11, Proposition~3.13, and Proposition~3.14 
from~\cite{BBKW89a} is that a priority rewrite system is well-defined if 
its underlying term rewriting system is strongly normalizing.
Moreover, a term rewriting system is strongly normalizing modulo \ACI\
only if it is strongly normalizing as well.
So we have an important corollary of 
Proposition~\ref{prop-strongly-normalizing}.
\begin{corollary}[Well-definedness]
\label{corollary-well-definedness-dld}
\sloppy
The priority rewrite system for \DLD\ is well-defined.
\end{corollary}
If it would not be well-defined, the priority rewrite system for \DLD\ 
would not determine a one-step reduction relation and the following
theorem would not make sense.
\begin{theorem}[Normal forms]
\label{theorem-normal-form-dld}
The priority rewrite system for \DLD\ has the following properties
concerning normal forms with respect to reduction modulo \ACI:
\begin{enumerate}
\item
each element of $\cE$ has a unique normal form modulo \ACI;
\item
the normal forms of the elements of $\cE$ are exactly the basic terms 
over \DLA;
\item
each element of $\cY$ has a unique normal form;
\item
the normal forms of the elements of $\cY$ are exactly the constants of
sort $\Reply$.
\end{enumerate}
\end{theorem}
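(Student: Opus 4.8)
The plan is to establish three facts about the one-step reduction relation $\osredm{\ACI}$ determined by the priority rewrite system for \DLD, restricted to $\cE \union \cY$ and the constants $\True$, $\False$: (a) it is strongly normalizing, so every term has a normal form; (b) every normal form in $\cE$ is a basic term over \DLA\ and every normal form reachable from a term in $\cY$ is $\True$ or $\False$; and (c) normal forms are unique modulo \ACI. Facts~(a) and~(c) together give parts~1 and~3, and fact~(b) gives parts~2 and~4; that an element of $\cY$ reduces all the way to $\True$ or $\False$ rather than getting stuck is also part of~(b). Throughout we use that the priority rewrite system for \DLD\ is well-defined, so that $\osredm{\ACI}$ and $\msredaci$ make sense.

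For~(a) I would use a well-founded measure $\mu(t) = (N(t), W(t))$ ordered lexicographically. The first component $N(t)$ counts the occurrences in $t$ of the effect operators $\effop{\alpha}$ and the yield operators $\yldop{\alpha}$; since every rule in Table~\ref{rewrite-eff-yld} has such an operator at the rewritten position and none in its right-hand side, every application of such a rule, in any context, strictly decreases $N$, whereas the rewrite rules coming from the axioms of \DLA\ and the equations of \ACI\ leave $N$ unchanged. The second component $W(t)$ is the value of $t$ under a monotone interpretation chosen to strictly decrease on each \DLA\ rewrite rule in every context while being invariant under \ACI: I would interpret $\dlcom$ additively (making associativity and commutativity transparent and idempotence strictly decreasing) and $\dlori$ super-additively in its second argument, so that the distributivity axiom $X \dlori (Y \dlcom Z) = (X \dlori Y) \dlcom (X \dlori Z)$ strictly decreases $W$ even though it duplicates the subterm matched by $X$, while the overriding and commutation axioms for $\dlori$, each of which deletes or displaces an atomic data linkage, strictly decrease $W$ as well; the identity axiom for $\dlcom$ is dealt with by noting that no rule introduces $\emptydl$, so one may work with the $\emptydl$-free representatives of \ACI-classes. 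Reconciling distributivity with the overriding axioms inside a single interpretation is the delicate point of this step.

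For~(b) I would show that an element of $\cE$ that is not a basic term over \DLA\ is not a normal form. Such a term contains an occurrence of $\dlori$ or of an effect or yield operator; choose a deepest one. Its immediate arguments then contain no $\dlori$ and no effect or yield operator and, being of sort $\DaLi$, are basic terms over \DLA. If the chosen occurrence is a $\dlori$, its redex has the form $L \dlori L'$ with $L, L' \in \cB$, and one of the \DLA\ rules rewrites it: the rules for $X \dlori \emptydl$ and for $X \dlori (Y \dlcom Z)$ cover the cases where $L'$ is $\emptydl$ or a combination, and when $L'$ is an atomic data linkage then $L =_{\ACI} \emptydl$ or $L =_{\ACI} M \dlcom \gamma$ for an atomic data linkage $\gamma$, and the override and commutation axioms of \DLA\ --- which, as a glance at Table~\ref{axioms-DLA} shows, jointly cover every ordered pair of atomic data linkages --- supply a redex (this is the induction already used in the proof of Theorem~\ref{theorem-elimination}). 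If the chosen occurrence is $\effop{\alpha}(L)$ with $L \in \cB$, then at least one rule for $\effop{\alpha}$ in Table~\ref{rewrite-eff-yld} matches it modulo \ACI\ (each $\effop{\alpha}$ has a ``catch-all'' rule), and since $L$ is a normal form there is no nontrivial head-preserving reduction from $\effop{\alpha}(L)$, so the highest-priority matching rule is active and provides a redex; the case of a yield operator is identical, and its right-hand sides $\True$ and $\False$ are themselves normal forms. Hence, by~(a), every element of $\cE$ reduces to a basic term over \DLA, and every element of $\cY$ reduces first to some $\yldop{\alpha}(L)$ with $L \in \cB$ and then to $\True$ or $\False$.

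The hard part is~(c), uniqueness modulo \ACI. I would prove that $\osredm{\ACI}$ is locally confluent modulo \ACI\ on $\cE \union \cY \union \set{\True,\False}$ and conclude, by well-founded induction on $\mu$, Newman's lemma in the form appropriate for rewriting modulo an equational theory. Because we are dealing with a priority rewrite system this is not a routine critical-pair computation: whether a step is permitted depends on the whole term (no head-preserving reduction may enable a higher-priority rule), so I would inspect directly, for each term $t$ reducing in one step modulo \ACI\ to $t_1$ and to $t_2$, whether $t_1$ and $t_2$ have a common reduct. Steps in disjoint subterms commute and genuinely nested steps reduce to the induction hypothesis; the substantive cases are overlaps at the head of $t$. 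When the head of $t$ is $\dlcom$ or $\dlori$ only \DLA\ rules are involved, and one checks that all overlaps modulo \ACI\ --- essentially the choice of which atomic data linkage is pulled out of a combination, and the interplay of idempotence with the identity of \ACI\ --- are joinable. When the head is $\effop{\alpha}$ or $\yldop{\alpha}$, the priority structure ensures that at most one priority level has active rules at $t$ and that rules active at the same level have the same right-hand side, so the only remaining overlap pits a rule firing at the head against a step inside the argument; this is the case that needs the most care, and one settles it by invoking the induction hypothesis on the (strictly smaller) argument and on the (strictly smaller, since one effect or yield operator has disappeared) right-hand side of the head rule, so that, the argument having the same basic normal form in both reducts, the two reducts do too. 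With local confluence in hand, confluence modulo \ACI\ follows, hence uniqueness of normal forms modulo \ACI, completing parts~1 and~3.
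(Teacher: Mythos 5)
Your decomposition --- strong normalization, characterization of the irreducible terms, and confluence modulo \ACI\ --- is the same as the paper's, and your sketches of (b) and (c) are workable. The genuine gap is in (a). The first component $N$ of your lexicographic measure is not even weakly decreasing under the \DLA\ rules: the distributivity axiom $X \dlori (Y \dlcom Z) = (X \dlori Y) \dlcom (X \dlori Z)$ duplicates the subterm matched by $X$, and in $\cE$ that subterm may contain effect or yield operators. For instance, $\effop{\alpha}(\emptydl) \dlori (\slink{s}{a} \dlcom \slink{t}{b})$ is an effect term whose head symbol is $\dlori$, so the distributivity rule is enabled for it, and the resulting step doubles the number of occurrences of $\effop{\alpha}$. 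Thus $N$ strictly increases on a legitimate reduction step while sitting in the dominant position of your lexicographic order, and the measure does not prove termination. You anticipated exactly this duplication problem for $W$ (hence the super-additive interpretation of $\dlori$), but the same duplication defeats the separate counting component $N$.

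The repair is to abandon the two-component measure and absorb the effect and yield operators into a single polynomial interpretation, which is what the paper does: atomic data linkages and $\True$, $\False$ get value $3$, $\emptydl$ gets $2$, $\phi(D_1 \dlcom D_2) = \phi(D_1) + \phi(D_2) + 1$, $\phi(D_1 \dlori D_2) = \phi(D_1) \mul \phi(D_2)$, and $\phi(\effop{\alpha}(D)) = \phi(\yldop{\alpha}(D)) = 4 \mul \phi(D)$. Then distributivity decreases the value by $\phi(X) - 1 > 0$, and every effect and yield rule decreases it because its right-hand side is bounded by $3 \mul \phi$ of the argument. A second, smaller looseness: ``work with the $\emptydl$-free representatives'' does not by itself dispose of the identity axiom of \ACI, since a left-hand side such as $X \dlcom \slink{s}{a}$ matches the lone constant $\slink{s}{a}$ only modulo that axiom, via $X \mapsto \emptydl$; the paper handles this with an explicit translation $\theta$ that reduces strong normalization modulo \ACI\ to strong normalization modulo \AC. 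Your treatment of (b) matches the paper's structural induction, and your Newman-style argument for (c) plays the role of the paper's appeal to the weak-confluence results of Baeten--Bergstra--Klop--Weijland and Jouannaud--Kirchner, so no separate objection there.
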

\begin{proof}
Properties~1 and~3 are proved combined.
It follows immediately from Proposition~\ref{prop-strongly-normalizing}
that the priority rewrite system for \DLD\ is strongly normalizing 
modulo \ACI\ on all closed \DLD\ terms.
Therefore, it remains to be shown that the priority rewrite system for 
\DLD\ is weakly confluent modulo \ACI\ on all closed \DLD\ terms.

In this weak confluence proof, we use the 
\emph{one-step equality relation} $\osaci$.
This relation is defined as the closure of the set of all closed 
instances of the equations in \ACI\ under symmetry and closed contexts.
A critical pair $\tup{t \osred s,t \osred s'}$ in which $t$, $s$, and 
$s'$ are closed terms is called a \emph{closed critical pair}.
\pagebreak[2]

The following holds for the priority rewrite system for \DLD:
\begin{enumerate}
\item
$r' < r$ if and only if the left hand side of $r$ is a substitution
instance of the left hand side of $r'$;
\item
for all closed critical pairs $\tup{t \osred s,t \osred s'}$ that arise 
from overlap modulo \ACI\ of a rewrite rule on an incomparable rewrite 
rule, $s$ and $s'$ have a common reduct modulo \ACI;
\item
for all closed critical pairs $\tup{t \osred s,t \osaci s'}$ that arise 
from overlap modulo \ACI\ of a rewrite rule on an equation from \ACI, 
there exists a one-step reduction $s' \osredaci s''$ that consists of 
the contraction of a redex modulo \ACI\ such that $s$ and $s''$ have a 
common reduct modulo \ACI;
\item
overlaps between comparable rewrite rules are overlaps at the outermost
positions only.
\end{enumerate}
From this, the weak confluence modulo \ACI\ of the priority rewrite 
system for \DLD\ on all closed \DLD\ terms follows straightforwardly, 
following a line of reasoning similar to the one followed in the proof 
of Theorem~4.8 from~\cite{BBKW89a}, using Theorems~5 and~16 
from~\cite{JK86a}.
The proof of Theorem~4.8 from~\cite{BBKW89a} is concerned with showing 
that the one-step reduction relation determined by a priority rewrite 
system does not give rise to critical pairs while the current proof is 
concerned with showing that the one-step reduction relation determined 
by the priority rewrite system of \DLD\ does not give rise to critical 
pairs that are not convergent.
We use Theorems~5 and~16 from~\cite{JK86a} because we consider reduction 
modulo \ACI.
These theorems allow for confluence modulo a set of equations to be 
established by convergence checks on critical pairs. 
In the use of the second theorem, it suffices to let $L = \emptyset$. 
Because we can add to point~3 above that the redex concerned occurs 
`less deep than the overlap', this theorem can be used in spite of the
fact that \ACI-equivalence classes are infinite (see the second remark 
following the proof of the theorem).

Properties~2 and~4 are easily proved combined by structural induction.
\qed
\end{proof}

In Table~\ref{rewrite-eff-yld-K}, $L$ stands for an arbitrary basic term
over \DLA.
This means that each rewrite rule schema in which $L$ occurs represents
an infinite number of rewrite rules.
We have a corollary of Theorem~\ref{theorem-normal-form-dld} which is
relevant to this point because, modulo \ACI, the number of basic terms 
over \DLA\ is finite.
\begin{corollary}[Equivalent priority rewrite system]
\label{corollary-eqv-PRS}
Let $\sim$ be the equivalence relation on $\cB$ induced by $\ACI$, 
let $\funct{\mathit{repr}}{\cB / {\sim}}{\cB}$ be such that
$\mathit{repr}(\cB') \in \cB'$ for all $\cB' \in \cB / {\sim}$,
and let $\cB^*$ be the image of $\cB / {\sim}$ under $\mathit{repr}$.
Take the priority rewrite system obtained from the priority rewrite 
system for \DLD\ by restricting the basic terms over \DLA\ that $L$ 
stands for to the elements of $\cB^*$.
This adapted priority rewrite system determines the same one-step 
reduction relation modulo \ACI\ as the priority
rewrite system of \DLD.
\end{corollary}
Corollary~\ref{corollary-eqv-PRS} is material to 
Corollary~\ref{corollary-decidability}.

Let a priority rewrite system $\tup{\cR,<}$ be given.
Let $r$ be a rewrite rule from $\cR$, and let $t \osred s$ be an
$r$-rewrite.
Then $r$ is \emph{enabled} for $t$ if $t \osred s$ belongs to the
one-step reduction relation determined by $\tup{\cR,<}$.
\begin{proposition}[Enabled rewrite rules]
\label{prop-enabledness}
Let $r$ be a rewrite rule from the priority rewrite system for \DLD, and
let $D$ be a closed $r$-redex.
Then
\begin{enumerate}
\item
if $D \not\equiv \effop{\alpha}(D')$ and
$D \not\equiv \yldop{\alpha}(D')$ for all $\alpha \in \Act_\DLD$ and
$D' \in \cE$, then $r$ is enabled for $D$;
\item
if $D \equiv \effop{\alpha}(D')$ or $D \equiv \yldop{\alpha}(D')$ for
some $\alpha \in \Act_\DLD$ and $D' \in \cE$, then $r$ is enabled for
$D$ if and only if $D$ is not a closed $r'$-redex for some rewrite rule
$r'$ with $r < r'$.
\pagebreak[2]
\end{enumerate}
\end{proposition}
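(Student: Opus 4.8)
The plan is to reason directly from the characterisation of the set $R$ of enabled closed rewrites supplied by the well-definedness of the priority rewrite system for \DLD. Because the one-step reduction relation determined by $\tup{\cR,<}$ is the closure of $R$ under closed contexts, a root contraction $D \osred s$ of a closed $r$-redex $D$ lies in that relation only if it already lies in $R$; so $r$ is enabled for $D$ exactly when the $r$-rewrite $D \osred s$ is in $R$, which by well-definedness holds exactly when there is no $R$-reduction $D \msred D^{\ast}$ leaving the head symbol of $D$ unaffected and followed by an $r'$-rewrite $D^{\ast} \osred s' \in R$ with $r < r'$. I shall use two easy facts: every rewrite rule in Table~\ref{rewrite-eff-yld} has a left-hand side headed by an effect or a yield operator, whereas the rewrite rules stemming from the axioms of \DLA\ are mutually incomparable and incomparable with all rules of Table~\ref{rewrite-eff-yld}; and, by Corollary~\ref{corollary-eqv-PRS}, I may take the system to have only finitely many rewrite rules, so that among the rules a fixed closed term is a redex of there is always a $<$-maximal one.

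Part~1 is then immediate: if the head symbol of $D$ is neither an effect nor a yield operator, then $r$ --- whose left-hand side has the same head symbol as $D$ --- stems from an axiom of \DLA, hence there is no rule $r'$ with $r < r'$, the blocking condition above is vacuously false, and $r$ is enabled for $D$.

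Part~2, say for $D \equiv \effop{\alpha}(D')$ with the case $D \equiv \yldop{\alpha}(D')$ symmetric, rests on a structural lemma: \emph{if $D \msred D^{\ast}$ is a reduction modulo \ACI\ that leaves the head symbol of $D$ unaffected and $D^{\ast}$ is a closed $\rho$-redex, then $D$ is a closed $\rho$-redex as well}. Granting this, suppose first that $r$ is not enabled for $D$; then there is such a reduction $D \msred D^{\ast}$ together with an $r'$-rewrite $D^{\ast} \osred s' \in R$ with $r < r'$, and the lemma turns $D^{\ast}$ being a closed $r'$-redex into $D$ being one, so $D$ is a closed $r'$-redex for some $r' > r$. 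Conversely, suppose $D$ is a closed $r'$-redex for some $r' > r$, and let $\rho$ be a $<$-maximal rule among those $D$ is a closed redex of with $r < \rho$ (one exists, since $r < r'$ and the rule set is finite). Then $\rho$ is enabled for $D$: were there a head-preserving $R$-reduction $D \msred D^{\ast}$ followed by an enabled $\rho''$-rewrite at $D^{\ast}$ with $\rho < \rho''$, the lemma would make $D$ a closed $\rho''$-redex, contradicting the maximality of $\rho$. Hence the empty $R$-reduction $D \msred D$ together with this enabled $\rho$-rewrite witnesses, via $r < \rho$, that the $r$-rewrite at $D$ is not in $R$; that is, $r$ is not enabled for $D$. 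This establishes both directions of the biconditional.

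The main obstacle is the structural lemma. A head-preserving reduction from $\effop{\alpha}(D')$ takes place entirely inside $D'$, so the question is how the rules that can act inside $D'$ --- the axioms of \DLA\ as rewrite rules, together with nested effect and yield rewrites --- can alter, modulo \ACI, the multiset of $\dlcom$-summands of $D'$, and whether such an alteration can newly expose one of the finitely many shapes of left-hand side of $\effop{\alpha}$-rules (a $\dlcom$ of a variable with prescribed atomic sub-linkages, or one of the $\atobj$-guarded patterns of the rules for obtaining a fresh atomic object). The lemma is immediate when $D'$ is already a basic term over \DLA: by Theorem~\ref{theorem-normal-form-dld} it is then a normal form modulo \ACI, $D$ admits no head-preserving reduction at all, and Part~2 reduces to the purely syntactic fact that $r$ is enabled for $D$ precisely when $D$ matches no $\effop{\alpha}$-rule of strictly higher priority. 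The delicate case is a reduction that makes an atomic sub-linkage surface as a top-level $\dlcom$-summand --- typically by contracting a $\dlori$ against $\emptydl$ or distributing $\dlori$ over $\dlcom$ --- and I would handle it by reducing to the basic-term case, using the confluence and strong normalisation modulo \ACI\ from Theorem~\ref{theorem-normal-form-dld} to transport closed-redex membership along the inner reductions.
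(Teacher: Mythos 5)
The paper's own proof is a one-line appeal to the definitions, so there is no detailed argument to measure yours against; your attempt is valuable in that it makes the intended structure explicit, and it is correct in the cases the authors presumably have in mind. Part~1 is fine, the backward direction of Part~2 is essentially fine (and can even be made independent of your lemma: if the $r'$-rewrite at $D$ is itself not in $R$, transitivity of $<$ lets you reuse whatever blocks it to block the $r$-rewrite as well), and the whole of Part~2 is fine whenever $D'$ is a basic term over \DLA, since then $D$ admits no head-preserving reduction and everything collapses to syntactic pattern matching.

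The genuine gap is the structural lemma on which the forward direction of Part~2 rests, and it is not merely unproved in the ``delicate case'' you flag --- it is false there. Take $D \equiv \effop{\clrspot{s}}(\emptydl \dlcom (\emptydl \dlori \slinkp{s}{a}))$. This is a closed redex only of the lowest-priority rule $\effop{\clrspot{s}}(X) = X$; it is not a redex, even modulo \ACI, of the priority-2 rule $\effop{\clrspot{s}}(X \dlcom \slinkp{s}{a}) = X$, because \ACI\ concerns only $\dlcom$ and cannot dissolve the $\dlori$. Yet the head-preserving $R$-reduction contracting $\emptydl \dlori \slinkp{s}{a}$ yields $D^\ast \equiv \effop{\clrspot{s}}(\emptydl \dlcom \slinkp{s}{a})$, which \emph{is} a closed redex of the priority-2 rule. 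So a head-preserving reduction can create a higher-priority redex that was absent at $D$, and closed-redex membership cannot be transported backwards along inner reductions; confluence and strong normalization modulo \ACI\ (Theorem~\ref{theorem-normal-form-dld}) give you no handle on this, because they say nothing about redexes of the \emph{priority-ordered} rules being reflected from reducts to their ancestors. Indeed, for this $D$ the lowest-priority rule is \emph{not} enabled (the reduction to $D^\ast$ followed by the enabled priority-2 rewrite blocks it), while $D$ is not a closed redex of any higher-priority rule --- so the biconditional of Part~2, argued your way, would fail. To close the gap you must either restrict $D'$ to \ACI-normal forms (equivalently, basic terms, which is the situation produced by Theorem~\ref{theorem-normal-form-dld} and Corollary~\ref{corollary-eqv-PRS} and is where your argument already works), or reformulate the right-hand side of Part~2 in terms of redexes reachable by head-preserving reductions rather than redexes of $D$ itself.
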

\begin{proof}
This follows immediately from the definition of the one-step reduction
relation determined by a priority rewrite system and the priority
rewrite system for \DLD.
\qed
\end{proof}

We have an important corollary of Theorem~\ref{theorem-normal-form-dld},
Corollary~\ref{corollary-eqv-PRS}, and 
Proposition~\ref{prop-enabledness}.
\begin{corollary}[Decidability of reduction relation]
\label{corollary-decidability}
The one-step reduction modulo \ACI\ relation determined by the priority
rewrite system for \DLD\ is decidable.
\end{corollary}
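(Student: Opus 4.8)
The plan is to turn the three preceding results into an explicit decision procedure. First I would apply Corollary~\ref{corollary-eqv-PRS} to replace the priority rewrite system for \DLD\ by the equivalent one in which $L$ is restricted to the normal forms with respect to reduction modulo \ACI. By Theorem~\ref{theorem-normal-form-dld} these normal forms are basic terms over \DLA\ together with $\True$ and $\False$, and since $\Spot$, $\Field$, $\AtObj$ and $\Value$ are all finite there are only finitely many of them modulo \ACI; as the remaining meta-variables of the rule schemata ($s$, $t$, $u$, $f$, $a$, $b$, $c$, $d$, $n$, $m$) also range over finite sets, this equivalent system has only finitely many rewrite rules, with $X$, $Y$, $Z$ as their only genuine variables. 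So it suffices to decide the one-step reduction relation modulo \ACI\ of a priority rewrite system with finitely many rules.

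For such a finite system I would rely on two standard facts: matching modulo \ACI\ is decidable and, for a fixed left-hand side and a given subject term, finitary (the subterms substituted for $X$, $Y$, $Z$ being bounded by the subject); and \ACI-equality of closed terms is decidable. Combining these with Proposition~\ref{prop-enabledness} makes enabledness decidable: for a closed $r$-redex $D$ that is not of the form $\effop{\alpha}(D')$ or $\yldop{\alpha}(D')$ the rule $r$ is simply enabled, and otherwise $r$ is enabled exactly when $D$ is not a closed $r'$-redex for any of the finitely many rewrite rules $r'$ with $r < r'$ --- a finite conjunction of decidable matching tests. Hence, given any closed redex $D$, one can compute the finite set of pairs $(r,D')$ such that $D \osred D'$ is an enabled $r$-rewrite.

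Finally I would assemble the procedure. Given closed \DLD\ terms $t$ and $s$, one first takes a representative of the \ACI-class of $t$ with no superfluous $\emptydl$-summands; up to \ACI\ this representative, and hence $t$, has only finitely many decompositions $C[D]$ into a closed context $C$ and a closed redex $D$, and they can be enumerated. For each such $D$ one computes, as above, the enabled rewrites $D \osred D'$, obtaining a finite set of one-step reducts $C[D']$ of $t$. Since $\osredaci$ is precisely the closure of the enabled rewrites under closed contexts composed with \ACI-equality on both sides, $t \osredaci s$ holds if and only if $s$ is \ACI-equal to one of these finitely many reducts, which is decidable.

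The essential obstacle in deciding a priority rewrite relation --- the apparent circularity, whereby whether a rewrite is enabled is defined by quantifying over the reduction relation being defined --- has \emph{already} been removed by Proposition~\ref{prop-enabledness}, which replaces it by a purely syntactic side condition; and it is made effective by Corollary~\ref{corollary-eqv-PRS} together with Theorem~\ref{theorem-normal-form-dld}, which bound the set of competing higher-priority rules to a finite one. What remains is the routine bookkeeping of rewriting modulo \ACI\ with finitely many rules, the one point needing care being the unit axiom, so that $t$ is not regarded as having infinitely many distinct positions.
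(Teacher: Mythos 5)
Your argument is correct and is exactly the route the paper intends: the corollary is stated as a consequence of Theorem~\ref{theorem-normal-form-dld}, Corollary~\ref{corollary-eqv-PRS}, and Proposition~\ref{prop-enabledness}, and you combine these in precisely the expected way (finitely many rules after restricting $L$ to the finitely many normal forms modulo \ACI, enabledness reduced to finitely many decidable matching tests, and finitely many one-step reducts modulo \ACI). The paper leaves the assembly implicit, so your write-up is simply a more explicit version of the same proof.
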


\section{Reclaiming Garbage in Data Linkage Dynamics}
\label{sect-garb-coll}

Atomic objects that are not reachable via spots and fields can be
reclaimed.
Reclamation of unreachable atomic objects is relevant because the set
$\AtObj$ of atomic objects is finite.
There are various ways to achieve reclamation of unreachable atomic
objects.
In this section, we introduce some of them.

Data linkage dynamics has the following reclamation-related actions:
\begin{itemize}
\item
a \emph{full garbage collection action} $\fgc$;
\item
a \emph{restricted garbage collection action} $\rgc$;
\item
for each $s \in \Spot$,
a \emph{get fresh atomic object action with safe disposal}
$\sdgetatobj{s}$;
\item
for each $s,t \in \Spot$,
a \emph{set spot action with safe disposal} $\sdsetspot{s}{t}$;
\item
for each $s \in \Spot$,
a \emph{clear spot action with safe disposal} $\sdclrspot{s}$;
\item
for each $s,t \in \Spot, f \in \Field$,
a \emph{set field action with safe disposal} $\sdsetfield{s}{f}{t}$;
\item
for each $s \in \Spot, f \in \Field$,
a \emph{clear field action with safe disposal} $\sdclrfield{s}{f}$;
\item
for each $s,t \in \Spot, f \in \Field$,
a \emph{get field action with safe disposal} $\sdgetfield{s}{t}{f}$;
\item
for each $s \in \Spot$,
a \emph{get fresh atomic object action with unsafe disposal}
$\udgetatobj{s}$;
\item
for each $s,t \in \Spot$,
a \emph{set spot action with unsafe disposal} $\udsetspot{s}{t}$;
\item
for each $s \in \Spot$,
a \emph{clear spot action with unsafe disposal} $\udclrspot{s}$;
\item
for each $s,t \in \Spot, f \in \Field$,
a \emph{set field action with unsafe disposal} $\udsetfield{s}{f}{t}$;
\item
for each $s \in \Spot,\hsp{-.07} f \in \Field$,\hsp{-.07}
a \emph{clear field action with unsafe disposal} $\udclrfield{s}{f}$;
\item
for each $s,t \in \Spot, f \in \Field$,
a \emph{get field action with unsafe disposal} $\udgetfield{s}{t}{f}$.
\end{itemize}

These reclamation-related actions of \DLD\ can be explained as follows
if all spots and fields involved in performing them are spots and fields
with regard to which the current state is locally deterministic:
\begin{itemize}
\item
$\fgc$:
all unreachable atomic objects are reclaimed, and the reply is $\True$;
\item
$\rgc$:
all unreachable atomic objects that do not occur in a cycle are
reclaimed, and the reply is $\True$;
\pagebreak[2]
\item
$\sdgetatobj{s}$, $\sdsetspot{s}{t}$, $\sdclrspot{s}$,
$\sdsetfield{s}{f}{t}$, $\sdclrfield{s}{f}$, and $\sdgetfield{s}{t}{f}$:
like $\getatobj{s}$, $\setspot{s}{t}$, $\clrspot{s}$, 
$\setfield{s}{f}{t}$, $\clrfield{s}{f}$, and $\getfield{s}{t}{f}$, but 
followed by the reclamation of the old content of the spot or field 
whose content has been replaced if it has become an unreachable atomic 
object;
\item
$\udgetatobj{s}$, $\udsetspot{s}{t}$, $\udclrspot{s}$,
$\udsetfield{s}{f}{t}$, $\udclrfield{s}{f}$, and $\udgetfield{s}{t}{f}$:
like $\getatobj{s}$, $\setspot{s}{t}$, $\clrspot{s}$, 
$\setfield{s}{f}{t}$, $\clrfield{s}{f}$, and $\getfield{s}{t}{f}$, but 
followed by the reclamation of the old content of the spot or field 
whose content has been replaced after the content of everything 
containing it has been made undefined.
\end{itemize}
If not all spots and fields involved in performing a reclamation-related 
action are spots and fields with regard to which the current state is 
locally deterministic, there is no state change and the reply is 
$\False$.

Full or restricted garbage collection can be made automatic by treating
each $\getatobj{s}$ as if it is $\fgc$ or $\rgc$ followed by
$\getatobj{s}$.

Garbage collection originates from programming languages that support
the use of dynamic data structures.
It is not only found in contemporary object-oriented programming 
languages such as Java~\cite{AGH05a,GJSB05a} and 
C\#~\cite{HTWG11a,HWG03a}, but also in historic programming languages 
such as LISP~\cite{McC60a,MAEHL62a}.
In~\cite{McC60a}, the term reclamation is used instead of garbage
collection.

The rewrite rules for full garbage collection and restricted garbage
collection are given in Table~\ref{rewrite-gc}.%
\begin{table}[!t]
\caption{Rewrite rules for full and restricted garbage collection}
\label{rewrite-gc}
\begin{eqntbl}
\begin{prucol}
\bprio{1} &
\effop{\fgc}(X) = \fgcaux(\emptydl,X)
\\
\bprio{1} &
\effop{\rgc}(X) = \rgcaux(\emptydl,X)
\\
\bprio{1} &
\yldop{\fgc}(X) = \True
\\
\bprio{1} &
\yldop{\rgc}(X) = \True
\\
\bprio{1} &
\fgcaux(X,\slinkp{s}{a} \dlcom Y) = \fgcaux(X \dlcom \slinkp{s}{a},Y)
\\
\prio{1} &
\fgcaux(X \dlcom \slinkp{s}{a},\pflinkp{a}{f} \dlcom Y) =
\fgcaux(X \dlcom \slinkp{s}{a} \dlcom \pflinkp{a}{f},Y)
\\
\prio{1} &
\fgcaux(X \dlcom \flinkp{a}{f}{b},\pflinkp{b}{g} \dlcom Y) =
\fgcaux(X \dlcom \flinkp{a}{f}{b} \dlcom \pflinkp{b}{g},Y)
\\
\prio{1} &
\fgcaux(X \dlcom \slinkp{s}{a},\flinkp{a}{f}{b} \dlcom Y) =
\fgcaux(X \dlcom \slinkp{s}{a} \dlcom \flinkp{a}{f}{b},Y)
\\
\prio{1} &
\fgcaux(X \dlcom \flinkp{a}{f}{b},\flinkp{b}{g}{c} \dlcom Y) =
\fgcaux(X \dlcom \flinkp{a}{f}{b} \dlcom \flinkp{b}{g}{c},Y)
\\
\prio{1} &
\fgcaux(X \dlcom \slinkp{s}{a},\valass{a}{n} \dlcom Y) =
\fgcaux(X \dlcom \slinkp{s}{a} \dlcom \valass{a}{n},Y)
\\
\prio{1} &
\fgcaux(X \dlcom \flinkp{a}{f}{b},\valass{b}{n} \dlcom Y) =
\fgcaux(X \dlcom \flinkp{a}{f}{b} \dlcom \valass{b}{n},Y)
\\
\prio{2} &
\fgcaux(X,Y) = X
\\
\bprio{1} &
\rgcaux(X,\slinkp{s}{a} \dlcom Y) = \rgcaux(X \dlcom \slinkp{s}{a},Y)
\\
\prio{1} &
\rgcaux(X \dlcom \slinkp{s}{a},\pflinkp{a}{f} \dlcom Y) =
\rgcaux(X \dlcom \slinkp{s}{a} \dlcom \pflinkp{a}{f},Y)
\\
\prio{1} &
\rgcaux(X \dlcom \flinkp{a}{f}{b},\pflinkp{b}{g} \dlcom Y) =
\rgcaux(X \dlcom \flinkp{a}{f}{b} \dlcom \pflinkp{b}{g},Y)
\\
\prio{1} &
\rgcaux(X,\flinkp{a}{f}{b} \dlcom \pflinkp{b}{g} \dlcom Y) =
\rgcaux(X \dlcom \pflinkp{b}{g},\flinkp{a}{f}{b} \dlcom Y)
\\
\prio{1} &
\rgcaux(X \dlcom \slinkp{s}{a},\flinkp{a}{f}{b} \dlcom Y) =
\rgcaux(X \dlcom \slinkp{s}{a} \dlcom \flinkp{a}{f}{b},Y)
\\
\prio{1} &
\rgcaux(X \dlcom \flinkp{a}{f}{b},\flinkp{b}{g}{c} \dlcom Y) =
\rgcaux(X \dlcom \flinkp{a}{f}{b} \dlcom \flinkp{b}{g}{c},Y)
\\
\prio{1} &
\rgcaux(X,\flinkp{a}{f}{b} \dlcom \flinkp{b}{g}{c} \dlcom Y) =
\rgcaux(X \dlcom \flinkp{b}{g}{c},\flinkp{a}{f}{b} \dlcom Y)
\\
\prio{1} &
\rgcaux(X \dlcom \slinkp{s}{a},\valass{a}{n} \dlcom Y) =
\rgcaux(X \dlcom \slinkp{s}{a} \dlcom \valass{a}{n},Y)
\\
\prio{1} &
\rgcaux(X \dlcom \flinkp{a}{f}{b},\valass{b}{n} \dlcom Y) =
\rgcaux(X \dlcom \flinkp{a}{f}{b} \dlcom \valass{b}{n},Y)
\\
\prio{1} &
\rgcaux(X,\flinkp{a}{f}{b} \dlcom \valass{b}{n} \dlcom Y) =
\rgcaux(X \dlcom \valass{b}{n},\flinkp{a}{f}{b} \dlcom Y)
\\
\prio{1} &
\rgcaux(X,\emptydl) = X
\\
\eprio{2} &
\rgcaux(X,Y) = \rgcaux(\emptydl,X)
\end{prucol}
\end{eqntbl}
\end{table}
In this table, $s$ stands for an arbitrary spot from $\Spot$,
$f$ and $g$ stand for arbitrary fields from $\Field$,
$a$, $b$, $c$ and $d$ stand for arbitrary atomic objects from $\AtObj$,
and $n$ stand for an arbitrary value from $\Value$.

The operators $\effop{\fgc}$ and $\effop{\rgc}$ are described using the
auxiliary operators $\fgcaux$ and $\rgcaux$, respectively.
We mention that in $\fgcaux(L,L')$:
\begin{itemize}
\item
$L \dlcom L'$ is the data linkage on which full garbage collection is
carried out;
\item
all atomic objects found in $L$ are already known to be reachable;
\item
if all atomic objects found in $L'$ are unreachable, then $L$ is the
result of full garbage collection on $L \dlcom L'$.
\end{itemize}
We mention that in $\rgcaux(L,L')$:
\begin{itemize}
\item
$L \dlcom L'$ is the data linkage on which removal of all links from and
value associations with atomic objects that have a reference count equal
to zero is carried out repeatedly until this is no longer possible (the
reference count of an atomic object is the number of links to that
atomic object);
\item
all atomic objects found in $L$ are already known to have a reference
count greater than zero;
\item
if all atomic objects found in $L'$ have a reference count equal to
zero, then $L$ is the result of removing all links from and value
associations with atomic objects that have a reference count equal to
zero from $L \dlcom L'$;
\item
atomic objects found in $L$ that have a reference count greater than
zero in $L \dlcom L'$ may have a reference count equal to zero in $L$.
\end{itemize}
\sloppy
It is striking that the description of restricted garbage collection by
means of rewrite rules with priorities is more complicated than the
description of full garbage collection by means of rewrite rules with
priorities.

The rewrite rules for safe disposal and unsafe disposal are given in
Table~\ref{rewrite-disp}.%
\begin{table}[p]
\caption{Rewrite rules for safe and unsafe disposal}
\label{rewrite-disp}
\begin{eqntbl}
\begin{sprucol}
\bprio{1} &
\effop{\sdgetatobj{s}}(X \dlcom \slinkp{s}{a}) =
\sdaux{a}(\emptydl,\effop{\getatobj{s}}(X \dlcom \slinkp{s}{a}))
\\
\prio{2} &
\effop{\sdgetatobj{s}}(X) = \effop{\getatobj{s}}(X)
\\
\bprio{1} &
\effop{\sdsetspot{s}{t}}(X \dlcom \slinkp{s}{a}) =
\sdaux{a}(\emptydl,\effop{\setspot{s}{t}}(X \dlcom \slinkp{s}{a}))
\\
\prio{2} &
\effop{\sdsetspot{s}{t}}(X) = \effop{\setspot{s}{t}}(X)
\\
\bprio{1} &
\effop{\sdclrspot{s}}(X \dlcom \slinkp{s}{a}) =
\sdaux{a}(\emptydl,\effop{\clrspot{s}}(X \dlcom \slinkp{s}{a}))
\\
\prio{2} &
\effop{\sdclrspot{s}}(X) = \effop{\clrspot{s}}(X)
\\
\bprio{1} &
\multicolumn{2}{@{}l@{}}
{\effop{\sdsetfield{s}{f}{t}}
  (X \dlcom \slinkp{s}{a} \dlcom \flinkp{a}{f}{b}) =
 \sdaux{b}
  (\emptydl,
   \effop{\setfield{s}{f}{t}}
    (X \dlcom \slinkp{s}{a} \dlcom \flinkp{a}{f}{b}))}
\\
\prio{2} &
\effop{\sdsetfield{s}{f}{t}}(X) = \effop{\setfield{s}{f}{t}}(X)
\\
\bprio{1} &
\multicolumn{2}{@{}l@{}}
{\effop{\sdclrfield{s}{f}}
  (X \dlcom \slinkp{s}{a} \dlcom \flinkp{a}{f}{b}) =
 \sdaux{b}
  (\emptydl,
   \effop{\clrfield{s}{f}}
    (X \dlcom \slinkp{s}{a} \dlcom \flinkp{a}{f}{b}))}
\\
\prio{2} &
\effop{\sdclrfield{s}{f}}(X) = \effop{\clrfield{s}{f}}(X)
\\
\bprio{1} &
\effop{\sdgetfield{s}{t}{f}}(X \dlcom \slinkp{s}{a}) =
\sdaux{a}(\emptydl,\effop{\getfield{s}{t}{f}}(X \dlcom \slinkp{s}{a}))
\\
\prio{2} &
\effop{\sdgetfield{s}{t}{f}}(X) = \effop{\getfield{s}{t}{f}}(X)
\\
\bprio{1} &
\effop{\udgetatobj{s}}(X \dlcom \slinkp{s}{a} \dlcom \slinkp{s}{b}) =
X \dlcom \slinkp{s}{a} \dlcom \slinkp{s}{b}
 & \mif a \neq b
\\
\prio{2} &
\effop{\udgetatobj{s}}(X \dlcom \slinkp{s}{a}) =
\sdaux{a}
 (\emptydl,\clraux{a}(\effop{\getatobj{s}}(X \dlcom \slinkp{s}{a})))
\\
\prio{3} &
\effop{\udgetatobj{s}}(X) = \effop{\getatobj{s}}(X)
\\
\bprio{1} &
\effop{\udsetspot{s}{t}}(X \dlcom \slinkp{s}{a} \dlcom \slinkp{s}{b}) =
X \dlcom \slinkp{s}{a} \dlcom \slinkp{s}{b}
 & \mif a \neq b
\\
\prio{1} &
\effop{\udsetspot{s}{t}}(X \dlcom \slinkp{t}{a} \dlcom \slinkp{t}{b}) =
X \dlcom \slinkp{t}{a} \dlcom \slinkp{t}{b}
 & \mif a \neq b
\\
\prio{2} &
\effop{\udsetspot{s}{t}}(X \dlcom \slinkp{s}{a}) =
\sdaux{a}
 (\emptydl,\clraux{a}(\effop{\setspot{s}{t}}(X \dlcom \slinkp{s}{a})))
\\
\prio{3} &
\effop{\udsetspot{s}{t}}(X) = \effop{\setspot{s}{t}}(X)
\\
\bprio{1} &
\effop{\udclrspot{s}}(X \dlcom \slinkp{s}{a} \dlcom \slinkp{s}{b}) =
X \dlcom \slinkp{s}{a} \dlcom \slinkp{s}{b}
 & \mif a \neq b
\\
\prio{2} &
\effop{\udclrspot{s}}(X \dlcom \slinkp{s}{a}) =
\sdaux{a}
 (\emptydl,\clraux{a}(\effop{\clrspot{s}}(X \dlcom \slinkp{s}{a})))
\qquad\qquad
\\
\prio{3} &
\effop{\udclrspot{s}}(X) = \effop{\clrspot{s}}(X)
\\
\bprio{1} &
\effop{\udsetfield{s}{f}{t}}
 (X \dlcom \slinkp{s}{a} \dlcom \slinkp{s}{b}) =
X \dlcom \slinkp{s}{a} \dlcom \slinkp{s}{b}
 & \mif a \neq b
\\
\prio{1} &
\effop{\udsetfield{s}{f}{t}}
 (X \dlcom \slinkp{s}{a} \dlcom \flinkp{a}{f}{b} \dlcom \flinkp{a}{f}{c}) =
X \dlcom \slinkp{s}{a} \dlcom \flinkp{a}{f}{b} \dlcom \flinkp{a}{f}{c}
 & \mif b \neq c
\\
\prio{1} &
\effop{\udsetfield{s}{f}{t}}
 (X \dlcom \slinkp{s}{a} \dlcom \flinkp{a}{f}{b} \dlcom \pflinkp{a}{f}) =
X \dlcom \slinkp{s}{a} \dlcom \flinkp{a}{f}{b} \dlcom \pflinkp{a}{f}
\\
\prio{1} &
\effop{\udsetfield{s}{f}{t}}
 (X \dlcom \slinkp{t}{a} \dlcom \slinkp{t}{b}) =
X \dlcom \slinkp{t}{a} \dlcom \slinkp{t}{b}
 & \mif a \neq b
\\
\prio{2} &
\effop{\udsetfield{s}{f}{t}}
 (X \dlcom \slinkp{s}{a} \dlcom \flinkp{a}{f}{b}) = 
\sdaux{b}
 (\emptydl,
  \clraux{b}
   (\effop{\setfield{s}{f}{t}}
     (X \dlcom \slinkp{s}{a} \dlcom \flinkp{a}{f}{b}))) \hsp{-3}
\\
\prio{3} &
\effop{\udsetfield{s}{f}{t}}(X) = \effop{\setfield{s}{f}{t}}(X)
\\
\bprio{1} &
\effop{\udclrfield{s}{f}}
 (X \dlcom \slinkp{s}{a} \dlcom \slinkp{s}{b}) =
X \dlcom \slinkp{s}{a} \dlcom \slinkp{s}{b}
 & \mif a \neq b
\\
\prio{1} &
\effop{\udclrfield{s}{f}}
 (X \dlcom \slinkp{s}{a} \dlcom \flinkp{a}{f}{b} \dlcom \flinkp{a}{f}{c}) =
X \dlcom \slinkp{s}{a} \dlcom \flinkp{a}{f}{b} \dlcom \flinkp{a}{f}{c}
 & \mif b \neq c
\\
\prio{1} &
\effop{\udclrfield{s}{f}}
 (X \dlcom \slinkp{s}{a} \dlcom \flinkp{a}{f}{b} \dlcom \pflinkp{a}{f}) =
X \dlcom \slinkp{s}{a} \dlcom \flinkp{a}{f}{b} \dlcom \pflinkp{a}{f}
\\
\prio{2} &
\effop{\udclrfield{s}{f}}
 (X \dlcom \slinkp{s}{a} \dlcom \flinkp{a}{f}{b}) = 
\sdaux{b}
 (\emptydl,
  \clraux{b}
   (\effop{\clrfield{s}{f}}
     (X \dlcom \slinkp{s}{a} \dlcom \flinkp{a}{f}{b}))) \hsp{-3}
\\
\prio{3} &
\effop{\udclrfield{s}{f}}(X) = \effop{\clrfield{s}{f}}(X)
\\
\bprio{1} &
\effop{\udgetfield{s}{t}{f}}
 (X \dlcom \slinkp{s}{a} \dlcom \slinkp{s}{b}) =
X \dlcom \slinkp{s}{a} \dlcom \slinkp{s}{b}
 & \mif a \neq b
\\
\prio{1} &
\effop{\udgetfield{s}{t}{f}}
 (X \dlcom \slinkp{t}{a} \dlcom \slinkp{t}{b}) =
X \dlcom \slinkp{t}{a} \dlcom \slinkp{t}{b}
 & \mif a \neq b
\\
\prio{1} &
\effop{\udgetfield{s}{t}{f}}
 (X \dlcom \slinkp{t}{a} \dlcom \flinkp{a}{f}{b} \dlcom \flinkp{a}{f}{c}) =
X \dlcom \slinkp{t}{a} \dlcom \flinkp{a}{f}{b} \dlcom \flinkp{a}{f}{c}
 & \mif b \neq c
\\
\prio{1} &
\effop{\udgetfield{s}{t}{f}}
 (X \dlcom \slinkp{t}{a} \dlcom \flinkp{a}{f}{b} \dlcom \pflinkp{a}{f}) =
X \dlcom \slinkp{t}{a} \dlcom \flinkp{a}{f}{b} \dlcom \pflinkp{a}{f}
\\
\prio{2} &
\effop{\udgetfield{s}{t}{f}}(X \dlcom \slinkp{s}{a}) =
\sdaux{a}
 (\emptydl,
  \clraux{a}(\effop{\getfield{s}{t}{f}}(X \dlcom \slinkp{s}{a})))
\\
\eprio{3} &
\effop{\udgetfield{s}{t}{f}}(X) = \effop{\getfield{s}{t}{f}}(X)
\end{sprucol}
\end{eqntbl}
\end{table}
\addtocounter{table}{-1}%
\begin{table}[p]
\caption{(Continued)}
\begin{eqntbl}
\begin{sprucol}
\bprio{1} &
\yldop{\sdgetatobj{s}}(X) = \yldop{\getatobj{s}}(X)
\\
\bprio{1} &
\yldop{\sdsetspot{s}{t}}(X) = \yldop{\setspot{s}{t}}(X)
\\
\bprio{1} &
\yldop{\sdclrspot{s}}(X) = \yldop{\clrspot{s}}(X)
\\
\bprio{1} &
\yldop{\sdsetfield{s}{f}{t}}(X) = \yldop{\setfield{s}{f}{t}}(X)
\\
\bprio{1} &
\yldop{\sdclrfield{s}{f}}(X) = \yldop{\clrfield{s}{f}}(X)
\\
\bprio{1} &
\yldop{\sdgetfield{s}{t}{f}}(X) = \yldop{\getfield{s}{t}{f}}(X)
\\
\bprio{1} &
\yldop{\udgetatobj{s}}(X) = \yldop{\getatobj{s}}(X)
\\
\bprio{1} &
\yldop{\udsetspot{s}{t}}(X) = \yldop{\setspot{s}{t}}(X)
\\
\bprio{1} &
\yldop{\udclrspot{s}}(X) = \yldop{\clrspot{s}}(X)
\\
\bprio{1} &
\yldop{\udsetfield{s}{f}{t}}(X) = \yldop{\setfield{s}{f}{t}}(X)
\\
\bprio{1} &
\yldop{\udclrfield{s}{f}}(X) = \yldop{\clrfield{s}{f}}(X)
\\
\bprio{1} &
\yldop{\udgetfield{s}{t}{f}}(X) = \yldop{\getfield{s}{t}{f}}(X)
\\
\bprio{1} &
\sdaux{d}(X,\slinkp{s}{a} \dlcom Y) =
\sdaux{d}(X \dlcom \slinkp{s}{a},Y)
\\
\prio{1} &
\sdaux{d}(X \dlcom \slinkp{s}{a},\pflinkp{a}{f} \dlcom Y) =
\sdaux{d}(X \dlcom \slinkp{s}{a} \dlcom \pflinkp{a}{f},Y)
\\
\prio{1} &
\sdaux{d}(X \dlcom \flinkp{a}{f}{b},\pflinkp{b}{g} \dlcom Y) =
\sdaux{d}(X \dlcom \flinkp{a}{f}{b} \dlcom \pflinkp{b}{g},Y)
\\
\prio{1} &
\sdaux{d}(X \dlcom \slinkp{s}{a},\flinkp{a}{f}{b} \dlcom Y) =
\sdaux{d}(X \dlcom \slinkp{s}{a} \dlcom \flinkp{a}{f}{b},Y)
\\
\prio{1} &
\sdaux{d}(X \dlcom \flinkp{a}{f}{b},\flinkp{b}{g}{c} \dlcom Y) =
\sdaux{d}(X \dlcom \flinkp{a}{f}{b} \dlcom \flinkp{b}{g}{c},Y)
\\
\prio{1} &
\sdaux{d}(X \dlcom \slinkp{s}{a},\valass{a}{n} \dlcom Y) =
\sdaux{d}(X \dlcom \slinkp{s}{a} \dlcom \valass{a}{n},Y)
\\
\prio{1} &
\sdaux{d}(X \dlcom \flinkp{a}{f}{b},\valass{b}{n} \dlcom Y) =
\sdaux{d}(X \dlcom \flinkp{a}{f}{b} \dlcom \valass{b}{n},Y)
\\
\prio{2} &
\sdaux{d}(X \dlcom \slinkp{s}{d},\flinkp{a}{f}{d} \dlcom Y) =
\sdaux{d}(X \dlcom \slinkp{s}{d} \dlcom \flinkp{a}{f}{d},Y)
\\
\prio{2} &
\sdaux{d}(X \dlcom \flinkp{a}{f}{d},\flinkp{b}{f}{d} \dlcom Y) =
\sdaux{d}(X \dlcom \flinkp{a}{f}{d} \dlcom \flinkp{b}{f}{d},Y)
\hsp{.25}
\\
\prio{2} &
\sdaux{d}(X \dlcom \slinkp{s}{d},\valass{d}{n} \dlcom Y) =
\sdaux{d}(X \dlcom \slinkp{s}{d} \dlcom \valass{d}{n},Y)
\\
\prio{2} &
\sdaux{d}(X \dlcom \flinkp{a}{f}{d},\valass{d}{n} \dlcom Y) =
\sdaux{d}(X \dlcom \flinkp{a}{f}{d} \dlcom \valass{d}{n},Y)
\\
% !
\prio{3} &
\sdaux{d}(X,\pflinkp{a}{f} \dlcom Y) =
\sdaux{d}(X \dlcom \pflinkp{a}{f},Y) 
 & \mif a \neq d \phantom{{} \Land b \neq d}
\\
\prio{3} &
\sdaux{d}(X,\flinkp{a}{f}{b} \dlcom Y) =
\sdaux{d}(X \dlcom \flinkp{a}{f}{b},Y)
 & \mif a \neq d \Land b \neq d
\\
\prio{3} &
\sdaux{d}(X,\valass{a}{n} \dlcom Y) =
\sdaux{d}(X \dlcom \valass{a}{n},Y)
 & \mif a \neq d \phantom{{} \Land b \neq d}
\\
\prio{4} &
\sdaux{d}(X,Y) = X
\\
\bprio{1} &
\clraux{d}(X \dlcom \slinkp{s}{d}) = \clraux{d}(X)
\\
\prio{1} &
\clraux{d}(X \dlcom \slinkp{s}{a}) =
\clraux{d}(X) \dlcom \slinkp{s}{a}
      & \mif a \neq d \phantom{{} \Land b \neq d}
\\
\prio{1} &
\clraux{d}(X \dlcom \pflinkp{a}{f}) =
\clraux{d}(X) \dlcom \pflinkp{a}{f}
\\
\prio{1} &
\clraux{d}(X \dlcom \flinkp{a}{f}{d}) =
\clraux{d}(X) \dlcom \pflinkp{a}{f}
\\
\prio{1} &
\clraux{d}(X \dlcom \flinkp{a}{f}{b}) =
\clraux{d}(X) \dlcom \flinkp{a}{f}{b}
 & \mif b \neq d \phantom{{} \Land b \neq d}
\\
\eprio{1} &
\clraux{d}(X \dlcom \valass{a}{n}) =
\clraux{d}(X) \dlcom \valass{a}{n}
\end{sprucol}
\end{eqntbl}
\end{table}%
In this table, $s$ and $t$ stand for arbitrary spots from $\Spot$,
$f$ and $g$ stand for arbitrary fields from $\Field$,
$a$, $b$, $c$ and $d$ stand for arbitrary atomic objects from $\AtObj$,
and $n$ stand for an arbitrary value from $\Value$.
\pagebreak[2]

The operators $\effop{\sdgetatobj{s}}$, $\effop{\sdsetspot{s}{t}}$,
$\effop{\sdclrspot{s}}$, $\effop{\sdsetfield{s}{f}{t}}$,
$\effop{\sdclrfield{s}{f}}$, $\effop{\sdgetfield{s}{t}{f}}$,
$\effop{\udgetatobj{s}}$, $\effop{\udsetspot{s}{t}}$,
$\effop{\udclrspot{s}}$, $\effop{\udsetfield{s}{f}{t}}$,
$\effop{\udclrfield{s}{f}}$ and $\effop{\udgetfield{s}{t}{f}}$ are
described using auxiliary operators $\sdaux{a}$ and $\clraux{a}$
($a \in \AtObj$).
Indeed, $\sdaux{a}$ deals with safe disposal of atomic object $a$ and 
$\clraux{a}$ deals with making the content of everything containing $a$ 
undefined.
Carrying out safe disposal of $a$ on $\clraux{a}(L)$ amounts to the same
thing as carrying out unsafe disposal of $a$ on $L$.
We mention that in $\sdaux{a}(L,L')$:
\begin{itemize}
\item
$L \dlcom L'$ is the data linkage on which safe disposal of $a$ is
carried out;
\item
all atomic objects found in $L$ are already known not to be involved in
the safe disposal of $a$;
\item
links and value associations are moved from $L'$ to $L$ in stages as
follows:
\begin{itemize}
\item
first, all links and value associations that make up the reachable part
of $L \dlcom L'$ are moved from $L'$ to $L$,
\item
next, all links to $a$ and value associations with $a$ are moved from
$L'$ to $L$ if $a$ is found in $L$,
\item
finally, all links and value associations in which $a$ is not involved
are moved from $L'$ to $L$;
\end{itemize}
\item
if all atomic objects found in $L'$ are involved in the safe disposal of
$a$, then $L$ is the result of safe disposal of $a$ on $L \dlcom L'$.
\end{itemize}
We mention further that $\clraux{a}(L)$ removes spot links to $a$ from $L$ and
and replaces field links to $a$ by partial field links.

\DLD-R is \DLD\ extended with the reclamation features introduced above.
The priority rewrite system of \DLD-R consists of the rewrite rules from
the priority rewrite system for \DLD\ and the rewrite rules given in 
Tables~\ref{rewrite-gc} and~\ref{rewrite-disp}.
Each of the rewrite rules of \DLD\ is incomparable with each of the
additional rewrite rules.
Moreover, additional rewrite rules in different tables are incomparable.

For \DLD-R, the set of effect terms and the set of yield terms can be
defined like for \DLD.
Theorem~\ref{theorem-normal-form-dld} and
Proposition~\ref{prop-enabledness} go through for \DLD-R.
Moreover, the enabledness of rewrite rules for the auxiliary operators
can be characterized like for the effect and yield operators.
This means that the one-step reduction relation modulo \ACI\ determined
by the priority rewrite system for \DLD-R is decidable as well.

\begin{example}
\label{example-garb-coll}
We consider a data linkage in which $\ul{2}$ and $\ul{3}$ occur as
unreachable atomic objects:
\begin{ldispl}
\slinkp{r}{\ul{0}} \dlcom
\flinkp{\ul{0}}{\nm{up}}{\ul{1}} \dlcom
\flinkp{\ul{2}}{\nm{up}}{\ul{3}} \dlcom
\flinkp{\ul{3}}{\nm{dn}}{\ul{2}}\;.
\end{ldispl}%
We use the rewrite rules for full garbage collection to obtain the
following picture of its application to this data linkage:
\pagebreak[2]
\begin{ldispl}
\effop{\fgc}
 (\slinkp{r}{\ul{0}} \dlcom
  \flinkp{\ul{0}}{\nm{up}}{\ul{1}} \dlcom
  \flinkp{\ul{2}}{\nm{up}}{\ul{3}} \dlcom
  \flinkp{\ul{3}}{\nm{dn}}{\ul{2}})
\osredaci {}
\\
\fgcaux
 (\emptydl,
  \slinkp{r}{\ul{0}} \dlcom
  \flinkp{\ul{0}}{\nm{up}}{\ul{1}} \dlcom
  \flinkp{\ul{2}}{\nm{up}}{\ul{3}} \dlcom
  \flinkp{\ul{3}}{\nm{dn}}{\ul{2}})
\osredaci {}
\\
\fgcaux
 (\slinkp{r}{\ul{0}},
  \flinkp{\ul{0}}{\nm{up}}{\ul{1}} \dlcom
  \flinkp{\ul{2}}{\nm{up}}{\ul{3}} \dlcom
  \flinkp{\ul{3}}{\nm{dn}}{\ul{2}})
\osredaci {}
\\
\fgcaux
 (\slinkp{r}{\ul{0}} \dlcom
  \flinkp{\ul{0}}{\nm{up}}{\ul{1}},
  \flinkp{\ul{2}}{\nm{up}}{\ul{3}} \dlcom
  \flinkp{\ul{3}}{\nm{dn}}{\ul{2}})
\osredaci {}
\\
\slinkp{r}{\ul{0}} \dlcom
\flinkp{\ul{0}}{\nm{up}}{\ul{1}}\;.
\end{ldispl}%
We use the rewrite rules for restricted garbage collection to obtain the
following picture of its application to the same data linkage:
\begin{ldispl}
\effop{\rgc}
 (\slinkp{r}{\ul{0}} \dlcom
  \flinkp{\ul{0}}{\nm{up}}{\ul{1}} \dlcom
  \flinkp{\ul{2}}{\nm{up}}{\ul{3}} \dlcom
  \flinkp{\ul{3}}{\nm{dn}}{\ul{2}})
\osredaci {}
\\
\rgcaux
 (\emptydl,
  \slinkp{r}{\ul{0}} \dlcom
  \flinkp{\ul{0}}{\nm{up}}{\ul{1}} \dlcom
  \flinkp{\ul{2}}{\nm{up}}{\ul{3}} \dlcom
  \flinkp{\ul{3}}{\nm{dn}}{\ul{2}})
\osredaci {}
\\
\rgcaux
 (\slinkp{r}{\ul{0}},
  \flinkp{\ul{0}}{\nm{up}}{\ul{1}} \dlcom
  \flinkp{\ul{2}}{\nm{up}}{\ul{3}} \dlcom
  \flinkp{\ul{3}}{\nm{dn}}{\ul{2}})
\osredaci {}
\\
\rgcaux
 (\slinkp{r}{\ul{0}} \dlcom
  \flinkp{\ul{0}}{\nm{up}}{\ul{1}},
  \flinkp{\ul{2}}{\nm{up}}{\ul{3}} \dlcom
  \flinkp{\ul{3}}{\nm{dn}}{\ul{2}})
\osredaci {}
\\
\rgcaux
 (\slinkp{r}{\ul{0}} \dlcom
  \flinkp{\ul{0}}{\nm{up}}{\ul{1}} \dlcom
  \flinkp{\ul{3}}{\nm{dn}}{\ul{2}},
  \flinkp{\ul{2}}{\nm{up}}{\ul{3}})
\osredaci {}
\\
\rgcaux
 (\slinkp{r}{\ul{0}} \dlcom
  \flinkp{\ul{0}}{\nm{up}}{\ul{1}} \dlcom
  \flinkp{\ul{3}}{\nm{dn}}{\ul{2}} \dlcom
  \flinkp{\ul{2}}{\nm{up}}{\ul{3}},
  \emptydl)
\osredaci {}
\\
\slinkp{r}{\ul{0}} \dlcom
\flinkp{\ul{0}}{\nm{up}}{\ul{1}} \dlcom
\flinkp{\ul{3}}{\nm{dn}}{\ul{2}} \dlcom
\flinkp{\ul{2}}{\nm{up}}{\ul{3}}\;.
\end{ldispl}%
The effect of restricted garbage collection is different because it does
not reclaim atomic objects that occur in a cycle.
\end{example}
\begin{example}
\label{example-disposal}
We consider a data linkage in which atomic object $\ul{0}$ is reachable
in different ways:
\begin{ldispl}
\slinkp{r}{\ul{0}} \dlcom
\slinkp{s}{\ul{0}} \dlcom
\flinkp{\ul{0}}{\nm{up}}{\ul{1}} \dlcom
\slinkp{t}{\ul{2}} \dlcom
\flinkp{\ul{2}}{\nm{up}}{\ul{3}}\;.
\end{ldispl}%
We use the rewrite rules for set spot with safe disposal to obtain the
following picture of its application to this data linkage:
\begin{ldispl}
\effop{\sdsetspot{s}{t}}
 (\slinkp{r}{\ul{0}} \dlcom
  \slinkp{s}{\ul{0}} \dlcom
  \flinkp{\ul{0}}{\nm{up}}{\ul{1}} \dlcom
  \slinkp{t}{\ul{2}} \dlcom
  \flinkp{\ul{2}}{\nm{up}}{\ul{3}})
\osredaci {}
\\
\sdaux{\ul{0}}
 (\emptydl,
  \effop{\setspot{s}{t}}
   (\slinkp{r}{\ul{0}} \dlcom
    \slinkp{s}{\ul{0}} \dlcom
    \flinkp{\ul{0}}{\nm{up}}{\ul{1}} \dlcom
    \slinkp{t}{\ul{2}} \dlcom
    \flinkp{\ul{2}}{\nm{up}}{\ul{3}}))
\osredaci {}
\\
\sdaux{\ul{0}}
 (\emptydl,
  \slinkp{r}{\ul{0}} \dlcom
  \slinkp{s}{\ul{2}} \dlcom
  \flinkp{\ul{0}}{\nm{up}}{\ul{1}} \dlcom
  \slinkp{t}{\ul{2}} \dlcom
  \flinkp{\ul{2}}{\nm{up}}{\ul{3}})
\osredaci {}
\\
\sdaux{\ul{0}}
 (\slinkp{r}{\ul{0}},
  \slinkp{s}{\ul{2}} \dlcom
  \flinkp{\ul{0}}{\nm{up}}{\ul{1}} \dlcom
  \slinkp{t}{\ul{2}} \dlcom
  \flinkp{\ul{2}}{\nm{up}}{\ul{3}})
\osredaci {}
\\
\sdaux{\ul{0}}
 (\slinkp{r}{\ul{0}} \dlcom
  \slinkp{s}{\ul{2}},
  \flinkp{\ul{0}}{\nm{up}}{\ul{1}} \dlcom
  \slinkp{t}{\ul{2}} \dlcom
  \flinkp{\ul{2}}{\nm{up}}{\ul{3}})
\osredaci {}
\\
\sdaux{\ul{0}}
 (\slinkp{r}{\ul{0}} \dlcom
  \slinkp{s}{\ul{2}} \dlcom
  \flinkp{\ul{0}}{\nm{up}}{\ul{1}},
  \slinkp{t}{\ul{2}} \dlcom
  \flinkp{\ul{2}}{\nm{up}}{\ul{3}})
\osredaci {}
\\
\sdaux{\ul{0}}
 (\slinkp{r}{\ul{0}} \dlcom
  \slinkp{s}{\ul{2}} \dlcom
  \flinkp{\ul{0}}{\nm{up}}{\ul{1}} \dlcom
  \slinkp{t}{\ul{2}},
  \flinkp{\ul{2}}{\nm{up}}{\ul{3}})
\osredaci {}
\\
\sdaux{\ul{0}}
 (\slinkp{r}{\ul{0}} \dlcom
  \slinkp{s}{\ul{2}} \dlcom
  \flinkp{\ul{0}}{\nm{up}}{\ul{1}} \dlcom
  \slinkp{t}{\ul{2}} \dlcom
  \flinkp{\ul{2}}{\nm{up}}{\ul{3}},
  \emptydl)
\osredaci {}
\\
\slinkp{r}{\ul{0}} \dlcom
\slinkp{s}{\ul{2}} \dlcom
\flinkp{\ul{0}}{\nm{up}}{\ul{1}} \dlcom
\slinkp{t}{\ul{2}} \dlcom
\flinkp{\ul{2}}{\nm{up}}{\ul{3}}\;.
\end{ldispl}%
We use the rewrite rules for set spot with unsafe disposal to obtain the
following picture of its application to the same data linkage:
\pagebreak[2]
\begin{ldispl}
\effop{\udsetspot{s}{t}}
 (\slinkp{r}{\ul{0}} \dlcom
  \slinkp{s}{\ul{0}} \dlcom
  \flinkp{\ul{0}}{\nm{up}}{\ul{1}} \dlcom
  \slinkp{t}{\ul{2}} \dlcom
  \flinkp{\ul{2}}{\nm{up}}{\ul{3}})
\osredaci {}
\\
\sdaux{\ul{0}}
 (\emptydl,
  \clraux{\ul{0}}
   (\effop{\setspot{s}{t}}
     (\slinkp{r}{\ul{0}} \dlcom
      \slinkp{s}{\ul{0}} \dlcom
      \flinkp{\ul{0}}{\nm{up}}{\ul{1}} \dlcom
      \slinkp{t}{\ul{2}} \dlcom
      \flinkp{\ul{2}}{\nm{up}}{\ul{3}})))
\osredaci {}
\\
\sdaux{\ul{0}}
 (\emptydl,
  \clraux{\ul{0}}
   (\slinkp{r}{\ul{0}} \dlcom
    \slinkp{s}{\ul{2}} \dlcom
    \flinkp{\ul{0}}{\nm{up}}{\ul{1}} \dlcom
    \slinkp{t}{\ul{2}} \dlcom
    \flinkp{\ul{2}}{\nm{up}}{\ul{3}}))
\osredaci {}
\\
\sdaux{\ul{0}}
 (\emptydl,
  \clraux{\ul{0}}
   (\slinkp{s}{\ul{2}} \dlcom
    \flinkp{\ul{0}}{\nm{up}}{\ul{1}} \dlcom
    \slinkp{t}{\ul{2}} \dlcom
    \flinkp{\ul{2}}{\nm{up}}{\ul{3}}))
\osredaci {}
\\
\sdaux{\ul{0}}
 (\emptydl,
  \clraux{\ul{0}}
   (\slinkp{s}{\ul{2}} \dlcom
    \flinkp{\ul{0}}{\nm{up}}{\ul{1}} \dlcom
    \slinkp{t}{\ul{2}}) \dlcom
  \flinkp{\ul{2}}{\nm{up}}{\ul{3}})
\osredaci {}
\\
\sdaux{\ul{0}}
 (\emptydl,
  \clraux{\ul{0}}
   (\slinkp{s}{\ul{2}} \dlcom
    \flinkp{\ul{0}}{\nm{up}}{\ul{1}}) \dlcom
  \slinkp{t}{\ul{2}} \dlcom
  \flinkp{\ul{2}}{\nm{up}}{\ul{3}})
\osredaci {}
\\
\sdaux{\ul{0}}
 (\emptydl,
  \clraux{\ul{0}}(\slink{s}{\ul{2}}) \dlcom
  \flinkp{\ul{0}}{\nm{up}}{\ul{1}} \dlcom
  \slinkp{t}{\ul{2}} \dlcom
  \flinkp{\ul{2}}{\nm{up}}{\ul{3}})
\osredaci {}
\\
\sdaux{\ul{0}}
 (\emptydl,
  \slinkp{s}{\ul{2}} \dlcom
  \flinkp{\ul{0}}{\nm{up}}{\ul{1}} \dlcom
  \slinkp{t}{\ul{2}} \dlcom
  \flinkp{\ul{2}}{\nm{up}}{\ul{3}})
\osredaci {}
\\
\sdaux{\ul{0}}
 (\slinkp{s}{\ul{2}},
  \flinkp{\ul{0}}{\nm{up}}{\ul{1}} \dlcom
  \slinkp{t}{\ul{2}} \dlcom
  \flinkp{\ul{2}}{\nm{up}}{\ul{3}})
\osredaci {}
\\
\sdaux{\ul{0}}
 (\slinkp{s}{\ul{2}} \dlcom
  \slinkp{t}{\ul{2}},
  \flinkp{\ul{0}}{\nm{up}}{\ul{1}} \dlcom
  \flinkp{\ul{2}}{\nm{up}}{\ul{3}})
\osredaci {}
\\
\sdaux{\ul{0}}
 (\slinkp{s}{\ul{2}} \dlcom
  \slinkp{t}{\ul{2}} \dlcom
  \flinkp{\ul{2}}{\nm{up}}{\ul{3}},
  \flinkp{\ul{0}}{\nm{up}}{\ul{1}})
\osredaci {}
\\
\slinkp{s}{\ul{2}} \dlcom
\slinkp{t}{\ul{2}} \dlcom
\flinkp{\ul{2}}{\nm{up}}{\ul{3}}\;.
\end{ldispl}%
The effect of set spot with unsafe disposal is different because it
reclaims an atomic object irrespective of its reachability.
\end{example}

\section{Basic Thread Algebra}
\label{sect-BTA}

In this section, we review \BTA\ (Basic Thread Algebra), a form of
process algebra which is tailored to the behaviours that are produced by
deterministic sequential programs under execution.
The behaviours concerned are called \emph{threads}.

In \BTA, it is assumed that a fixed but arbitrary finite set $\BAct$ of
\emph{basic actions}, with $\Tau \not\in \BAct$, has been given.
Besides, $\Tau$ is a special basic action.
We write $\BActTau$ for $\BAct \union \set{\Tau}$.

A thread is a behaviour which consists of performing basic actions in a
sequential fashion.
Upon each basic action performed, a reply from an execution environment
determines how the thread proceeds.
The possible replies are the Boolean values $\True$ and $\False$.
Performing $\Tau$, which is considered performing an internal action, 
will always lead to the reply $\True$.

\BTA\ has one sort: the sort $\Thr$ of \emph{threads}.
To build terms of sort $\Thr$, \BTA\ has the following constants and
operators:
\begin{itemize}
\item
the \emph{inaction} constant $\const{\DeadEnd}{\Thr}$;
\item
the \emph{termination} constant $\const{\Stop}{\Thr}$;
\item
for each $\alpha \in \BActTau$,
the binary \emph{postconditional composition} operator
$\funct{\pcc{\ph}{\alpha}{\ph}}{\Thr \x \Thr}{\Thr}$.
\end{itemize}
Terms of sort $\Thr$ are built as usual.
Throughout the paper, we assume that there are infinitely many variables
of sort $\Thr$, including $x,y,z$.

We use infix notation for postconditional composition.
We introduce \emph{basic action prefixing} as an abbreviation:
$\alpha \bapf p$, where $p$ is a term of sort $\Thr$, abbreviates
$\pcc{p}{\alpha}{p}$.
We identify expressions of the form $\alpha \bapf p$ with the \BTA\ term 
they stand for.

The thread denoted by a closed term of the form $\pcc{p}{\alpha}{q}$ 
will first perform $\alpha$, and then proceed as the thread denoted by 
$p$ if the reply from the execution environment is $\True$ and proceed 
as the thread denoted by $q$ if the reply from the execution environment 
is $\False$.
The thread denoted by $\DeadEnd$ will become inactive and the thread
denoted by $\Stop$ will terminate.

\begin{example}
\label{example-BTA-terms}
Some simple examples of closed \BTA\ terms are
\begin{ldispl}
a \bapf (\pcc{\Stop}{b}{\DeadEnd})\;, \qquad
\pcc{(b \bapf \Stop)}{a}{\DeadEnd}\;.
\end{ldispl}%
The first term denotes the thread that first performs basic action $a$, 
next performs basic action $b$, if the reply from the execution 
environment on performing $b$ is $\True$, after that terminates, and if 
the reply from the execution environment on performing $b$ is $\False$, 
after that becomes inactive.
The second term denotes the thread that first performs basic action $a$, 
if the reply from the execution environment on performing $a$ is 
$\True$, next performs the basic action $b$ and after that terminates, 
and if the reply from the execution environment on performing $a$ is 
$\False$, next becomes inactive.
\end{example}

\BTA\ has only one axiom.
This axiom is given in Table~\ref{axioms-BTA}.%
\begin{table}[!t]
\caption{Axiom of \BTA}
\label{axioms-BTA}
\begin{eqntbl}
\begin{axcol}
\pcc{x}{\Tau}{y} = \pcc{x}{\Tau}{x}                      & \axiom{T1}
\end{axcol}
\end{eqntbl}
\end{table}
Using the abbreviation introduced above, axiom T1 can be written as
follows: $\pcc{x}{\Tau}{y} = \Tau \bapf x$.

Each closed \BTA\ term denotes a finite thread, i.e.\ a thread with a
finite upper bound to the number of basic actions that it can perform.
Infinite threads, i.e.\ threads without a finite upper bound to the
number of basic actions that it can perform, can be described by guarded
recursion.

A \emph{guarded recursive specification} over \BTA\ is a set of
recursion equations $\set{X = p_X \where X \in V}$, where $V$ is a
set of variables of sort $\Thr$ and each $p_X$ is a term of the form
$\DeadEnd$, $\Stop$ or $\pcc{p}{\alpha}{q}$ with $p$ and $q$
\linebreak[2] 
\BTA\ terms of sort $\Thr$ that contain only variables from $V$.
We are only interested in models of \BTA\ in which guarded recursive
specifications have unique solutions, such as the projective limit model
of \BTA\ presented in~\cite{BB03a}.

\begin{example}
\label{example-BTA+REC}
A simple example of a guarded recursive specification is the one
consisting of following two equations:
\begin{ldispl}
x = \pcc{x}{a}{y}\;, \qquad
y = \pcc{y}{b}{\Stop}\;.
\end{ldispl}%
The $x$-component of the solution of this guarded recursive 
specification is the thread that first performs basic action $a$ 
repeatedly until the reply from the execution environment on performing
$a$ is $\False$, next performs basic action $b$ repeatedly until the 
reply from the execution environment on performing $b$ is $\False$, and 
after that terminates.
\end{example}

\section{Services and Use Operators}
\label{sect-TSI}

A thread may perform a basic action for the purpose of requesting a 
named service provided by an execution environment to process a method 
and to return a reply to the thread at completion of the processing of 
the method.
In this section, we review the extension of \BTA\ with services and
operators which are concerned with this kind of interaction between 
threads and services.

It is assumed that a fixed but arbitrary finite set $\Foci$ of 
\emph{foci} has been given.
Foci play the role of names of the services provided by an execution 
environment.
It is also assumed that a fixed but arbitrary finite set $\Meth$ of
\emph{methods} has been given.
For the set $\BAct$ of basic actions, we take the set
$\set{f.m \where f \in \Foci, m \in \Meth}$.
Performing a basic action $f.m$ is taken as making a request to the
service named $f$ to process command $m$.

A service is able to process certain methods.
The processing of a method may involve a change of the service.
The reply value produced by the service at completion of the processing 
of a method is either $\True$, $\False$ or $\Blocked$.
The special reply $\Blocked$, standing for blocked, is used to deal with
the situation that a service is requested to process a method that it is 
not able to process.

\begin{example}
\label{example-service}
A simple example of a service is one that is able to process methods for 
pushing a natural number on a stack ($\push{n}$), testing whether the 
top of the stack equals a natural number ($\topeq{n}$), and popping the 
top element from the stack ($\pop$).
Processing of a pushing method or a popping method changes the service,
because it changes the stack with which it deals, and produces the reply
value $\True$ if no stack overflow or stack underflow occurs and
$\False$ otherwise.
Processing of a testing method does not change the service, because it
does not changes the stack with which it deals, and produces the reply
value $\True$ if the test succeeds and $\False$ otherwise.
Attempted processing of a method that the service is not able to process
changes the service into one that is not able to process any method and
produces the reply $\Blocked$.
\end{example}

The following is assumed with respect to services:
\begin{itemize}
\item
a signature $\Sig{\Services}$ has been given that includes the following
sorts:
\begin{itemize}
\item
the sort $\Serv$ of \emph{services};
\item
the sort $\Repl$ of \emph{replies};
\end{itemize}
and the following constants and operators:
\begin{itemize}
\item
the
\emph{empty service} constant $\const{\emptyserv}{\Serv}$;
\item
the \emph{reply} constants $\const{\True,\False,\Blocked}{\Repl}$;
\pagebreak[2]
\item
for each $m \in \Meth$, the
\emph{derived service} operator $\funct{\derive{m}}{\Serv}{\Serv}$;
\item
for each $m \in \Meth$, the
\emph{service reply} operator $\funct{\sreply{m}}{\Serv}{\Repl}$;
\end{itemize}
\item
a minimal $\Sig{\Services}$-algebra $\ServAlg$ has been given in which
$\True$, $\False$, and $\Blocked$ are mutually different, and
\begin{itemize}
\item
$\AND{m \in \Meth}{}
  \derive{m}(z) = z \Land \sreply{m}(z) = \Blocked \Limpl
  z = \emptyserv$
holds;
\item
for each $m \in \Meth$,
$\derive{m}(z) = \emptyserv \Liff \sreply{m}(z) = \Blocked$ holds.
\end{itemize}
\end{itemize}

The intuition concerning $\derive{m}$ and $\sreply{m}$ is that on a
request to service $S$ to process method $m$:
\begin{itemize}
\item
if $\sreply{m}(S) \neq \Blocked$, $S$ processes $m$, produces the reply
$\sreply{m}(S)$, and then proceeds as $\derive{m}(S)$;
\item
if $\sreply{m}(S) = \Blocked$, $S$ is not able to process method $m$ and
proceeds as $\emptyserv$.
\end{itemize}
The empty service $\emptyserv$ itself is unable to process any method.

We introduce the following additional operators:
\begin{itemize}
\item
for each $f \in \Foci$, the binary \emph{use} operator
$\funct{\use{\ph}{f}{\ph}}{\Thr \x \Serv}{\Thr}$.
\end{itemize}
We use infix notation for the use operators.

Intuitively, the thread denoted by a closed term of the form 
$\use{p}{f}{S}$ is the thread that results from processing all basic 
actions performed by thread denoted by $p$ that are of the form $f.m$ by
service $S$.
When a basic action of the form $f.m$ performed by a thread is processed 
by a service, the service changes in accordance with the method concerned
and affects the thread as follows: the basic action is turned into the 
internal action $\Tau$ and the two ways to proceed reduce to one on the 
basis of the reply value produced by the service.

The axioms for the use operators are given in Table~\ref{axioms-use}.%
\begin{table}[!t]
\caption{Axioms for use operators}
\label{axioms-use}
\begin{eqntbl}
\begin{saxcol}
\use{\Stop}{f}{S} = \Stop                            & & \axiom{TSU1} \\
\use{\DeadEnd}{f}{S} = \DeadEnd                      & & \axiom{TSU2} \\
\use{(\Tau \bapf x)}{f}{S} =
                          \Tau \bapf (\use{x}{f}{S}) & & \axiom{TSU3} \\
\use{(\pcc{x}{g.m}{y})}{f}{S} =
\pcc{(\use{x}{f}{S})}{g.m}{(\use{y}{f}{S})}
 & \mif f \neq g                                       & \axiom{TSU4} \\
\use{(\pcc{x}{f.m}{y})}{f}{S} =
\Tau \bapf (\use{x}{f}{\derive{m}(S)})
                       & \mif \sreply{m}(S) = \True    & \axiom{TSU5} \\
\use{(\pcc{x}{f.m}{y})}{f}{S} =
\Tau \bapf (\use{y}{f}{\derive{m}(S)})
                       & \mif \sreply{m}(S) = \False   & \axiom{TSU6} \\
\use{(\pcc{x}{f.m}{y})}{f}{S} = \Tau \bapf \DeadEnd
                       & \mif \sreply{m}(S) = \Blocked & \axiom{TSU7}
\end{saxcol}
\end{eqntbl}
\end{table}
In this table, $f$ and $g$ stand for arbitrary foci from $\Foci$,
$m$ stands for an arbitrary method from $\Meth$, and $S$ stands for an
arbitrary term of sort $\Serv$.
Axioms TSU3 and TSU4 express that the internal action $\Tau$ and basic 
actions of the form $g.m$, where $f \neq g$, are not processed by the 
service.
Axioms TSU5 and TSU6 express that a thread is affected by a service
as described above when a basic action of the form $f.m$ performed by 
the thread is processed by the service.
Axiom TSU7 expresses that inaction takes place when a basic action of 
the form $f.m$ performed by the thread cannot be processed by the 
service.

\begin{example}
\label{example-use}
We consider the stack services described in 
Example~\ref{example-service}.
For each sequence $\sigma$ of natural numbers, we take $\NNS(\sigma)$ 
as a constant for the stack service that deals with a stack whose 
content is represented by $\sigma$.
Provided a precise description of the stack services has been given,
axioms TSU1--TSU7 can be used to prove the following equations for all 
$\sigma$ whose size is less than the maximal stack size:
\begin{ldispl}
\use{(\nns.\push{n} \bapf x)}{\nns}{\NNS(\sigma)} =
\Tau \bapf (\use{x}{\nns}{\NNS(n \sigma)})\;,
\\
\use{(\pcc{x}{\nns.\pop}{\Stop})}{\nns}{\NNS(\epsilon)} =
\Tau \bapf \Stop\;,\footnotemark
\\
\use{(\pcc{x}{\nns.\pop}{\Stop})}{\nns}{\NNS(n \sigma)} =
\Tau \bapf (\use{x}{\nns}{\NNS(\sigma)})\;.
\end{ldispl}%
\footnotetext{We use the notation $\epsilon$ for the empty sequence.}
\end{example}

Henceforth, we write \BTAuse\ for \BTA, taking the set
$\set{f.m \where f \in \Foci, m \in \Meth}$ for $\BAct$, extended with
the use operators and the axioms from Table~\ref{axioms-use}.

\section{Thread Algebra and Data Linkage Dynamics Combined}
\label{sect-comb-TA-DLD}

The state changes and replies that result from performing the basic 
actions of data linkage dynamics can be achieved by means of services.
In this section, we explain how basic thread algebra can be combined
with data linkage dynamics by means of services and use operators such 
that the whole can be used for studying issues concerning the use of 
dynamic data structures in programming.

Recall that we write $\DL$ for the set of elements of the initial model
of $\DLA$.
It is assumed that $\dld \in \Foci$ and 
$\Act_\DLD \subseteq \Meth$.

For $\Sig{\Services}$, we take the signature that consists of the sorts,
constants and operators that are mentioned in the assumptions with 
respect to services made in Section~\ref{sect-TSI} and a constant 
$\DLDS(L)$ of sort $\Serv$ for each $L \in \DL$.

For $\ServAlg$, we take a minimal $\Sig{\Services}$-algebra that 
satisfies the conditions that are mentioned in the assumptions with 
respect to services made in Section~\ref{sect-TSI} and the following 
conditions for each $L \in \DL$:
\begin{ldispl}
\renewcommand{\arraystretch}{1.125}
\begin{gceqns}
\derive{m}(\DLDS(L)) = \DLDS(\effop{m}(L)) & \mif m \in \Act_\DLD\;, \\
\derive{m}(\DLDS(L)) = \emptyserv          & \mif m \not\in \Act_\DLD\;, 
\eqnsep
\sreply{m}(\DLDS(L)) = \yldop{m}(L)        & \mif m \in \Act_\DLD\;, \\
\sreply{m}(\DLDS(L)) = \Blocked            & \mif m \not\in \Act_\DLD\;. \\
\end{gceqns}
\renewcommand{\arraystretch}{1}
\end{ldispl}%
Note that $\ServAlg$ is unique up to isomorphism.
The elements of the interpretation of the sort $\Serv$ in $\ServAlg$ are
called \emph{data linkage dynamics services}.

By means of threads and data linkage dynamics services, we can give a 
precise picture of computations in which dynamic data structures are 
involved.

In order to represent computations, we use the binary relation
$\step{\Tau}$ on closed terms of  \BTAuse\ defined by $p \step{\Tau} q$
if and only if $p = \Tau \bapf q$.
Thus, $p \step{\Tau} q$ indicates that $p$ can perform an internal 
action and then proceed as $q$.
Moreover, for each method $\alpha \in \Act_\DLD$, we write
$(\alpha)$ instead of $\dld.\alpha$.
\begin{example}
\label{example-TA-DLD-non-value}
We consider a simple thread in which non-value-related basic actions of
\DLD\ occur:
\begin{ldispl}
(\getatobj{r}) \bapf (\getatobj{t}) \bapf
(\addfield{r}{\nm{up}}) \bapf (\addfield{t}{\nm{dn}}) \bapf
(\setfield{r}{\nm{up}}{t}) \bapf (\setfield{t}{\nm{dn}}{r}) \bapf
(\clrspot{t}) \bapf \Stop\;.
\end{ldispl}%
We use the rewrite rules of \DLD\ and axiom TSU5 from the axioms for the 
use operators to obtain the following picture of the computation of this 
thread in the case where the initial state is the empty data linkage:
\begin{small}
\begin{ldispl}
\begin{geqns}
\use
{((\getatobj{r}) \bapf (\getatobj{t}) \bapf
  (\addfield{r}{\nm{up}}) \bapf (\addfield{t}{\nm{dn}}) \bapf
  (\setfield{r}{\nm{up}}{t}) \bapf (\setfield{t}{\nm{dn}}{r}) \bapf
  (\clrspot{t}) \bapf \Stop)}{\dld}
{\DLDS(\emptydl)}
\step{\Tau} {}
\\
\use
{((\getatobj{t}) \bapf
  (\addfield{r}{\nm{up}}) \bapf (\addfield{t}{\nm{dn}}) \bapf
  (\setfield{r}{\nm{up}}{t}) \bapf (\setfield{t}{\nm{dn}}{r}) \bapf
  (\clrspot{t}) \bapf \Stop)}{\dld}
{\DLDS(\slink{r}{\ul{0}})}
\step{\Tau} {}
\\
\use
{((\addfield{r}{\nm{up}}) \bapf (\addfield{t}{\nm{dn}}) \bapf
  (\setfield{r}{\nm{up}}{t}) \bapf (\setfield{t}{\nm{dn}}{r}) \bapf
  (\clrspot{t}) \bapf \Stop)}{\dld}
{\DLDS(\slinkp{r}{\ul{0}} \dlcom \slinkp{t}{\ul{1}})}
\step{\Tau} {}
\\
\use
{((\addfield{t}{\nm{dn}}) \bapf
  (\setfield{r}{\nm{up}}{t}) \bapf (\setfield{t}{\nm{dn}}{r}) \bapf
  (\clrspot{t}) \bapf \Stop)}{\dld}
{\DLDS(\slinkp{r}{\ul{0}} \dlcom \slinkp{t}{\ul{1}} \dlcom
       \pflinkp{\ul{0}}{\nm{up}})}
\step{\Tau} {}
\\
\use
{((\setfield{r}{\nm{up}}{t}) \bapf (\setfield{t}{\nm{dn}}{r}) \bapf
  (\clrspot{t}) \bapf \Stop)}{\dld}
{\DLDS(\slinkp{r}{\ul{0}} \dlcom \slinkp{t}{\ul{1}} \dlcom
       \pflinkp{\ul{0}}{\nm{up}} \dlcom \pflinkp{\ul{1}}{\nm{dn}})}
\step{\Tau} {}
\\
\use
{((\setfield{t}{\nm{dn}}{r}) \bapf (\clrspot{t}) \bapf \Stop)}{\dld}
{\DLDS(\slinkp{r}{\ul{0}} \dlcom \slinkp{t}{\ul{1}} \dlcom
       \flinkp{\ul{0}}{\nm{up}}{\ul{1}} \dlcom
       \pflinkp{\ul{1}}{\nm{dn}})}
\step{\Tau} {}
\\
\use
{((\clrspot{t}) \bapf \Stop)}{\dld}
{\DLDS(\slinkp{r}{\ul{0}} \dlcom \slinkp{t}{\ul{1}} \dlcom
       \flinkp{\ul{0}}{\nm{up}}{\ul{1}} \dlcom
       \flinkp{\ul{1}}{\nm{dn}}{\ul{0}})}
\step{\Tau} {}
\\
\use
{\Stop}{\dld}
{\DLDS(\slinkp{r}{\ul{0}} \dlcom
       \flinkp{\ul{0}}{\nm{up}}{\ul{1}} \dlcom
       \flinkp{\ul{1}}{\nm{dn}}{\ul{0}})}\;.
\end{geqns}
\end{ldispl}%
\end{small}
\end{example}
\begin{example}
\label{example-TA-DLD-value}
We also consider a simple thread in which value-related basic actions of
\DLD\ occur:
\begin{ldispl}
(\assneg{u}{u}) \bapf (\assadd{s}{t}{u}) \bapf
(\assneg{u}{u}) \bapf \Stop\;.
\end{ldispl}%
This is a thread for calculating the difference of two values as
described in Section~\ref{sect-DLD}.
We use the rewrite rules of \DLD\ and axiom TSU5 from the axioms for the 
use operators to obtain the following picture of a computation of this 
thread:
\begin{small}
\begin{ldispl}
\hsp{-1.9}
\begin{geqns}
\use
{((\assneg{u}{u}) \bapf (\assadd{s}{t}{u}) \bapf
  (\assneg{u}{u}) \bapf \Stop)}{\dld}
{% \\ \hsp{17.5}
 \DLDS(\slinkp{s}{\ul{0}} \dlcom
       \slinkp{t}{\ul{1}} \dlcom \valass{\ul{1}}{7} \dlcom
       \slinkp{u}{\ul{2}} \dlcom \valass{\ul{2}}{3})}
\step{\Tau} {}
\\
\use
{((\assadd{s}{t}{u}) \bapf (\assneg{u}{u}) \bapf \Stop)}{\dld}
{\DLDS(\slinkp{s}{\ul{0}} \dlcom
       \slinkp{t}{\ul{1}} \dlcom \valass{\ul{1}}{7} \dlcom
       \slinkp{u}{\ul{2}} \dlcom \valass{\ul{2}}{-3})}
\step{\Tau} {}
\\
\use
{((\assneg{u}{u}) \bapf \Stop)}{\dld}
{\DLDS(\slinkp{s}{\ul{0}} \dlcom \valass{\ul{0}}{4} \dlcom
       \slinkp{t}{\ul{1}} \dlcom \valass{\ul{1}}{7} \dlcom
       \slinkp{u}{\ul{2}} \dlcom \valass{\ul{2}}{-3})}
\step{\Tau} {}
\\
\use
{\Stop}{\dld}
{\DLDS(\slinkp{s}{\ul{0}} \dlcom \valass{\ul{0}}{4} \dlcom
       \slinkp{t}{\ul{1}} \dlcom \valass{\ul{1}}{7} \dlcom
       \slinkp{u}{\ul{2}} \dlcom \valass{\ul{2}}{3})}\;.
\end{geqns}
\end{ldispl}%
\end{small}
\end{example}
These examples show that DLA\ provides a notation that enables us to get
a clear picture of the successive states of a computation.

In~\cite{BL02a},  \PGA\ (ProGram Algebra) and a hierarchy of program 
notations rooted in \PGA\ are presented.
Included in this hierarchy are very simple program notations which are
close to existing assembly languages up to and including simple program
notations that support structured programming by offering a rendering of
conditional and loop constructs.
In~\cite{BL02a}, threads that are definable by finite guarded recursive
specifications over \BTA\ are taken as the behaviours of programs 
represented by closed \PGA\ terms.
The combination of basic thread algebra and data linkage dynamics by 
means of services and use operators can be used for studying issues 
concerning the use of dynamic data structures in programming at the 
level of program behaviours.
Together with one of the program notations rooted in \PGA, this
combination can be used for studying issues concerning the use of
dynamic data structures in programming at the level of programs.

We mention one such issue.
In general terms, the issue is whether we can do without automatic
garbage collection by program transformation at the price of a linear
increase of the number of available atomic objects.
Below we phrase this issue more precisely for \PGLD, but it can be
studied using any other program notation rooted in \PGA\ as well.
\PGLD\ is close to existing assembly languages and has absolute jump
instructions.

Let $\PGLD_\dld$ be an instance of \PGLD\ in which all basic actions of
\DLD\ are available as basic instructions.
For each program $P$ from $\PGLD_\dld$, we write $\extr{P}$ for the
thread that is the behaviour of $P$ according to~\cite{BL02a}.
Let $\DLD_\mathsf{afgc}$ be the variation of \DLD\ in which all basic
actions of the form $\getatobj{s}$ are treated as if they are preceded
by $\fgc$ and let, for each $L \in \DL$, $\DLDS_\mathsf{afgc}(L)$ be
the corresponding data linkage dynamics service with initial state
$L$.
Data linkage dynamics services have the cardinality of the set $\AtObj$
of atomic objects as parameter.
We write $\DLD^n(L)$ and $\DLD^n_\mathsf{afgc}(L)$ to indicate that
the actual cardinality is $n$.
The above-mentioned issue can now be phrased as follows: for which
natural numbers $c$ and $c'$ does there exist a program transformation
that transforms each program $P$ from $\PGLD_\dld$ to a program $Q$ from
$\PGLD_\dld$ such that, for all natural numbers $n$,
$\use{\extr{P}}{\dld}{\DLDS^n_\mathsf{afgc}(\emptydl)} =
\use{\extr{Q}}{\dld}{\DLDS^{c \mul n + c'}(\emptydl)}$?

\section{Another Description of Data Linkage Dynamics}
\label{sect-DLD-alt-descr}

In this section, we describe the state changes and replies that result
from performing the basic actions of \DLD\ in the world of sets.
This alternative description is a widening of the description of the
state changes and replies that result from performing the basic actions 
of molecular dynamics that was given in~\cite{BM06c}.
In Section~\ref{sect-correctness}, we will demonstrate that the 
alternative description agrees with the description based on \DLA.
Thus, we will show the connection between molecular dynamics and the 
upgrade of it presented in the current paper.

We define sets $\nm{SS}$, $\nm{AS}_1$, $\nm{AS}_2$ and $\DLR$ as
follows:
\begin{ldispl}
\begin{aeqns}
\nm{SS}  & = & \Spot \to (\AtObj \union \set{\bot})\;,
\eqnsep
\nm{AS}_1  & = &
\Union{A \in \setof{(\AtObj)}}
 (A \to
  \Union{F \in \setof{(\Field)}} (F \to (\AtObj \union \set{\bot})))\;,
\eqnsep
\nm{AS}_2  & = &
\Union{A \in \setof{(\AtObj)}} (A \to (\Value \union \set{\bot}))\;,
\eqnsep
\DLR & = &
\set{\tup{\sigma,\zeta,\xi} \in
     \nm{SS} \x \nm{AS}_1 \x \nm{AS}_2 \where {}
\\ & & \phantom{\{{}}
     \dom(\zeta) = \dom(\xi) \Land
     \rng(\sigma) \subseteq \dom(\zeta) \union \set{\bot} \Land {}
\\ & & \phantom{\{{}}
         \Forall{a \in \dom(\zeta)}
          {(\rng(\zeta(a)) \subseteq
           \dom(\zeta) \union \set{\bot})}
    }\;.
\end{aeqns}
\end{ldispl}%

The elements of $\DLR$ can be considered representations of
deterministic data linkages.
Let $\tup{\sigma,\zeta,\xi} \in \DLR$, let $s \in \Spot$, let
$a \in \dom(\zeta)$, and let $f \in \dom(\zeta(a))$.
Then $\sigma(s)$ is the content of spot $s$ if $\sigma(s) \neq \bot$,
$f$ is a field of atomic object $a$, $\zeta(a)(f)$ is the content of
field $f$ of atomic object $a$ if $\zeta(a)(f) \neq \bot$, and
$\xi(a)$ is the value assigned to atomic object $a$ if
$\xi(a) \neq \bot$.
The content of spot $s$ is undefined if $\sigma(s) = \bot$, the content
of field $f$ of atomic object $a$ is undefined if
$\zeta(a)(f) = \bot$, and the value assigned to atomic object $a$ is
undefined if $\xi(a) = \bot$.
Notice that $\dom(\zeta)$ is taken for the set of all atomic objects
that are in use.
Therefore, the content of each spot, i.e.\ each element of
$\rng(\sigma)$, must be in $\dom(\zeta)$ if the content is defined
and, for each atomic object $a$ that is in use, the content of each of
its fields, i.e.\ each element of $\rng(\zeta(a))$, must be in
$\dom(\zeta)$ if the content is defined.

The effect and yield operations on $\DL$ are modelled by the effect and
yield operations on $\DLR$ that are defined in Table~\ref{def-eff-yld}.%
\begin{table}[p]
\caption{Definition of effect and yield operations}
\label{def-eff-yld}
\begin{eqntbl}
\begin{seqncol}
\effopr{\getatobj{s}}(\sigma,\zeta,\xi) = {} 
\\ \quad
\tup{\sigma \owr \maplet{s}{\cf(\AtObj \diff \dom(\zeta))}, 
\\ \quad \phantom{(}
     \zeta \owr \maplet{\cf(\AtObj \diff \dom(\zeta))}{\emptymap}, 
\\ \quad \phantom{(}
     \xi \owr \maplet{\cf(\AtObj \diff \dom(\zeta))}{\bot}}
 & \mif \dom(\zeta) \subset \AtObj
\\
\effopr{\getatobj{s}}(\sigma,\zeta,\xi) =
\tup{\sigma,\zeta,\xi}
 & \mif \dom(\zeta) = \AtObj
\\
\effopr{\setspot{s}{t}}(\sigma,\zeta,\xi) =
\tup{\sigma \owr \maplet{s}{\sigma(t)},\zeta,\xi}
\\
\effopr{\clrspot{s}}(\sigma,\zeta,\xi) =
\tup{\sigma \owr \maplet{s}{\bot},\zeta,\xi}
\\
\effopr{\equaltst{s}{t}}(\sigma,\zeta,\xi) =
\tup{\sigma,\zeta,\xi}
\\
\effopr{\undeftst{s}}(\sigma,\zeta,\xi) =
\tup{\sigma,\zeta,\xi}
\\
\effopr{\addfield{s}{f}}(\sigma,\zeta,\xi) = {} \\ \quad 
\tup{\sigma,
     \zeta \owr
     \maplet{\sigma(s)}{\zeta(\sigma(s)) \owr \maplet{f}{\bot}},
     \xi}
 & \mif \defo{\sigma}(s) \Land f \not\in \dom(\zeta(\sigma(s)))
\\
\effopr{\addfield{s}{f}}(\sigma,\zeta,\xi) =
\tup{\sigma,\zeta,\xi}
 & \mif \Lnot (\defo{\sigma}(s) \Land f \not\in \dom(\zeta(\sigma(s))))
\\
\effopr{\rmvfield{s}{f}}(\sigma,\zeta,\xi) = {} 
\tup{\sigma,
     \zeta \owr
     \maplet{\sigma(s)}{\zeta(\sigma(s)) \dsub \set{f}},
     \xi}
 & \mif \defo{\sigma}(s) \Land f \in \dom(\zeta(\sigma(s)))
\\
\effopr{\rmvfield{s}{f}}(\sigma,\zeta,\xi) =
\tup{\sigma,\zeta,\xi}
 & \mif \Lnot (\defo{\sigma}(s) \Land f \in \dom(\zeta(\sigma(s))))
\\
\effopr{\hasfield{s}{f}}(\sigma,\zeta,\xi) =
\tup{\sigma,\zeta,\xi}
\\
% !
\effopr{\setfield{s}{f}{t}}(\sigma,\zeta,\xi) = {} \\ \quad
\tup{\sigma,
     \zeta \owr
     \maplet
      {\sigma(s)}
      {\zeta(\sigma(s)) \owr \maplet{f}{\sigma(t)}},
     \xi} 
 & \mif \defo{\sigma}(s) \Land f \in \dom(\zeta(\sigma(s)))
\\
\effopr{\setfield{s}{f}{t}}(\sigma,\zeta,\xi) =
\tup{\sigma,\zeta,\xi}
 & \mif \Lnot (\defo{\sigma}(s) \Land f \in \dom(\zeta(\sigma(s))))
\\
\effopr{\clrfield{s}{f}}(\sigma,\zeta,\xi) = {} \\ \quad
\tup{\sigma,
     \zeta \owr
     \maplet
      {\sigma(s)}
      {\zeta(\sigma(s)) \owr \maplet{f}{\bot}},
     \xi}
 & \mif \defo{\sigma}(s) \Land f \in \dom(\zeta(\sigma(s)))
\\
\effopr{\clrfield{s}{f}}(\sigma,\zeta,\xi) =
\tup{\sigma,\zeta,\xi}
 & \mif \Lnot (\defo{\sigma}(s) \Land f \in \dom(\zeta(\sigma(s))))
\\
\effopr{\getfield{s}{t}{f}}(\sigma,\zeta,\xi) = {} 
\tup{\sigma \owr \maplet{s}{\zeta(\sigma(t))(f)},\zeta,\xi}
 & \mif \defo{\sigma}(t) \Land f \in \dom(\zeta(\sigma(t)))
\\
\effopr{\getfield{s}{t}{f}}(\sigma,\zeta,\xi) =
\tup{\sigma,\zeta,\xi}
 & \mif \Lnot (\defo{\sigma}(t) \Land f \in \dom(\zeta(\sigma(t))))
\\
\effopr{\asszero{s}}(\sigma,\zeta,\xi) =
\tup{\sigma,\zeta,\xi \owr \maplet{\sigma(s)}{0}}
 & \mif \defo{\sigma}(s)
\\
\effopr{\asszero{s}}(\sigma,\zeta,\xi) =
\tup{\sigma,\zeta,\xi}
 & \mif \Lnot \defo{\sigma}(s)
\\
\effopr{\assone{s}}(\sigma,\zeta,\xi) =
\tup{\sigma,\zeta,\xi \owr \maplet{\sigma(s)}{1}}
 & \mif \defo{\sigma}(s)
\\
\effopr{\assone{s}}(\sigma,\zeta,\xi) =
\tup{\sigma,\zeta,\xi}
 & \mif \Lnot \defo{\sigma}(s)
\\
\effopr{\assadd{s}{t}{u}}(\sigma,\zeta,\xi) = {} \\ \quad
\tup{\sigma,\zeta,
     \xi \owr
     \maplet{\sigma(s)}{\xi(\sigma(t)) + \xi(\sigma(u))}}
 & \mif \defo{\sigma}(s) \Land \defv{\sigma,\xi}(t) \Land
        \defv{\sigma,\xi}(u)
\\
\effopr{\assadd{s}{t}{u}}(\sigma,\zeta,\xi) =
\tup{\sigma,\zeta,\xi}
 & \mif \Lnot (\defo{\sigma}(s) \Land \defv{\sigma,\xi}(t) \Land
              \defv{\sigma,\xi}(u))
\\
\effopr{\assmul{s}{t}{u}}(\sigma,\zeta,\xi) = {} \\ \quad 
\tup{\sigma,\zeta,
     \xi \owr
     \maplet{\sigma(s)}
            {\xi(\sigma(t)) \cdot \xi(\sigma(u))}}
 & \mif \defo{\sigma}(s) \Land \defv{\sigma,\xi}(t) \Land
        \defv{\sigma,\xi}(u)
\\
\effopr{\assmul{s}{t}{u}}(\sigma,\zeta,\xi) =
\tup{\sigma,\zeta,\xi}
 & \mif \Lnot (\defo{\sigma}(s) \Land \defv{\sigma,\xi}(t) \Land
              \defv{\sigma,\xi}(u))
\\
\effopr{\assneg{s}{t}}(\sigma,\zeta,\xi) = {} 
\tup{\sigma,\zeta,
     \xi \owr \maplet{\sigma(s)}{- \xi(\sigma(t))}}
 & \mif \defo{\sigma}(s) \Land \defv{\sigma,\xi}(t)
\\
\effopr{\assneg{s}{t}}(\sigma,\zeta,\xi) =
\tup{\sigma,\zeta,\xi}
 & \mif \Lnot (\defo{\sigma}(s) \Land \defv{\sigma,\xi}(t))
\\
\effopr{\assinv{s}{t}}(\sigma,\zeta,\xi) = {} 
\tup{\sigma,\zeta,
     \xi \owr \maplet{\sigma(s)}{\xi(\sigma(t))^{-1}}}
 & \mif \defo{\sigma}(s) \Land \defv{\sigma,\xi}(t)
\\
\effopr{\assinv{s}{t}}(\sigma,\zeta,\xi) =
\tup{\sigma,\zeta,\xi}
 & \mif \Lnot (\defo{\sigma}(s) \Land \defv{\sigma,\xi}(t))
\\
\effopr{\eqvaltst{s}{t}}(\sigma,\zeta,\xi) =
\tup{\sigma,\zeta,\xi}
\\
\effopr{\undefvtst{s}}(\sigma,\zeta,\xi) =
\tup{\sigma,\zeta,\xi}
\end{seqncol}
\end{eqntbl}
\end{table}%
\addtocounter{table}{-1}%
\begin{table}[!p]
\caption{(Continued)}
\begin{eqntbl}
\begin{seqncol}
\yldopr{\getatobj{s}}(\sigma,\zeta,\xi) = \True
 & \mif \dom(\zeta) \subset \AtObj
\\
\yldopr{\getatobj{s}}(\sigma,\zeta,\xi) = \False
 & \mif \dom(\zeta) = \AtObj
\\
\yldopr{\setspot{s}{t}}(\sigma,\zeta,\xi) = \True
\\
\yldopr{\clrspot{s}}(\sigma,\zeta,\xi) = \True
\\
\yldopr{\equaltst{s}{t}}(\sigma,\zeta,\xi) = \True
 & \mif \sigma(s) = \sigma(t)
\\
\yldopr{\equaltst{s}{t}}(\sigma,\zeta,\xi) = \False
 & \mif \sigma(s) \neq \sigma(t)
\\
\yldopr{\undeftst{s}}(\sigma,\zeta,\xi) = \True
 & \mif \Lnot \defo{\sigma}(s)
\\
\yldopr{\undeftst{s}}(\sigma,\zeta,\xi) = \False
 & \mif \defo{\sigma}(s)
\\
\yldopr{\addfield{s}{f}}(\sigma,\zeta,\xi) = \True
 & \mif \defo{\sigma}(s) \Land f \not\in \dom(\zeta(\sigma(s)))
\\
\yldopr{\addfield{s}{f}}(\sigma,\zeta,\xi) = \False
 & \mif \Lnot (\defo{\sigma}(s) \Land f \not\in \dom(\zeta(\sigma(s))))
\\
\yldopr{\rmvfield{s}{f}}(\sigma,\zeta,\xi) = \True
 & \mif \defo{\sigma}(s) \Land f \in \dom(\zeta(\sigma(s)))
\\
\yldopr{\rmvfield{s}{f}}(\sigma,\zeta,\xi) = \False
 & \mif \Lnot (\defo{\sigma}(s) \Land f \in \dom(\zeta(\sigma(s))))
\\
\yldopr{\hasfield{s}{f}}(\sigma,\zeta,\xi) = \True
 & \mif \defo{\sigma}(s) \Land f \in \dom(\zeta(\sigma(s)))
\\
\yldopr{\hasfield{s}{f}}(\sigma,\zeta,\xi) = \False
 & \mif \Lnot (\defo{\sigma}(s) \Land f \in \dom(\zeta(\sigma(s))))
\\
\yldopr{\setfield{s}{f}{t}}(\sigma,\zeta,\xi) = \True
 & \mif \defo{\sigma}(s) \Land f \in \dom(\zeta(\sigma(s)))
\\
\yldopr{\setfield{s}{f}{t}}(\sigma,\zeta,\xi) = \False
 & \mif \Lnot (\defo{\sigma}(s) \Land f \in \dom(\zeta(\sigma(s))))
\\
\yldopr{\clrfield{s}{f}}(\sigma,\zeta,\xi) = \True
 & \mif \defo{\sigma}(s) \Land f \in \dom(\zeta(\sigma(s)))
\\
\yldopr{\clrfield{s}{f}}(\sigma,\zeta,\xi) = \False
 & \mif \Lnot (\defo{\sigma}(s) \Land f \in \dom(\zeta(\sigma(s))))
\\
\yldopr{\getfield{s}{t}{f}}(\sigma,\zeta,\xi) = \True
 & \mif \defo{\sigma}(t) \Land f \in \dom(\zeta(\sigma(t)))
\\
\yldopr{\getfield{s}{t}{f}}(\sigma,\zeta,\xi) = \False
 & \mif \Lnot (\defo{\sigma}(t) \Land f \in \dom(\zeta(\sigma(t))))
\\
\yldopr{\asszero{s}}(\sigma,\zeta,\xi) = \True
 & \mif \defo{\sigma}(s)
\\
\yldopr{\asszero{s}}(\sigma,\zeta,\xi) = \False
 & \mif \Lnot \defo{\sigma}(s)
\\
\yldopr{\assone{s}}(\sigma,\zeta,\xi) = \True
 & \mif \defo{\sigma}(s)
\\
\yldopr{\assone{s}}(\sigma,\zeta,\xi) = \False
 & \mif \Lnot \defo{\sigma}(s)
\\
% !
\yldopr{\assadd{s}{t}{u}}(\sigma,\zeta,\xi) = \True \hsp{8}
 & \mif \defo{\sigma}(s) \Land \defv{\sigma,\xi}(t) \Land
        \defv{\sigma,\xi}(u) 
\\
\yldopr{\assadd{s}{t}{u}}(\sigma,\zeta,\xi) = \False
 & \mif \Lnot (\defo{\sigma}(s) \Land \defv{\sigma,\xi}(t) \Land
              \defv{\sigma,\xi}(u))
\\
\yldopr{\assmul{s}{t}{u}}(\sigma,\zeta,\xi) = \True
 & \mif \defo{\sigma}(s) \Land \defv{\sigma,\xi}(t) \Land
        \defv{\sigma,\xi}(u)
\\
\yldopr{\assmul{s}{t}{u}}(\sigma,\zeta,\xi) = \False
 & \mif \Lnot (\defo{\sigma}(s) \Land \defv{\sigma,\xi}(t) \Land
        \defv{\sigma,\xi}(u))
\\
\yldopr{\assneg{s}{t}}(\sigma,\zeta,\xi) = \True
 & \mif \defo{\sigma}(s) \Land \defv{\sigma,\xi}(t)
\\
\yldopr{\assneg{s}{t}}(\sigma,\zeta,\xi) = \False
 & \mif \Lnot (\defo{\sigma}(s) \Land \defv{\sigma,\xi}(t))
\\
\yldopr{\assinv{s}{t}}(\sigma,\zeta,\xi) = \True
 & \mif \defo{\sigma}(s) \Land \defv{\sigma,\xi}(t)
\\
\yldopr{\assinv{s}{t}}(\sigma,\zeta,\xi) = \False
 & \mif \Lnot (\defo{\sigma}(s) \Land \defv{\sigma,\xi}(t))
\\
\yldopr{\eqvaltst{s}{t}}(\sigma,\zeta,\xi) = \True
 & \mif \defo{\sigma}(s) \Land \defo{\sigma}(t) \Land
        \xi(\sigma(s)) = \xi(\sigma(t))
\\
\yldopr{\eqvaltst{s}{t}}(\sigma,\zeta,\xi) = \False
 & \mif \Lnot (\defo{\sigma}(s) \Land \defo{\sigma}(t) \Land
              \xi(\sigma(s)) = \xi(\sigma(t)))
\\
\yldopr{\undefvtst{s}}(\sigma,\zeta,\xi) = \True
 & \mif \defo{\sigma}(s) \Land \Lnot \defv{\sigma,\xi}(s)
\\
\yldopr{\undefvtst{s}}(\sigma,\zeta,\xi) = \False
 & \mif \Lnot (\defo{\sigma}(s) \Land \Lnot \defv{\sigma,\xi}(s))
\end{seqncol}
\end{eqntbl}
\end{table}
In these tables, $\defo{\sigma}(s)$ abbreviates $\sigma(s) \neq \bot$
and $\defv{\sigma,\xi}(s)$ abbreviates 
$\sigma(s) \neq \bot \Land \xi(\sigma(s)) \neq \bot$. 
We use the following notation for functions:
$\emptymap$ for the empty function;
$\maplet{e}{e'}$ for the function $f$ with $\dom(f) = \set{e}$ such that
$f(e) = e'$;
$f \owr g$ for the function $h$ with $\dom(h) = \dom(f) \union \dom(g)$
such that for all $e \in \dom(h)$,\, $h(e) = f(e)$ if
$e \not\in \dom(g)$ and $h(e) = g(e)$ otherwise;
$f \restr S$ for the function $g$ with $\dom(g) = S$
such that for all $e \in \dom(g)$,\, $g(e) = f(e)$;
and $f \dsub S$ for the function $g$ with $\dom(g) = \dom(f) \diff S$
such that for all $e \in \dom(g)$,\, $g(e) = f(e)$.

\section{Correctness of the Alternative Description}
\label{sect-correctness}

In this section, we show that the description of the state changes and
replies that result from performing the basic actions of \DLD\ given in
Section~\ref{sect-DLD-alt-descr} agrees with the one given in
Sections~\ref{sect-DLD-K} and~\ref{sect-DLD}, provided only 
deterministic data linkages are considered.
Recall that the basic actions of \DLD\ preserve determinism of data 
linkages.
This means that non-deterministic data linkages are not relevant to 
\DLD\ if only deterministic data linkages are allowed as initial state.

The step from deterministic data linkages to the concrete states
introduced in Section~\ref{sect-DLD-alt-descr} is an instance of a data
refinement:
\begin{itemize}
\item
the concrete states are considered representations of deterministic
data linkages;
\item
the effect and yield operations on deterministic data linkages are
modelled by the effect and yield operations on the concrete states.
\end{itemize}
This data refinement is correct in the sense that:
\begin{itemize}
\item
there is a representation of each deterministic data linkage;
\item
for each $\alpha \in \Act_\DLD$, for each deterministic data linkage,
the result of applying $\effopr{\alpha}$ to the representation of that
data linkage is the representation of the result of applying
$\effop{\alpha}$ to that data linkage;
\item
for each $\alpha \in \Act_\DLD$, for each deterministic data linkage,
the result of applying $\yldopr{\alpha}$ to the representation of that
data linkage is the result of applying $\yldop{\alpha}$ to that data
linkage.
\end{itemize}
Correctness in this sense agrees with the notion of correctness for data
refinements as used in, for example, the software development method
VDM~\cite{Jon90a}.
Following the terminology of VDM, the three aspects of correctness might
be called representation adequacy, action effect modelling and action
reply modelling.

For the purpose of stating the above-mentioned correctness rigorously,
we introduce a \emph{retrieve function} that relates the concrete states
to deterministic data linkages:
\begin{ldispl}
\retr(\sigma,\zeta,\xi) = {}
\\ \quad
\dlCom{s \in \set{s' \in \dom(\sigma) \where \sigma(s') \neq \bot}}
 \slinkp{s}{\sigma(s)} \dlcom {}
\\ \quad
\dlCom{a \in \dom(\zeta)}
\bigl
(\dlCom{f \in
        \set{f' \in \dom(\zeta(a)) \where \zeta(a)(f') = \bot}}
 \pflinkp{a}{f}\bigr) \dlcom {}
\\ \quad
\dlCom{a \in \dom(\zeta)}
\bigl
(\dlCom{f \in
        \set{f' \in \dom(\zeta(a)) \where \zeta(a)(f') \neq \bot}}
 \flinkp{a}{f}{\zeta(a)(f)}\bigr) \dlcom {}
\\ \quad
\dlCom{a \in \set{a' \in \dom(\xi) \where \xi(a') \neq \bot}}
 \valass{a}{\xi(a)}\;,
\end{ldispl}%
where $\dlCom{i \in \mathcal{I}} D_i$, with $\mathcal{I}$ a finite set 
and $D_i$ a \DLD\ term for each $i \in \mathcal{I}$, stands for a term 
$D_{i_1} \dlcom \ldots \dlcom D_{i_n}$ such that 
$\mathcal{I} = \set{i_1,\ldots,i_n}$ (all such terms are equal by 
associativity and commutativity of $\dlcom$) if $\mathcal{I}$ is not an
empty set, and for the term $\emptydl$ otherwise.
The function $\retr$ can be thought of as regaining the abstract 
deterministic data linkages from their concrete representations.
\pagebreak[2]

The correctness of the step from deterministic data linkages to the
concrete states is stated rigorously in the following theorem.
\begin{theorem}[Correctness]
\label{theorem-correctness}
$\DL$ and the effect and yield operations on $\DL$ are related to
$\DLR$ and the effect and yield operations on $\DLR$ as follows:
\begin{enumerate}
\item
for all $L \in \DL$ that are deterministic, there exists a
$\tup{\sigma,\zeta,\xi} \in \DLR$ such that:
\begin{ldispl}
L = \retr(\sigma,\zeta,\xi)\;;
\end{ldispl}%
\item
for all $\alpha \in \Act_\DLD$, for all $\tup{\sigma,\zeta,\xi} \in \DLR$:
\begin{ldispl}
\begin{aeqns}
\retr(\effopr{\alpha}(\sigma,\zeta,\xi)) & = &
\effop{\alpha}(\retr(\sigma,\zeta,\xi))\;,
\\
\yldopr{\alpha}(\sigma,\zeta,\xi) & = &
\yldop{\alpha}(\retr(\sigma,\zeta,\xi))\;.
\end{aeqns}
\end{ldispl}%
\end{enumerate}
\end{theorem}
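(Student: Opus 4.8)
The two clauses will be handled separately. For clause~1, I would identify $\DL$ with the set $\cB$ of basic terms modulo $\ACI$ and the overriding axioms (as the paper does just before the theorem) and read the required triple off a given deterministic basic term $L$: put $\sigma(s) = a$ when $\slinkp{s}{a}$ occurs in $L$ and $\sigma(s) = \bot$ otherwise; take $\dom(\zeta) = \dom(\xi) = \atobj(L)$; for $a \in \atobj(L)$ let $\dom(\zeta(a))$ be the set of fields of $a$ in $L$, with $\zeta(a)(f) = b$ when $\flinkp{a}{f}{b}$ occurs in $L$ and $\zeta(a)(f) = \bot$ when $\pflinkp{a}{f}$ occurs in $L$; and $\xi(a) = n$ when $\valass{a}{n}$ occurs in $L$, $\xi(a) = \bot$ otherwise. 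Determinism of $L$ is precisely what makes $\sigma$, every $\zeta(a)$ and $\xi$ single-valued, and the three defining conditions of $\DLR$ hold because every atomic object named by a constant occurring in $L$ lies in $\atobj(L)$. That $\retr(\sigma,\zeta,\xi) = L$ then follows by induction on the structure of $L$, unfolding the definition of $\retr$ and using associativity, commutativity and idempotence of $\dlcom$: the four blocks in that definition reconstruct exactly the spot links, partial field links, field links and value associations occurring in $L$.

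For clause~2 I would first establish three auxiliary facts about an arbitrary $\tup{\sigma,\zeta,\xi} \in \DLR$, writing $L$ for $\retr(\sigma,\zeta,\xi)$. (i)~$L$ is deterministic. (ii)~$\slinkp{s}{a}$ occurs in $L$ iff $\sigma(s) = a$; $\pflinkp{a}{f}$ occurs in $L$ iff $a \in \dom(\zeta)$, $f \in \dom(\zeta(a))$ and $\zeta(a)(f) = \bot$; $\flinkp{a}{f}{b}$ occurs in $L$ iff $a \in \dom(\zeta)$, $f \in \dom(\zeta(a))$ and $\zeta(a)(f) = b$; and $\valass{a}{n}$ occurs in $L$ iff $a \in \dom(\zeta)$ and $\xi(a) = n$ --- so that the content of spot $s$ in $L$ is $\sigma(s)$ whenever $\defo{\sigma}(s)$, $f$ is a field of $a$ in $L$ iff $f \in \dom(\zeta(a))$, the value assigned to the content of spot $t$ is $\xi(\sigma(t))$ whenever $\defv{\sigma,\xi}(t)$, and every spot and field involved is locally deterministic. (iii)~$\retr$ turns the function updates appearing in Table~\ref{def-eff-yld} into the corresponding operations on data linkages, e.g.\ $\retr(\sigma \owr \maplet{s}{b},\zeta,\xi) = L \dlori \slinkp{s}{b}$, $\retr(\sigma,\zeta \owr \maplet{\sigma(s)}{\zeta(\sigma(s)) \owr \maplet{f}{\sigma(t)}},\xi) = L \dlori \flinkp{\sigma(s)}{f}{\sigma(t)}$ when $\sigma(t) \neq \bot$, $\retr(\sigma,\zeta,\xi \owr \maplet{\sigma(s)}{n}) = L \dlori \valass{\sigma(s)}{n}$ when $\defo{\sigma}(s)$, and likewise for the updates with $\maplet{f}{\bot}$, with $\zeta(a) \dsub \set{f}$ and with $\maplet{a}{\emptymap}$; each is a short derivation from $\ACI$ and the overriding axioms of $\DLA$. (One also records the routine fact that every $\effopr{\alpha}$ maps $\DLR$ into $\DLR$.)

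With (i)--(iii) in hand, clause~2 reduces to a case analysis on $\alpha \in \Act_\DLD$. Fixing $\alpha$ and $\tup{\sigma,\zeta,\xi}$, one splits along the case distinction used to define $\effopr{\alpha}$ and $\yldopr{\alpha}$ in Table~\ref{def-eff-yld}; by (ii) this distinction is the same as a distinction on which spot links, field links, partial field links and value associations for the spots and atomic objects mentioned by $\alpha$ occur in $L$. By (i) and Proposition~\ref{prop-enabledness}, no priority-$1$ rule of Table~\ref{rewrite-eff-yld} (those are exactly the ones recognizing non-determinism) is enabled for $\effop{\alpha}(L)$ or $\yldop{\alpha}(L)$, so the applicable rule is determined by that structure; applying it and then invoking (iii) identifies $\effop{\alpha}(L)$, as an element of $\DL$, with $\retr(\effopr{\alpha}(\sigma,\zeta,\xi))$, and $\yldop{\alpha}(L)$ with $\yldopr{\alpha}(\sigma,\zeta,\xi)$, the value-related actions using in addition that $\xi(\sigma(t))$ is the value assigned to the content of spot $t$ and that the meadow operations on the two sides coincide. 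Since by Theorem~\ref{theorem-normal-form-dld} the terms $\effop{\alpha}(L)$ and $\yldop{\alpha}(L)$ have unique normal forms (a basic term, and one of $\True,\False$, respectively), both equations follow.

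The bulk of the work is the case analysis itself, but the one genuinely delicate case is $\getatobj{s}$: on the rewrite side its effect and reply are governed by whether $\atobj(L) \subset \AtObj$ and by $\cf(\AtObj \diff \atobj(L))$, whereas on the concrete side they are governed by whether $\dom(\zeta) \subset \AtObj$ and by $\cf(\AtObj \diff \dom(\zeta))$. Making the two agree forces a comparison of $\atobj(L)$ with $\dom(\zeta)$: the inclusion $\atobj(\retr(\sigma,\zeta,\xi)) \subseteq \dom(\zeta)$ is immediate from the defining conditions of $\DLR$, but the converse --- that every in-use atomic object actually occurs in $\retr(\sigma,\zeta,\xi)$ --- is the point that needs the most care, and I would isolate it as a separate lemma (restricting, where it fails for ``orphaned'' atomic objects, to the triples for which it holds, which is all the intended reading of $\DLR$ really needs) and prove the $\getatobj{s}$ clauses on that basis. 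All the other actions are insensitive to this discrepancy, since their concrete clauses alter $\sigma$, $\zeta(a)(f)$ and $\xi(a)$ only at keys that occur in $L$, so $\retr$ discards any orphaned atomic objects symmetrically on both sides of the equations.
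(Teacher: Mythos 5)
Your route is the paper's own: the published proof is only a sketch---a case distinction on $\alpha$ supported by exactly the kind of facts you list under~(iii), relating $\retr$ of an updated triple to $\retr$ of the original triple combined with the corresponding atomic data linkage---and your auxiliary facts~(i) and~(ii) are the scaffolding that sketch tacitly relies on. Your treatment of clause~1 (reading the triple off a deterministic basic term and checking the defining conditions of $\DLR$) is also the evident intended argument. So in terms of approach you have reconstructed the proof, in considerably more detail than the authors give; your use of $\dlori$ rather than $\dlcom$ in the update facts is in fact the more robust formulation, since it remains valid when the key being updated is already defined.

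The point you isolate for $\getatobj{s}$ deserves more emphasis than ``the delicate case'': the theorem as literally stated fails there, and the paper's proof is silent about it. Take $\sigma$ everywhere $\bot$, $\dom(\zeta) = \dom(\xi) = \set{a}$ with $\zeta(a) = \emptymap$ and $\xi(a) = \bot$; this triple satisfies all conditions of $\DLR$, yet $\retr(\sigma,\zeta,\xi) = \emptydl$, so $\atobj(\retr(\sigma,\zeta,\xi)) = \emptyset \subset \dom(\zeta)$, and the two sides of the effect equation allocate $\cf(\AtObj \diff \set{a})$ and $\cf(\AtObj)$ respectively, which in general differ; if instead $\dom(\zeta) = \AtObj$ while $\atobj(\retr(\sigma,\zeta,\xi)) \subset \AtObj$, the yields differ as well. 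Your repair---restricting to triples with $\atobj(\retr(\sigma,\zeta,\xi)) = \dom(\zeta)$---does rescue the single-step statement, and your remark that every other action updates $\sigma$, $\zeta$ and $\xi$ only at keys occurring in $\retr(\sigma,\zeta,\xi)$ is the right reason the remaining cases are insensitive to orphans. But note that the restricted class is not closed under the effect operations: $\effopr{\clrspot{s}}$ leaves the cleared atomic object in $\dom(\zeta)$ even when nothing refers to it any longer, whereas on the $\DL$ side it can disappear from $\atobj$ entirely. So if the correspondence is to be iterated---which is what ``the two descriptions agree'' requires---the restriction must be replaced by an invariant condition or the $\getatobj{s}$ clause weakened to agreement up to the choice of fresh atomic object; as a proof of the theorem for a single application of each operation on the restricted class, what you propose is sound.
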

\begin{proof}
This is straightforwardly proved by case distinction on the basic action
$\alpha$ using elementary laws for $\owr$ and $\dsub$.
The following facts about the connection between deterministic data
linkages and their representations are useful in the proof:
\begin{ldispl}
\retr(\sigma \owr \maplet{s}{a},\zeta,\xi) =
\retr(\sigma,\zeta,\xi) \dlcom \slinkp{s}{a}\;,
\\
\retr(\sigma,
      \zeta \owr \maplet{a}{\zeta(a) \owr \maplet{f}{\bot}},
      \xi) =
\retr(\sigma,\zeta,\xi) \dlcom \pflinkp{a}{f}\;,
\\
\retr(\sigma,
      \zeta \owr \maplet{a}{\zeta(a) \owr \maplet{f}{b}},
      \xi) =
\retr(\sigma,\zeta,\xi) \dlcom \flinkp{a}{f}{b}\;,
\\
\retr(\sigma,\zeta,
      \xi \owr \maplet{a}{n}) =
\retr(\sigma,\zeta,\xi) \dlcom \valass{a}{n}\;.
\end{ldispl}%
They follow from the definition of $\retr$ and elementary laws for
$\owr$.
\qed
\end{proof}

\section{Another Description of Garbage Reclamation}
\label{sect-garb-coll-maps}

In this section, we describe the state changes and replies that
result from performing the reclamation-related actions of data linkage
dynamics in the world of sets.
Like the effect operations for reclamation on $\DL$, the effect 
operations for reclamation on $\DLR$ are defined using auxiliary functions.
The auxiliary function 
$\funct{\reach}{\nm{SS} \x \nm{AS}_1}{\setof{(\AtObj)}}$ is used to define
both $\effopr{\fgc}$ and $\effopr{\rgc}$, and the auxiliary function 
$\funct{\incycle}{\nm{AS}_1}{\setof{(\AtObj)}}$ is used to define 
$\effopr{\rgc}$.
These auxiliary functions are defined as follows:
\begin{ldispl}
\begin{aeqns}
\reach(\sigma,\zeta) & = &
\Union{a \in \rng(\sigma)} \reach(a,\zeta)\;,
\eqnsep
\incycle(\zeta) & = &
\set{a \in \dom(\zeta) \where
     \Exists{a' \in \rng(\zeta(a))}{(a \in \reach(a',\zeta))}}\;,
\end{aeqns}
\end{ldispl}%
where $\reach(a,\zeta) \subseteq \AtObj$ is inductively defined by
the following rules:
\pagebreak[2]
\begin{itemize}
\item
$a \in \reach(a,\zeta)$;
\item
if $a' \in \reach(a,\zeta)$ and $a'' \in \rng(\zeta(a'))$, then
$a'' \in \reach(a,\zeta)$.
\end{itemize}

\begin{table}[!t]
\caption{Definition of effect and yield operations for reclamation}
\label{def-eff-yld-reclaim}
\begin{eqntbl}
\begin{eqncol}
\effopr{\fgc}(\sigma,\zeta,\xi) =
\tup{\sigma,
     \zeta \restr \reach(\sigma,\zeta),
     \xi \restr \reach(\sigma,\zeta)}
\\
\effopr{\rgc}(\sigma,\zeta,\xi) = {} \\ \qquad
\tup{\sigma,
     \zeta \restr (\reach(\sigma,\zeta) \union \incycle(\zeta)),
     \xi \restr (\reach(\sigma,\zeta) \union \incycle(\zeta))}
\\
\effopr{\sdgetatobj{s}}(\sigma,\zeta,\xi) =
\sd(\sigma(s),\effopr{\getatobj{s}}(\sigma,\zeta,\xi))
\\
\effopr{\sdsetspot{s}{t}}(\sigma,\zeta,\xi) =
\sd(\sigma(s),\effopr{\setspot{s}{t}}(\sigma,\zeta,\xi))
\\
\effopr{\sdclrspot{s}}(\sigma,\zeta,\xi) =
\sd(\sigma(s),\effopr{\clrspot{s}}(\sigma,\zeta,\xi))
\\
\effopr{\sdsetfield{s}{f}{t}}(\sigma,\zeta,\xi) =
\sd(\zeta(\sigma(s))(f),
     \effopr{\setfield{s}{f}{t}}(\sigma,\zeta,\xi))
\\
\effopr{\sdclrfield{s}{f}}(\sigma,\zeta,\xi) =
\sd(\zeta(\sigma(s))(f),
     \effopr{\clrfield{s}{f}}(\sigma,\zeta,\xi))
\\
\effopr{\sdgetfield{s}{t}{f}}(\sigma,\zeta,\xi) =
\sd(\sigma(s),
     \effopr{\getfield{s}{t}{f}}(\sigma,\zeta,\xi))
\\
\effopr{\udgetatobj{s}}(\sigma,\zeta,\xi) =
\ud(\sigma(s),\effopr{\getatobj{s}}(\sigma,\zeta,\xi))
\\
\effopr{\udsetspot{s}{t}}(\sigma,\zeta,\xi) =
\ud(\sigma(s),\effopr{\setspot{s}{t}}(\sigma,\zeta,\xi))
\\
\effopr{\udclrspot{s}}(\sigma,\zeta,\xi) =
\ud(\sigma(s),\effopr{\clrspot{s}}(\sigma,\zeta,\xi))
\\
\effopr{\udsetfield{s}{f}{t}}(\sigma,\zeta,\xi) =
\ud(\zeta(\sigma(s))(f),
     \effopr{\setfield{s}{f}{t}}(\sigma,\zeta,\xi))
\\
\effopr{\udclrfield{s}{f}}(\sigma,\zeta,\xi) =
\ud(\zeta(\sigma(s))(f),
     \effopr{\clrfield{s}{f}}(\sigma,\zeta,\xi))
\\
\effopr{\udgetfield{s}{t}{f}}(\sigma,\zeta,\xi) =
\ud(\sigma(s),
     \effopr{\getfield{s}{t}{f}}(\sigma,\zeta,\xi))
\eqnsep
\yldopr{\fgc}(\sigma,\zeta,\xi) = \True
\\
\yldopr{\rgc}(\sigma,\zeta,\xi) = \True
\\ 
\yldopr{\sdgetatobj{s}}(\sigma,\zeta,\xi) =
\yldopr{\getatobj{s}}(\sigma,\zeta,\xi)
\\
\yldopr{\sdsetspot{s}{t}}(\sigma,\zeta,\xi) =
\yldopr{\setspot{s}{t}}(\sigma,\zeta,\xi)
\\
\yldopr{\sdclrspot{s}}(\sigma,\zeta,\xi) =
\yldopr{\clrspot{s}}(\sigma,\zeta,\xi)
\\
\yldopr{\sdsetfield{s}{f}{t}}(\sigma,\zeta,\xi) =
\yldopr{\setfield{s}{f}{t}}(\sigma,\zeta,\xi)
\\
\yldopr{\sdclrfield{s}{f}}(\sigma,\zeta,\xi) =
\yldopr{\clrfield{s}{f}}(\sigma,\zeta,\xi)
\\
\yldopr{\sdgetfield{s}{t}{f}}(\sigma,\zeta,\xi) =
\yldopr{\getfield{s}{t}{f}}(\sigma,\zeta,\xi)
\\
\yldopr{\udgetatobj{s}}(\sigma,\zeta,\xi) =
\yldopr{\getatobj{s}}(\sigma,\zeta,\xi)
\\
\yldopr{\udsetspot{s}{t}}(\sigma,\zeta,\xi) =
\yldopr{\setspot{s}{t}}(\sigma,\zeta,\xi)
\\
\yldopr{\udclrspot{s}}(\sigma,\zeta,\xi) =
\yldopr{\clrspot{s}}(\sigma,\zeta,\xi)
\\
\yldopr{\udsetfield{s}{f}{t}}(\sigma,\zeta,\xi) =
\yldopr{\setfield{s}{f}{t}}(\sigma,\zeta,\xi)
\\
\yldopr{\udclrfield{s}{f}}(\sigma,\zeta,\xi) =
\yldopr{\clrfield{s}{f}}(\sigma,\zeta,\xi)
\\
\yldopr{\udgetfield{s}{t}{f}}(\sigma,\zeta,\xi) =
\yldopr{\getfield{s}{t}{f}}(\sigma,\zeta,\xi)
\end{eqncol}
\end{eqntbl}
\end{table}

The auxiliary function $\funct{\sd}{\AtObj \x \DLR}{\DLR}$ is used to
define $\effopr{\alpha}$ for the actions $\alpha$ in which a basic 
action of \DLD\ is combined with safe disposal, and 
the auxiliary function $\funct{\ud}{\AtObj \x \DLR}{\DLR}$ is used to
define $\effopr{\alpha}$ for the actions $\alpha$ in which a basic 
action of \DLD\ is combined with unsafe disposal. 
These auxiliary functions are defined as follows:
\begin{ldispl}
\begin{aceqns}
\sd(a,\tup{\sigma,\zeta,\xi}) & = &
\tup{\sigma,\zeta \dsub \set{a},\xi \dsub \set{a}}
 & \mif a \not\in \reach(\sigma,\zeta)\;,
\\
\sd(a,\tup{\sigma,\zeta,\xi}) & = &
\tup{\sigma,\zeta,\xi}
 & \mif a \in \reach(\sigma,\zeta)\;,
\eqnsep
\ud(a,\tup{\sigma,\zeta,\xi}) & = &
\multicolumn{2}{@{}l@{}}
 {\sd(a,\tup{\clrs(a,\sigma),\clrf(a,\zeta),\xi})\;,}
\end{aceqns}
\end{ldispl}%
where $\funct{\clrs}{\AtObj \x \nm{SS}}{\nm{SS}}$ and
$\funct{\clrf}{\AtObj \x \nm{AS}_1}{\nm{AS}_1}$ are defined as follows:
\pagebreak[2]
\begin{ldispl}
\begin{aceqns}
\clrs(a,\sigma)(s) & = & \sigma(s) & \mif \sigma(s) \neq a\;, \\
\clrs(a,\sigma)(s) & = & \bot      & \mif \sigma(s) = a\;,
\eqnsep
\clrf(a,\zeta)(a')(f) & = & \zeta(a')(f)
 & \mif \zeta(a')(f) \neq a\;, \\
\clrf(a,\zeta)(a')(f) & = & \bot & \mif \zeta(a')(f) = a\;.
\end{aceqns}
\end{ldispl}%

The effect and yield operations for reclamation on $\DLR$ are
defined in Table~\ref{def-eff-yld-reclaim}.

Theorem~\ref{theorem-correctness} goes through for \DLD-R.
The additional cases to be considered involve proofs by induction over
the definition of $\reach(a,\zeta)$ for appropriate $a$ and $\zeta$.

\section{Conclusions}
\label{sect-concl}

We have presented an algebra of which the elements are intended for
modelling the states of computations in which dynamic data structures
are involved.
We have also presented a simple model of computation in which states of
computations are modelled as elements of this algebra and state changes
take place by means of certain actions. 
We have described the state changes and replies that result from
performing those actions by means of a term rewriting system with rule
priorities.

We followed a rather fundamental approach.
Instead of developing the model of computation on top of an existing
theory or model, we started from first principles by giving an
elementary algebraic specification~\cite{BT06a} of the states of
computations in which dynamic data structures are involved.
We found out that term rewriting with priorities is a convenient
technique to describe the dynamic aspects of the model in an appealing
mechanical way.
In particular, we managed to give a clear idea of the features related
to reclamation of garbage.

It stands out that the description of the dynamic aspects of the 
presented model of computation by means of a term rewriting system with 
priorities is rather sizable.
However, an alternative description in the world of sets that, unlike
the one that we have given in this paper, covers both deterministic and
non-deterministic data linkages, would be rather sizable as well.
Moreover, we believe that the use of conditional term 
rewriting~\cite{BK86d} instead of priority rewriting would give rise to 
a less compact and more complicated description.

The presented model of computation and its description are well-thought
out, but the choices made can only be justified by applications.
Applications to that effect should not only include applications as a 
setting in which theoretical issues concerning dynamic data structures 
are studied.
They should also include applications as a setting in which practical
programming problems that involve dynamic data structures are studied.
In~\cite{BM08a}, we have studied the programming of an interpreter for a 
program notation that is close to existing assembly languages in this
setting.

Together with thread algebra and program algebra~\cite{BL02a}, we hold 
the model of computation as described in this paper to be a suitable 
starting-point for investigations into theoretical issues concerning the 
interplay between programs and dynamic data structures, including issues 
concerning reclamation of garbage.
We have studied one such issue in~\cite{BM08e}, namely the feasibility 
of automatically making everything garbage as soon as it can be viewed 
as garbage.
For the study in question, the abstraction from the representation of 
dynamic data structures by means of pointers turned out to be really
useful.

\subsection*{Acknowledgements}

This research has been partly carried out in the framework of the
Jacquard-project Symbiosis, which is funded by the Netherlands
Organisation for Scientific Research (NWO).
We thank two anonymous referees for carefully reading a preliminary 
version of this paper and for suggesting improvements of the 
presentation of the paper.

\bibliographystyle{splncs03}
\bibliography{IS}

\appendix

\section{Term Rewriting Systems}
\label{appendix}

In this appendix, the basic definitions and results regarding term 
rewriting systems are collected.

We assume that a set of constants, a set of operators with fixed 
arities, and a set of variables have been given; and we consider term 
rewriting systems for terms that can be built from the constants, 
operators, and variables in these sets.

A \emph{rewrite rule} is a pair of terms $t \osred s$, where $t$ is not 
a variable and each variable occurring in $s$ occurs in $t$ as well.
A \emph{term rewriting system} is a set of rewrite rules.

Let $\cR$ be a term rewriting system.
Then a \emph{reduction step} of $\cR$ is a pair $t \osred s$ such that 
for some substitution instance $t' \osred s'$ of a rewrite rule of 
$\cR$, $t'$ is a subterm of $t$, and $s$ is $t$ with $t'$ replaced by 
$s'$.
Here, $t'$ is called the \emph{redex} of the reduction step, $s'$ is 
called the \emph{contractum} of the reduction step, and the replacement 
of the redex is called the \emph{contraction} of the redex.
The \emph{one-step reduction relation} $\osred$ of $\cR$ is the set of 
all reduction steps of $\cR$.
This means that the one-step reduction relation $\osred$ is the closure 
of $\cR$ under substitutions and contexts.

Let $\cR$ be a term rewriting system.
Then a \emph{reduction sequence} of $\cR$ is a finite sequence 
$t_1 \osred t_2$, \ldots, $t_n \osred t_{n+1}$ of consecutive reduction 
steps of $\cR$ or an infinite sequence $t_1 \osred t_2$, 
$t_2 \osred t_3$, \ldots\ of consecutive reduction steps of $\cR$.
A \emph{reduction} of $\cR$ is a pair $t \msred s$ such that either 
$t$ is syntactically equal to $s$ or there exists a finite reduction 
sequence $t_1 \osred t_2$, \ldots, $t_n \osred t_{n+1}$ of $\cR$ such 
that $t$ and $s$ are syntactically equal to $t_1$ and $t_{n+1}$, 
respectively.
Here, $s$ is called a \emph{reduct} of $t$.
The \emph{reduction relation} $\msred$ of $\cR$ is the set of all 
reductions of $\cR$.
This means that the reduction relation $\msred$ is the closure of the 
one-step reduction step relation $\osred$ under transitivity and 
reflexivity.

Let $\cR$ be a term rewriting system.
Then a term $t$ \emph{is a normal form of $\cR$} if there does not exist 
a term $s$ such that $t \osred s$ is a reduction step of $\cR$.
A term $t$ \emph{has a normal form in $\cR$} if there exists a term $s$ 
such that $t \msred s$ is a reduction of $\cR$ and $s$ is a normal form 
of $\cR$.
$\cR$ \emph{is strongly normalizing} on term $t$ if there does not exist 
an infinite reduction sequence $t \osred t_1$, $t_1 \osred t_2$, 
$t_2 \osred t_3$, \ldots\ of $\cR$.
$\cR$ \emph{is strongly normalizing} if $\cR$ is strongly normalizing on 
all terms.
$\cR$ \emph{is weakly confluent} if for each two reduction steps 
$t \osred s_1$ and $t \osred s_2$ of $\cR$ there exist two reductions 
$s_1 \msred s$ and $s_2 \msred s$ of $\cR$.
If $\cR$ is strongly normalizing and weakly confluent, all terms
have a unique normal form in $\cR$.

Let $\cR$ be a term rewriting system.
Then a \emph{reduction ordering} for $\cR$ is a well-founded ordering on 
terms that is closed under substitutions and contexts.
$\cR$ is strongly normalizing if and only if there exists a reduction
ordering $>$ for $\cR$ such that $t > s$ for each rewrite rule 
$t \osred s$ of $\cR$.

Let $\cR$ be a term rewriting system.
Then a \emph{critical pair} of $\cR$ is a pair 
$(t' \osred s'_1, t' \osred s'_2)$ of different reduction steps of $\cR$ 
such that $t' \osred s'_2$ is a substitution instance of a rewrite rule 
$t \osred s$ of $\cR$ and the redex of $t' \osred s'_1$ is a 
substitution instance of a non-variable subterm of $t$.
A critical pair $(t' \osred s'_1, t' \osred s'_2)$ of $\cR$ 
\emph{is convergent} if $s'_1$ and $s'_2$ have a common reduct.
$\cR$ is weakly confluent if and only if all critical pairs of $\cR$ are
convergent.

Let $\cR$ be a term rewriting system and 
let $E$ be a set of equations between terms.
Then a \emph{reduction modulo $E$ step} of $\cR$ is a pair 
$t \osredm{E} s$ such that there exist a reduction step $t' \osred s'$ 
of $\cR$ such that $t = t'$ and $s = s'$ are derivable from $E$.
The \emph{one-step reduction modulo $E$ relation} $\osredm{E}$ of $\cR$ 
is the set of all reduction modulo $E$ steps of $\cR$.
A \emph{reduction modulo $E$ sequence} of $\cR$ is a finite sequence 
$t_1 \osredm{E} t_2$, \ldots, $t_n \osredm{E} t_{n+1}$ of consecutive 
reduction modulo $E$ steps of $\cR$ or an infinite sequence 
$t_1 \osredm{E} t_2$, $t_2 \osredm{E} t_3$, \ldots\ of consecutive 
reduction modulo $E$ steps of $\cR$.
A \emph{reduction modulo $E$} of $\cR$ is pair $t \msredm{E} s$ such 
that either $t = s$ is derivable from $E$ or there exists a finite 
reduction modulo $E$ sequence $t_1 \osredm{E} t_2$, \ldots, 
$t_n \osredm{E} t_{n+1}$ of $\cR$ such that $t$ and $s$ are 
syntactically equal to $t_1$ and $t_{n+1}$, respectively.
The \emph{reduction modulo $E$ relation} $\msredm{E}$ of $\cR$ is the 
set of all reductions modulo $E$ of $\cR$.

Let $\cR$ be a term rewriting system and 
let $E$ be a set of equations between terms.
Then a term $t$ \emph{is a normal form of $\cR$ with respect to 
reduction modulo $E$} if there does not exist a term $s$ such that 
$t \osredm{E} s$ is a reduction modulo $E$ step of $\cR$.
A term $t$ \emph{has a normal form in $\cR$ with respect to reduction 
modulo $E$} if there exists a term $s$ such that $t \msredm{E} s$ is a 
reduction modulo $E$ of $\cR$ and $s$ is a normal form of $\cR$ with
respect to reduction modulo $E$.
$\cR$ \emph{is strongly normalizing modulo} $E$ on term $t$ if there 
does not exist an infinite sequence $t \osredm{E} t_1$, 
$t_1 \osredm{E} t_2$, $t_2 \osredm{E} t_3$, \ldots\ of reduction modulo 
$E$ steps of $\cR$.
$\cR$ \emph{is strongly normalizing modulo} $E$ if $\cR$ is strongly 
normalizing modulo $E$ on all terms.
$\cR$ \emph{is weakly confluent modulo} $E$ if for each reduction modulo 
$E$ step $t \osredm{E} s_1$ of $\cR$ and each reduction step 
$t \osred s_2$ of $\cR$ there exist reductions modulo $E$ 
$s_1 \msredm{E} s$ and $s_2 \msredm{E} s$ of $\cR$.
If $\cR$ is strongly normalizing modulo $E$ and $\cR$ is weakly confluent 
modulo $E$, all terms have a unique normal form modulo $E$ in $\cR$.

Let $\cR$ be a term rewriting system and 
let $E$ be a set of equations between terms.
A reduction ordering~$>$ for $\cR$ \emph{is $E$-compatible} if $t > s$
implies $t' > s'$ for all terms $t$, $t'$, $s$, and $s'$ for which 
$t = t'$ and $s = s'$ are derivable from $E$.
$\cR$ is strongly normalizing modulo $E$ if and only if there exists an 
$E$-compatible reduction ordering $>$ for $\cR$ such that $t > s$ for 
each rewrite rule $t \osred s$ of $\cR$.

\end{document}